\algnewcommand{\IIf}[1]{\State\algorithmicif\ #1\ \algorithmicthen}
\algnewcommand{\EndIIf}{\unskip\ \algorithmicend\ \algorithmicif}
\newcommand{\cA}{\mathcal{A}}
\newcommand{\cB}{\mathcal{B}}
\newcommand{\cE}{\mathcal{E}}
\newcommand{\cF}{\mathcal{F}}
\newcommand{\cH}{\mathcal{H}}
\newcommand{\cI}{\mathcal{I}}
\newcommand{\cR}{\mathcal{R}}
\newcommand{\cS}{\mathcal{S}}
\newcommand{\cX}{\mathcal{X}}
\newcommand{\cY}{\mathcal{Y}}
\newcommand{\bbE}{\mathbb{E}}
\newcommand{\I}{\mathbb{I}}
\newcommand{\N}{\mathbb{N}}
\newcommand{\Pb}{\mathbb{P}}
\newcommand{\Q}{\mathbb{Q}}
\newcommand{\R}{\mathbb{R}}
\newcommand{\e}{\varepsilon}
\newcommand{\eps}{\varepsilon}
\newcommand{\fhi}{\varphi}
\newcommand{\rmd}{\mathrm{d}}
\newcommand{\dif}{\,\rmd}
\newcommand{\diff}{\rmd}
\newcommand{\lrb}[1]{\left(#1\right)}
\newcommand{\brb}[1]{\bigl(#1\bigr)}
\newcommand{\Brb}[1]{\Bigl(#1\Bigr)}
\newcommand{\lsb}[1]{\left[#1\right]}
\newcommand{\bsb}[1]{\bigl[#1\bigr]}
\newcommand{\Bsb}[1]{\Bigl[#1\Bigr]}
\newcommand{\lcb}[1]{\left\{#1\right\}}
\newcommand{\bcb}[1]{\bigl\{#1\bigr\}}
\newcommand{\lce}[1]{\left\lceil#1\right\rceil}
\newcommand{\bce}[1]{\bigl\lceil#1\bigr\rceil}
\newcommand{\labs}[1]{\left\lvert#1\right\rvert}
\newcommand{\babs}[1]{\bigl\lvert#1\bigr\rvert}
\newcommand{\lno}[1]{\left\lVert#1\right\rVert}
\newcommand{\bno}[1]{\bigl\lVert#1\bigr\rVert}
\newcommand{\E}[1]{\mathbb{E}\left[#1\right]}
\newcommand{\Eo}[1]{\mathbb{E}^0\left[#1\right]}
\newcommand{\Ei}[1]{\mathbb{E}^i\left[#1\right]}
\renewcommand{\Pi}[1]{\mathbb{P}^i\left(#1\right)}
\newcommand{\s}{\subset}
\newcommand{\m}{\setminus}
\newcommand{\iop}{\infty}
\newcommand{\ind}[1]{\mathbb{I}{\left\{ #1 \right\}}} 
\DeclareMathOperator{\util}{Util}
\newcommand{\fracc}[2]{#1/#2}
\newcommand{\domain}{\brb{ [0,1] \times \{\star\} } \cup \brb{ \{\star\} \times [0,1] }}
\newcommand{\kl}{\mathrm{KL}}
\newcommand{\tv}{\mathrm{TV}}
\newcommand{\leb}{\mathbb{L}}
\newcommand{\cprob}{\frac{2}{3}}
\newcommand{\cspike}{\frac{1}{144}}
\newcommand{\expthreefpa}{Exp3.FPA}
\newcommand{\expthreefpawrapper}{W.T.FPA}
\newcommand{\expthreefpawrapperLong}{Wrapper for Transparent First-Price Auctions}
\newcommand{\ghat}{\widehat{g}}
\newcommand{\xbar}{\bar{x}}
\newcommand{\jstar}{j^{\star}}
\newcommand{\collect}{\textsc{Collect Bids}\xspace}
\newcommand{\semitransparent}{\textsc{Collecting Bandit}}
\newcommand{\st}{\textsc{CoBa}}
\newcommand{\banditsmooth}{\textsc{Discretized Bandit}}
\newcommand{\cAt}{\widetilde{\cA}}
\newcommand{\tI}{\widetilde{I}}
\renewcommand{\tilde}{\widetilde}
\renewcommand{\hat}{\widehat}
\newcommand{\nsamples}{500}
\newtheorem{theorem}{Theorem}
\newtheorem{proposition}{Proposition}
\newtheorem{lemma}{Lemma}
\newtheorem{claim}{Claim}
\newtheorem{definition}{Definition}
\newtheorem{notation}{Notation}
\title{The Role of Transparency in Repeated First-Price Auctions \\ with Unknown Valuations\thanks{This is the full version of \citet{Cesa-BianchiCRFL24}
}}
\date{}
\author[1,4]{Nicol\`o Cesa-Bianchi}
\author[2]{Tommaso Cesari}
\author[1,4]{Roberto Colomboni}
\author[3]{\\ Federico Fusco}
\author[3]{Stefano Leonardi}
\affil[1]{Universit\`a degli Studi di Milano, Milano, Italy}
\affil[2]{University of Ottawa, Ottawa, Canada}
\affil[3]{Sapienza Universit\`a di Roma, Roma, Italy}
\affil[4]{Politecnico di Milano, Milano, Italy}
\begin{document}

\maketitle
\thispagestyle{empty}

\begin{abstract}
We study the problem of regret minimization for a single bidder in a sequence of first-price auctions where the bidder discovers the item's value only if the auction is won. Our main contribution is a complete characterization, up to logarithmic factors, of the minimax regret in terms of the auction's \emph{transparency}, which controls the amount of information on competing bids disclosed by the auctioneer at the end of each auction. Our results hold under different assumptions (stochastic, adversarial, and their smoothed variants) on the environment generating the bidder's valuations and competing bids. These minimax rates reveal how the interplay between transparency and the nature of the environment affects how fast one can learn to bid optimally in first-price auctions.
\end{abstract}

\clearpage
\tableofcontents

%The title page has no number and only contains the title and abstract
\clearpage
\pagenumbering{arabic}

\section{Introduction}
\label{s:intro}

    The online advertising market has recently transitioned from second to first-price auctions. A remarkable example is Google AdSense's move at the end of 2021 \citep{Wong21}, following the switch made by Google AdManager and AdMob. Earlier examples include OpenX, AppNexus, Index Exchange, and Rubicon  \citep{Sluis17}.
    To increase transparency in first-price auctions, some platforms (like AdManager) have a single bidding session for each available impression (unified bidding) and require all partners to share and receive bid data. After the first-price auction closes, bidders receive the minimum bid price that would have won them the impression \citep{Bigler19}. In practice, advertisers face two main sources of uncertainty in the bidding phase: they ignore the value of the competing bids and, crucially, ignore the actual value of the impression they are bidding on. Indeed, clicks and conversion rates---which are only measured \emph{after} the auction is won and the ad is displayed---can vary wildly over time or highly correlate with competing bids. We remark that ignoring the value of the impression strongly affects the bidder's utility: it may lead to overbidding for an impression of low value or, conversely, underbidding and losing a valuable one. 
    To cope with this uncertainty, advertisers rely on auto-bidders that use the feedback provided in the auctions to learn good bidding strategies. We study the learning problem faced by a single bidder within the framework of regret minimization according to the following protocol:
    \begin{algorithm}[h!]
    \begin{algorithmic}[h!]
    \For{$t=1,2,\ldots, T$}
        \State Valuation $V_t$ and competing bid $M_t$ are privately generated\;
        \State The learner posts a bid $B_t$ and receives utility $\util_t(B_t)$: 
        \[
        \util_t(B_t)=(V_t-B_t) \I \{ B_t \ge M_t \}
        \]
        \State The learner observes some feedback $Z_t$\;
    \EndFor
     \caption*{\textbf{Online Bidding Protocol}}
     \label{a:learning-model}
    \end{algorithmic}
    \end{algorithm}
    
    The bidder has no initial information on the environment and seeks to learn the relevant features of the problem on the fly. The performance of a learning strategy for the bidder---also referred to as the learner---is measured in terms of the difference in total utility with respect to the best fixed bid. This difference is called \emph{regret}, and the main goal is to design strategies with asymptotically vanishing time-averaged regret with respect to the best fixed-bid strategy or, equivalently, regret sublinear in the time horizon.
    
    In this work, we are specifically interested in understanding how the ``transparency'' of the auctions---i.e., the amount of information on competing bids disclosed by the auctioneer \emph{after} the auction takes place---affects the learning process. There is a clear tension regarding transparency: on the one hand, bidders want to receive as much information as possible about the environment to learn the competitor's bidding strategies while revealing as little as possible about their (private) bids. On the other hand, the platform may not want to publicly reveal its revenue (i.e., the winning bid). 
    Our investigation addresses both sides of the ``transparency dilemma''. Our algorithmic results provide bidders with a toolbox of learning strategies to (optimally) exploit the various degrees of transparency, while the tightness of our results fully characterizes the impact of transparency on learnability. This complete picture allows platforms to make an informed decision in choosing their level of transparency, as it is in their interest to create a thriving environment for advertisers.
    
    To model the level of transparency, we distinguish four natural types of feedback $Z_t$, specifying the conditions under which the highest competing bid $M_t$ and the bidder's valuation $V_t$ are revealed to the bidder after each round $t$.
    In the transparent feedback setting, $M_t$ is always observed after the auction is concluded, while $V_t$ is only known if the auction is won, i.e., when $B_t \ge M_t$. In the semi-transparent setting, $M_t$ is only observed when the auction is lost. In other words, in the semi-transparent setting, the platform publicly reveals only the winning bid, whereas in the transparent setting, the platform reveals all bids. 
    We also consider two extreme settings that provide two natural learning benchmarks: full feedback ($M_t$ and $V_t$ are always observed irrespective of the auction's outcome) and bandit feedback ($M_t$ is never observed while $V_t$ is only observed by the winning bidder).
    Note that the learner can compute the value of the utility $\util_t(B_t)$ at time $t$ with any type of feedback, including bandit feedback. In this paper, we characterize the learner's minimax regret not only with respect to the degree of transparency of the auction but also with respect to the nature of the process generating the sequence of pairs $(V_t,M_t)$. In particular, we consider four types of environments: stochastic i.i.d., adversarial, and their smooth
    versions (see \Cref{sec:related} for a discussion about smoothness, and \Cref{s:setting} for the formal definition). 
    
\subsection{Overview of Our Results}
We report here an overview of our results (see also Table~\ref{table:our_results}). For simplicity, we often hide the logarithmic factors with the $\tilde O$ notation.

{
    \renewcommand{\arraystretch}{1.2}
    \begin{table}[t!]
    \centering
    \begin{tabular}{c|cl|cc|}
    \cline{2-5}
    & \multicolumn{2}{c|}{Stochastic i.i.d.} & \multicolumn{2}{c|}{Adversarial} \\ \cline{2-5} 
    & \multicolumn{1}{c|}{Smooth}     & \hfill General \hfill  & \multicolumn{1}{c|}{Smooth} & General     \\ \hline
    \multicolumn{1}{|c|}{Full Feedback}  
        &  \multicolumn{1}{l|}{\cellcolor[HTML]{b0ff98}  Thm.\ref{thm:lower-iid-smooth-full}: $\Omega(\sqrt{T})$} 
        & \cellcolor[HTML]{b0ff98} %$O(\sqrt{T})$  
        & \multicolumn{1}{l|}{ \cellcolor[HTML]{b0ff98} %$\tilde O(\sqrt{T})$
        } 
        & \cellcolor[HTML]{ffa7a4}  Thm.\ref{thm:lower-adv-general-full}: $\Omega(T)$
        \\ \hline
    \multicolumn{1}{|c|}{Transparent}      
        & \multicolumn{1}{l|}{\cellcolor[HTML]{b0ff98} %$O(\sqrt{T})$
        }     
        & \cellcolor[HTML]{b0ff98} 
 Thm.\ref{thm:upper-iid-general-transparent}: $O(\sqrt{T})$  
        & \multicolumn{1}{l|}{\cellcolor[HTML]{b0ff98}  Thm.\ref{thm:upper-adv-smooth-transparent}: $\tilde O(\sqrt{T})$} 
        & \cellcolor[HTML]{ffa7a4} %$\Omega(T)$ 
        \\ \hline
    \multicolumn{1}{|c|}{Semi-Transparent} 
        & \multicolumn{1}{l|}{\cellcolor[HTML]{fff076}\hspace{-2.5pt}  Thm.\ref{thm:lower-iid-smooth-semi-transparent}: $\Omega\big(T^{\nicefrac 23}\big)$} 
        & \cellcolor[HTML]{fff076}  Thm.\ref{thm:upper-iid-general-semi-transparent}: $\tilde  O\big(T^{\nicefrac 23}\big)$
        & \multicolumn{1}{l|}{\cellcolor[HTML]{fff076} %$O\big(T^{\nicefrac 23}\big)$
        } 
        & \cellcolor[HTML]{ffa7a4} %$\Omega(T)$
        \\ \hline
    \multicolumn{1}{|c|}{Bandit Feedback}  
        & \multicolumn{1}{l|}{\cellcolor[HTML]{fff076} %$\Omega\big(T^{\nicefrac 23}\big)$
        } 
        & \cellcolor[HTML]{ffa7a4}  Thm.\ref{thm:lower-iid-general-bandit}: $\Omega(T)$  & \multicolumn{1}{c|}{\cellcolor[HTML]{fff076}  Thm.\ref{thm:upper-adv-smooth-bandit}: $O\big(T^{\nicefrac 23}\big)$} 
        & \cellcolor[HTML]{ffa7a4} %$\Omega(T)$ 
        \\ \hline
    \end{tabular}
    \caption{Summary of our results. Rows correspond to feedback models while columns to environments. %Up to $\ln T$ factors,
    The minimax regret of every problem falls in one of the following three regimes: $\tilde{\Theta}(\sqrt T)$ (green), $\tilde{\Theta}(T^{\nicefrac 23})$ (yellow) and $\tilde{\Theta}(T)$ (red). }
    \label{table:our_results}
    \end{table}
    }

\paragraph{Stochastic i.i.d.~settings} 
    \begin{itemize}[topsep=2pt,itemsep=0pt,leftmargin=2.5ex,parsep=1pt]
        \item In both the full and transparent feedback models, the minimax regret is of order $\sqrt T$ (\Cref{thm:lower-iid-smooth-full,thm:upper-iid-general-transparent}), and adding the smoothness requirement leaves this rate unchanged.
        \item In the semi-transparent feedback model, the minimax regret is of order $T^{\nicefrac 23}$ (\Cref{thm:lower-iid-smooth-semi-transparent,thm:upper-iid-general-semi-transparent}). Also in this case, adding the smoothness requirement leaves this rate unchanged.
        \item In the bandit feedback model, smoothness is crucial for sublinear regret (\Cref{thm:lower-iid-general-bandit}). In particular, smoothness implies a minimax regret of $T^{\nicefrac 23}$ (this is obtained by combining the upper bound in \Cref{thm:upper-adv-smooth-bandit} and the lower bound in \Cref{thm:lower-iid-smooth-semi-transparent}). 
    \end{itemize}
    \paragraph{Adversarial settings} 
    \begin{itemize}[topsep=2pt,itemsep=0pt,leftmargin=2.5ex,parsep=1pt]
        \item Without smoothness, sublinear regret cannot be achieved, even with full feedback (\Cref{thm:lower-adv-general-full}).
        \item In both the full and transparent feedback model, the minimax regret in a smooth environment is of order $\sqrt T$ (combining the lower bound in \Cref{thm:lower-iid-smooth-full} and the upper bound in \Cref{thm:upper-adv-smooth-transparent}).
        \item Both with semi-transparent and bandit feedback, the minimax regret in a smooth environment is of order $T^{\nicefrac 23}$ (combining the lower bound in \Cref{thm:lower-iid-smooth-semi-transparent} and the upper bound in \Cref{thm:upper-adv-smooth-bandit}). 
    \end{itemize}

    Interestingly, the minimax regret rates for first-price auctions mirror the allowed regret regimes in finite partial monitoring games \citep{BartokFPRS14} and online learning with feedback graphs \citep{AlonCGMMS17}. This is somehow surprising, as it has been shown in  \citet{lattimore2022minimax} that games with continuous outcome/action spaces allow for a much larger set of regret rates---see also \citet{cesa23COLT,
    cesa2024JMLR, 
bolic2023online,BernasconiCCF24}. 
    
    Table \ref{table:our_results} reveals some interesting properties of the learnability of the problem: full feedback and transparent feedback are essentially equivalent, while semi-transparent feedback and bandit feedback differ only in the stochastic i.i.d.\ setting. Qualitatively, this tells the platform that disclosing all bids (instead of only the winning one) drastically improves the learnability of the problem (green vs.\ yellow entries in Table~\ref{table:our_results}). Besides, revealing at least the winning bid avoids some pathological behavior (yellow entries vs.\ red entry for the general i.i.d.\ environment with bandit feedback). Moreover, while smoothness is key for learning in the adversarial setting, in the stochastic case smoothness is only relevant for bandit feedback.
    \begin{figure}[ht]
        \centering
        % V_t = 0.9
        % M_t = 0.5
        \begin{tikzpicture}[scale=3]
            \draw[->] (0,-0.1) -- (0, 0.5);
            \draw[->] (-0.1, 0) -- (1.1, 0);
            \draw 
                % (-0.02, {0.9-0.5}) -- (0.02, {0.9-0.5})
                % (-0.02, {0.9-0.5}) node[left] {\footnotesize $V_t - M_t$}
                (0.5, -0.02) -- (0.5, 0.02)
                (0.5, 0) node[below] {\footnotesize $M_t$}
                (0.9, -0.02) -- (0.9, 0.02)
                (0.9, 0) node[below] {\footnotesize $V_t$}
            ;
            \draw[thick, blue] 
                (0,0) -- (0.5,0)
                (0.5,{0.9-0.5}) -- (1,{0.9-1})
            ;
            \draw[thick, blue, fill=white] (0.5,0) circle (0.3pt);
            \draw[thick, blue, fill=blue] (0.5,{0.9-0.5}) circle (0.3pt);
            \draw 
                (1.1,0) node[below] {\footnotesize bid}
                (0,0.5) node[left] {\footnotesize $\util_t$}
            ;
        % V_t = 0.5
        % M_t = 0.65
                    \draw[->] (0,{0.5 + 0.9 - 2 - 0.3}) -- (0, {0.1- 0.3});
            \draw[->] (-0.1, - 0.3) -- (1.1, - 0.3);
            \draw 
                % (-0.02, {0.5-0.65}) -- (0.02, {0.5-0.65})
                % (-0.02, {0.5-0.65}) node[left] {\footnotesize $V_t - M_t$}
                (0.65, {-0.02- 0.3}) -- (0.65, {0.02- 0.3})
                (0.65, - 0.3) node[below] {\footnotesize $M_t$}
                (0.5, {-0.02- 0.3}) -- (0.5, {0.02- 0.3})
                (0.5, - 0.3) node[below] {\footnotesize $V_t$}
            ;
            \draw[thick, blue] 
                (0,- 0.3) -- (0.65,- 0.3)
                (0.65,{0.5-0.65- 0.3}) -- (1,{0.5-1- 0.3})
            ;
            \draw[thick, blue, fill=white] (0.65,- 0.3) circle (0.3pt);
            \draw[thick, blue, fill=blue] (0.65,{0.5-0.65- 0.3}) circle (0.3pt);
            \draw 
                (1.1,- 0.3) node[below] {\footnotesize bid}
                (0,{0.1- 0.3}) node[left] {\footnotesize $\util_t$}
            ;
        \end{tikzpicture}
        \hspace{30pt}
        \begin{tikzpicture}[scale=3]
            \draw[->] (0,{0.5 + 0.9 - 2 - 0.15}) -- (0, {0.5 + 0.15});
            \draw[->] (-0.1, 0) -- (1.1, 0);
                % \draw 
                %     (-0.02, {0.5-0.7}) -- (0.02, {0.5-0.7})
                %     (-0.02, {0.5-0.7}) node[left] {\footnotesize $V_t - M_t$}
                %     (0.7, -0.02) -- (0.7, 0.02)
                %     (0.7, 0) node[below] {\footnotesize $M_t$}
                %     (0.5, -0.02) -- (0.5, 0.02)
                %     (0.5, 0) node[below] {\footnotesize $V_t$}
                % ;
            \draw[thick, blue] 
                (0,0) -- (0.5,0)
                (0.5,{0.9-0.5}) -- (0.65,{0.9 - 0.65})
                (0.65,{0.9-0.65+0.5-0.65}) -- (1,{0.5 + 0.9 - 2})
            ;
            \draw[thick, blue, fill=white] (0.5,0) circle (0.3pt);
            \draw[thick, blue, fill=blue] (0.5,{0.9-0.5}) circle (0.3pt);
            \draw[thick, blue, fill=white] (0.65,{0.9 - 0.65}) circle (0.3pt);
            \draw[thick, blue, fill=blue] (0.65,{0.9-0.65+0.5-0.65}) circle (0.3pt);
            \draw 
                (1.1,0) node[below] {\footnotesize bid}
                (0,{0.5 + 0.15}) node[left] {\footnotesize $\util$}
            ;
            \draw[thick, red] (0.5,-0.15) -- (0.65,-0.15);
            % \draw[thick, red] (0.5, {-0.15-0.02}) -- (0.5, {-0.15+0.02});
            % \draw[thick, red] (0.65, {-0.15-0.02}) -- (0.65, {-0.15+0.02});
            \draw (0.575,0) node[below] {\footnotesize \textcolor{red}{$\Delta$}};
            \draw (0.52,0) node[above] {\footnotesize \textcolor{red}{$b^\star$}};
            \draw[thick, red, fill=white] (0.65,-0.15) circle (0.3pt);
            \draw[thick, red, fill=red] (0.5,-0.15) circle (0.3pt);
        \end{tikzpicture}
        \caption{The utility function is generally neither Lipschitz nor continuous. If $M_t \le V_t$ (top left plot), then $\util_t$ is upper-semi continuous and one-sided Lipschitz; conversely, if $M_t \ge V_t$ (bottom left plot), then $\util_t$ is still one-sided Lipschitz---from the other side---and lower-semi continuous. Summing up the two types of utilities results in a total utility that may be neither one-sided Lipschitz nor semi-continuous (right plot, where the two utility functions of the other two plots are summed up. There, $b^\star$ is the optimal bid and $\Delta$ is the neighborhood of $b^\star$ where the total utility is ``good enough'').}
        \label{fig:non-lip-utilities}
    \end{figure}
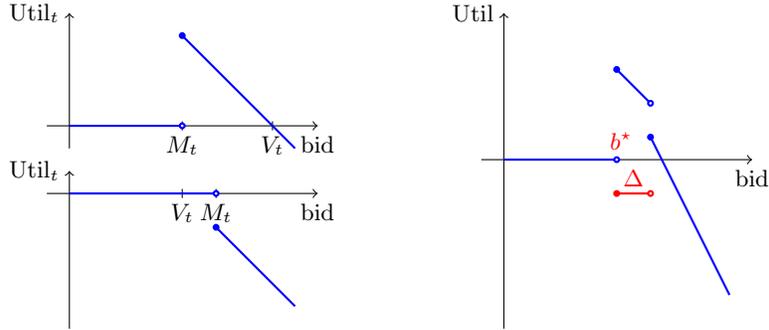

\subsection{Technical Challenges} 

    \paragraph{The utility function.} The utilities $\util_t(b)=(V_t-b)\ind{M_t \le b}$ are defined over a continuous decision space $[0,1]$ and are neither Lipschitz nor continuous, see \Cref{fig:non-lip-utilities}. Actually, even weaker properties, i.e., that the expected cumulative reward $b\mapsto \sum_{t\in[T]}\bbE\bsb{\util_t(b)}$ is one-sided Lipschitz or semi-continuous, do not hold in general. We address this problem by developing techniques designed to control the approximation error incurred when discretizing the bidding space. This is a non-trivial problem without regularity assumption, as the neighborhood of the optimal bid where the total utility is ``good enough'' can be arbitrarily small in general (see the red interval $\Delta$ in the rightmost plot of \Cref{fig:non-lip-utilities}).  
    In the stochastic i.i.d.\ setting, the approximation error is controlled by building a sample-based \emph{non-uniform} grid of candidate bids, which can be of independent interest. This allows us to estimate the distribution of the competing bids uniformly over the subintervals of $[0,1]$. In the adversarial setting, instead, we use the smoothness assumption to guarantee that the expected utility is Lipschitz. In this case, the approximation error is controlled using a uniform grid with an appropriate grid-size (\Cref{lem:lip}).

    \paragraph{The feedback models.} Our feedback models interpolate between bandit (only the bidder's utility is observed) and full feedback ($V_t$ and $M_t$ are always observed).
    In the stochastic i.i.d.\ case, the different levels of transparency are crucial to the process of building the non-uniform grids used to control the discretization error.
    In the adversarial case, when there are only $K$ allowed bids, the optimal rates are of order $\sqrt{T \ln K}$ and $\sqrt{KT}$ under full and bandit feedback, respectively. 
    While the semi-transparent feedback is not enough to improve on the bandit rate, the transparent one can be exploited via a more sophisticated approach. To this end, we design an algorithm, \expthreefpa{}, enjoying the full feedback regret rate of order $\sqrt{T \ln K}$ while only relying on the weaker transparent feedback.
    \paragraph{Lower bounds.} The linear lower bounds (\Cref{thm:lower-adv-general-full,thm:lower-iid-general-bandit}) exploit a ``needle in a haystack'' phenomenon, where there is a hidden optimal bid $b^\star$ in the $[0,1]$ interval and the learner has no way of finding $b^\star$ using the feedback it has access to. This is indeed the case in the non-smooth adversarial full-feedback setting and in the non-smooth i.i.d.\ bandit setting. To prove the remaining lower bounds, we design careful embeddings of known hard instances into our framework. In particular, in \Cref{thm:lower-iid-smooth-full} we embed the hard instance for prediction with two experts and in \Cref{thm:lower-iid-smooth-semi-transparent} the hard instance for $K= \Theta(T^{\nicefrac 13})$ bandits. 

\subsection{Related Work}
\label{sec:related}

\paragraph{Transparency in first-price auctions.} The role of transparency in repeated first-price auctions has been investigated by \citet{BergemannH18}, but mostly from a game-theoretic viewpoint. In particular, they study the impact of the feedback policy on the bidders' strategy and show how disclosing the bids at the end of each round affects the equilibria of a bidding game with infinite horizon. In contrast, we want to characterize the impact of different amounts of feedback (or degrees of transparency) on the learner's regret, which is measured against the optimal fixed bid in hindsight. 
%
% The role of transparency in first-price auctions, where the winning bid is disclosed at the end of each auction, has been studied in \citet{BergemannH18} with a focus on how transparency affects the equilibria of the repeated bidding game.

\paragraph{Auctions with unknown valuations.} Although the problem of regret minimization in first-price auctions has been studied before, only a few papers consider the natural setting of unknown valuations.
\citet{FengPS18} introduce a general framework for the study of regret in auctions where a bidder's valuation is only observed when the auction is won. In the special case of first-price auctions, their setting is equivalent to our transparent feedback when the sequence of pairs $(V_t,M_t)$ is adversarially generated. Following a parameterization introduced by \citet{WeedPR16}, \citet{FengPS18} provide a $O\big(\sqrt{T\ln\max\{\Delta_0^{-1},T\}}\big)$ regret bound, where $\Delta_0 = \min_{t < t'}|M_t-M_{t'}|$ is controlled by the environment. In the stochastic i.i.d.\ case, their results translate into \emph{distribution-dependent} guarantees that do not translate into a worst-case sublinear bound (we obtain a $\sqrt{T}$ rate). In the adversarial case, their guarantees are still linear in the worst-case (we obtain $\sqrt{T}$  bounds by leveraging the smoothness assumption).
\citet{AchddouCG21} consider a stochastic i.i.d.\ setting with the additional assumption that $V_t$ and $M_t$ are independent. Their main result is a bidding algorithm with \emph{distribution-dependent} regret rates (of order $T^{\nicefrac 13 + \varepsilon}$ or $\sqrt{T}$, depending on the assumptions on the underlying distribution) in the transparent setting. Again, this result is not comparable to ours because of the independence assumption and the distribution-dependent rates (which do not allow to recover our minimax rates). Other works consider regret minimization in repeated second-price auctions with unknown valuations. \citet{dikkala2013can} investigate a repeated bidding setting, but do not consider regret minimization. \citet{WeedPR16} derive regret bounds for the case when $M_t$ are adversarially generated, while $V_t$ are stochastically or adversarially generated and the feedback is transparent.

\paragraph{First-price auctions with known valuations.}
Considerably more works study first price auctions when the valuation $V_t$ is known to the bidder at the beginning of each round $t$. Note that these results are not directly comparable to ours.
\citet{balseiro2019contextual} look at the case when the $V_t$ are adversarial and the $M_t$ are either stochastic i.i.d.\ or adversarial. In the bandit feedback case (when $M_t$ is never observed), they show that the minimax regret is $\widetilde{\Theta}\big(T^{\nicefrac 23}\big)$ in the stochastic case and $\widetilde{\Theta}\big(T^{\nicefrac 34}\big)$ in the adversarial case.
\citet{han2020optimal} prove a $\widetilde{O}\big(\sqrt{T}\big)$ regret bound in the semi-transparent setting ($M_t$ observed only when the auction is lost) with adversarial valuations and stochastic bids.
\citet{han2020learning} focus on the adversarial case, when $V_t$ and $M_t$ are both generated adversarially. They prove a $\widetilde{O}\big(\sqrt{T}\big)$ regret bound in the full feedback setting ($M_t$ always observed) when the regret is defined with respect to all Lipschitz shading policies. %---a much larger class than the set all fixed bids which we consider here.
This setup is extended in \citet{zhang2022leveraging} where the authors consider the case in which the bidder is provided access to hints before each auction. \citet{zhang2021meow} also studied the full information feedback setting and design a space-efficient variant of the algorithm proposed by \citet{han2020learning}.
\citet{badanidiyuru2023learning} introduce a contextual model in which $V_t$ is adversarial and $M_t = \langle \theta,x_t\rangle + \varepsilon_t$ where $x_t\in\R^d$ is contextual information available at the beginning of each round $t$, $\theta\in\R^d$ is an unknown parameter, and $\varepsilon_t$ is drawn from an unknown log-concave distribution. They study regret in bandit and full feedback settings.

\paragraph{Dynamics in first-price auctions.}
A different thread of research is concerned with the convergence property of the regret minimization dynamics in first-price auctions (or, more specifically, with the learning dynamics of mean-based regret minimization algorithms).
\citet{feldman2016correlated} show that with continuous bid levels, coarse-correlated equilibria exist whose revenue is below the second price.
\citet{feng2021convergence} prove that regret minimizing bidders converge to a Bayesian Nash equilibrium in a first-price auctions when bidder values are drawn i.i.d.\ from a uniform distribution on $[0,1]$.  \citet{kolumbus2022auctions} show that if two bidders with finitely many bid values converge, then the equilibrium revenue of the bidder with the highest valuation is the second price.
\citet{deng2022nash} characterize the equilibria of the learning dynamics depending on the number of bidders with the highest valuation. Their characterization is for both time-average and last-iterate convergence.

\paragraph{Smoothed adversary.} Smoothed analysis of algorithms, originally introduced by \citet{spielman2004smoothed} and later formalized for online learning by \citet{rakhlin2011online,haghtalab2020smoothed}, is a known approach to the analysis of algorithms in which the instances at every round are generated from a distribution that is not too concentrated. Recent works on the smoothed analysis of online learning algorithms include
\citet{kannan2018smoothed,haghtalab2020smoothed,haghtalaboracle,block2022smoothed,DurvasulaHZ23,cesa23COLT,cesa2024JMLR,cesa21EC,cesa23MOR,bolic2023online}.

\paragraph{Online learning in metric spaces.}
Our problem is related to online learning in metric spaces \citep{KleinbergSU19}, where the action space is endowed with a metric and the losses are induced by a sequence of Lipschitz functions defined onto it. Tight regret bounds are known, parameterized by some notion of dimension of the metric space, in both the full and the bandit models. The simple structure of our action space ($[0,1]$ with the Euclidean distance) allows us to obtain tight bounds by either using a uniform grid (\Cref{thm:upper-adv-smooth-transparent,thm:upper-adv-smooth-bandit}) or sample-based grids (\Cref{thm:upper-iid-general-semi-transparent,thm:upper-iid-general-transparent}), without resorting to the more elaborate techniques that characterize this line of research, e.g., zooming (which is typically used in the bandit feedback model to account for the lack of feedback).
Also related to our model is the study of piecewise and regular Lipschitz functions \citep{BalcanDV18,SharmaBD20,DuettingGSW23}. In particular, Lemma 1  and Theorem 3 in \citet{BalcanDV18} imply our \Cref{thm:upper-adv-smooth-bandit} in the special case of independent processes.\footnote{Combining the second part of their Lemma 1 with their Theorem 3 to lift independence gives void guarantees in the general case (note that there is a typo in the statement of their Lemma 1: as it can be seen in the proof, the correct result is $k = P \cdot \mathcal O\big( M\cdot \kappa \cdot w + \sqrt{M\log(\nicefrac P\zeta)}\big)$) and, without assuming independence, $P=T$ in our setting).}

\section{The Learning Model}
\label{s:setting}

    We introduce formally the repeated bidding problem in first-price auctions. 
    At each time step $t$, a new item arrives for sale, for which the learner holds some unknown valuation $V_t\in [0,1]$.
    The learner bids some $B_t \in [0,1]$ and, at the same time, a set of competitors bid for the same object. We denote their highest competing bid by $M_t\in [0,1]$. 
    The learner gets the item at cost $B_t$ if it wins the auction (i.e., if $B_t \ge M_t$), and does not get it otherwise. Then, the learner observes some feedback $Z_t$ and gains utility $\util_t(B_t)$, where, for all $b\in[0,1]$,
    $
        \util_t(b) = ( V_t - b ) \I \{ b \ge M_t \}
        % \textrm{Util}_t(b_t) = (V_t-B_t) \I \{ B_t \ge M_t \}.
    $ (see the Protocol in \Cref{s:intro}).
    Crucially, at time $t$ the learner does not know its valuation $V_t$ for the item before bidding, implying that its bid $B_t$ only depends on its past observations $Z_1,\dots,Z_{t-1}$ (and, possibly, some internal randomization).
    The goal of the learner is to design a learning algorithm $\cA$ that maximizes its utility. More precisely, we measure the performance of an algorithm $\cA$ by its \emph{regret} $R_T(\cA)$ against the worst environment $\cS$ in a certain class $\Xi$: $R_T(\cA) = \sup_{\cS \in \Xi}  R_T (\cA,\cS)$, where
    \[
        R_T (\cA,\cS) = 
        \sup_{b \in [0,1]} \E{\sum_{t=1}^T \util_t(b) - \sum_{t=1}^T \util_t(B_t)}.
    \]
    The expectation in the previous display is taken with respect to the randomness of the algorithm $\cA$ which selects $B_t$, and (possibly) the randomness of the environment $\cS$ generating the $(V_t,M_t)$ pairs.
    \paragraph{The environments.} In this paper we consider both stochastic i.i.d.\ and adversarial environments.
    \begin{itemize}[topsep=2pt,itemsep=0pt,leftmargin=2.5ex,parsep=1pt]
        \item Stochastic i.i.d.: The pairs $(V_1,M_1),(V_2,M_2),\dots$ are a stochastic i.i.d.\ process.
        % \item The stochastic (i.i.d.) setting, where the environment selects a fixed but unknown distribution from which the $(v_t,m_t)$ pairs are drawn i.i.d.
        \item Adversarial: The sequence $(V_1,M_1),(V_2,M_2),\dots$ is generated by an oblivious adversary.%\footnote{Our results also apply to non-oblivious adversaries. We decided to phrase them using the weaker oblivious adversary to keep the notion of regret as simple as possible.}
    \end{itemize}

    Following previous works in online learning (see \Cref{sec:related}), we also study versions of the above environments that are constrained to generate the sequence of $(V_t,M_t)$ values using distributions that are ``not too concentrated''. To this end, we introduce the notion of smooth distributions. 

    \begin{definition}[\citet{HaghtalabRS21}]
        Let $\cX$ be a domain that supports a uniform distribution $\nu$. A measure $\mu$ on $\cX$ is said to be $\sigma$-smooth if for all measurable subsets $A \subseteq \cX$, we have $\mu(A) \le \frac{\nu(A)}{\sigma}$.
    \end{definition}

    We thus also consider the following two types of environments.
    \begin{itemize}[topsep=2pt,itemsep=0pt,leftmargin=2.5ex,parsep=1pt]
        \item The $\sigma$-smooth stochastic i.i.d.\ environment, which is a stochastic i.i.d.\ environment where the common distribution of all pairs $(V_1,M_1)$, $(V_2,M_2),\dots$ is $\sigma$-smooth.
        \item The $\sigma$-smooth adversarial setting, where the pairs $(V_1,M_1)$, $\dots$ form a stochastic process such that, for each $t$, the distribution of the pair $(V_t,M_t)$ is $\sigma$-smooth.
        % \item The (non-oblivious) $\sigma$-smooth adversarial setting, where the environment selects at each time step a $\sigma$-smooth distribution for the $(v_t,m_t)$ pair in a non-oblivious way.
    \end{itemize}

    % We state our lower bounds for the four types of environments we described above. Our matching upper bounds, however, apply to even more general models where the values $V_1,V_2,\dots$ can be generated in any way, and the additional regularity (e.g., being i.i.d.\ or being smooth) is only imposed on the stochastic process $M_1,M_2,\dots$ of the highest competing bids.
    % \smallskip

    \paragraph{The feedback.} After describing the environments that we study, we now specify the types of feedback the learner receives at the end of each round, from the richest to the least informative. 
    \begin{itemize}[topsep=2pt,itemsep=0pt,leftmargin=2.5ex,parsep=1pt]
        \item{Full feedback}. The learner observes its valuation and the highest competing bid: $Z_t=(V_t,M_t)$. 
        \item Transparent feedback. The learner always observes $M_t$, but $V_t$ is only revealed if it gets the item: $Z_t$ is equal to $(\star,M_t)$ if $B_t < M_t$ and to $(V_t,M_t)$ otherwise.
        \item Semi-transparent feedback\footnote{This feedback is similar to the winner-only feedback in \citet{han2020optimal}.}. 
        The learner observes $V_t$ if it gets the item and $M_t$ otherwise: $Z_t$ is equal to $(\star,M_t)$ if $B_t < M_t$ and to $(V_t,\star)$ otherwise.
        \item The bandit feedback\footnote{We call this the bandit feedback because it is equivalent to receiving $\util_t(B_t)$ (with the extra information $\star$ to distinguish between losing the item and winning it with $V_t = B_t$, which does not affect regret guarantees).}. 
        The learner observes $V_t$ if it gets the item and the symbol $\star$ otherwise: $Z_t$ is $\star$ if $B_t < M_t$ and to $V_t$ otherwise.
    \end{itemize}

    % \begin{observation}[Feedback Hierarchy]
    %     There is a clear hierarchy between the four feedback models we study. The full feedback easily allows the learner to infer all the other feedbacks. The transparent feedback allows to infer the semi-transparent feedback and the bandit one; finally, the semi-transparent feedback tells us at least as much as the bandit one. 
    % \end{observation}

    % This paper aims at characterizing the regret regimes of the repeated bidding problem in first-price auctions with unknown valuations, for each combination of environments and feedback models.
    % , characterizing the minimax regret regime $R^\star_T = \inf_{\cA}R_T(\cA)$ of the resulting learning problem.

\section{The Stochastic i.i.d.\ Setting}

    In this section, we investigate the problem of repeated bidding in first-price auctions with unknown valuations, when the pairs of valuations and highest competing bids are drawn i.i.d.\ from a fixed but unknown distribution. We start by proving in \Cref{sec:iid-bandit} that it is impossible to achieve sublinear regret under the bandit feedback model without any assumption on the distribution of the environment. 
    Then, in \Cref{sec:iid-semi}, we give matching upper and lower bounds of order $T^{\nicefrac 23}$ in the semi-transparent feedback model. 
    Notably, the lower bound holds for smooth distributions, while the upper bound works for any (possibly non-smooth) distributions. 
    Finally, in \Cref{sec:iid-semi-full} we prove
    that both the full and transparent feedback yield the same minimax regret regime of order $\sqrt{T}$, regardless of the regularity of the distribution.

    \subsection{I.I.D.\ -- Bandit Feedback}
    \label{sec:iid-bandit}

        In the bandit feedback model, at each time step, the learner observes the valuation $V_t$ (and nothing else) when the auction is won, and observes nothing when the auction is lost. 
        The crucial difference with the other (richer) types of feedback is the amount of information received about $M_t$, which, in the bandit case, is just the relative position with respect to $B_t$ (i.e., whether $M_t \le B_t$ or $B_t < M_t$). 
        This allows to hide in the interval $[0,1]$ an optimal bid $b^\star$ which the learner cannot uncover over a finite time horizon. Following this idea, a difficult environment should randomize between two scenarios: a good scenario with large value $V_t = 1$ and $M_t$ slightly smaller than $b^\star$ and a bad one with poor value $V_t = 0$ and $M_t$ slightly larger than $b^\star$. Then, to avoid suffering linear regret, the learner has to find this tiny interval around $b^\star$ (the ``needle in a haystack'').
        
        \begin{theorem}
        \label{thm:lower-iid-general-bandit}
            Consider the problem of repeated bidding in first-price auctions in a stochastic i.i.d.\ environment with bandit feedback. Then, any learning algorithm $\cA$ satisfies $
                R_T(\cA)
            \ge
                \tfrac 1{13} T $.
        \end{theorem}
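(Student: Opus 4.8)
The plan is to build, for each horizon $T$, a family of stochastic i.i.d.\ environments $\{\cS_{b^\star}\}_{b^\star}$ indexed by a hidden ``optimal bid'' $b^\star$, and to argue that no learner can localize $b^\star$ in time, which forces linear regret. Fix a tiny gap $\eta=\eta_T>0$ (think $\eta_T=30^{-T}$) and, for each $b^\star\in[\tfrac13,\tfrac23-\eta]$, let $\cS_{b^\star}$ draw at every round, independently, the \emph{good} pair $(V_t,M_t)=(1,b^\star)$ with probability $\tfrac12$ and the \emph{bad} pair $(V_t,M_t)=(0,b^\star+\eta)$ with probability $\tfrac12$. A direct inspection of the piecewise--linear map $b\mapsto\E{\util_t(b)}=\tfrac12(1-b)\I{\{b\ge b^\star\}}-\tfrac b2\I{\{b\ge b^\star+\eta\}}$ shows that it is maximized exactly at $b=b^\star$, with value $u^\star:=\tfrac12(1-b^\star)\in[\tfrac16,\tfrac13]$, and that every bid outside the ``good window'' $W(b^\star):=[b^\star,b^\star+\eta)$ has expected utility at most $u^\star-\tfrac16$ (bidding below $b^\star$ wins nothing; bidding $\ge b^\star+\eta$ also wins the bad pair and overpays). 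Hence, writing $N_{\mathrm{out}}$ for the number of rounds in which the learner bids outside $W(b^\star)$ and $N_{\mathrm{in}}:=T-N_{\mathrm{out}}$,
\[
R_T(\cA,\cS_{b^\star})\ \ge\ \tfrac16\,\E{N_{\mathrm{out}}}\ =\ \tfrac16\bigl(T-\E{N_{\mathrm{in}}}\bigr).
\]

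The crux is a ``needle in a haystack'' estimate: for every (adaptive, randomized) learner, $\E{N_{\mathrm{in}}}$, averaged over $b^\star$ drawn uniformly from $[\tfrac13,\tfrac23-\eta]$, is bounded by an absolute constant, hence is negligible compared with $T$. The reason is that the bandit feedback at a bid $b$ carries only the comparison of $b$ with $b^\star$ and $b^\star+\eta$: observing $\star$ is equivalent to $b<b^\star+\eta$, observing a value is equivalent to $b\ge b^\star$, and---the key point---when $b^\star\le b<b^\star+\eta$ a value is returned with probability only $\tfrac12$, so a single $\star$ never certifies $b^\star>b$. Consequently, as long as the learner has not yet bid inside $W(b^\star)$, the set of $b^\star$ compatible with its observations is an interval containing $W(b^\star)$ on which the posterior of $b^\star$ is (up to an $O(t\eta/\mathrm{width})$ correction) uniform, and a bid \emph{outside} $W(b^\star)$ can only shrink this interval by a bisection step; a best-possible bisection would already need $\log_2\!\bigl(1/(3\eta_T)\bigr)>T$ rounds to reach width $\eta_T$. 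Quantitatively, the log-width of the consistent interval is a sum of increments with conditional means $\ge-\log 2$ and conditional variances $O(1)$ (binary-entropy bookkeeping), so a martingale (Freedman/Bernstein-type) concentration bound shows that, at every round, the width exceeds $\sqrt{\eta_T}$ with probability $1-O(1/T)$; on that event any single bid lands in the width-$\eta_T$ window with probability $\le 2\sqrt{\eta_T}$. Summing over the $T$ rounds, and adding the negligible $O(\eta_T T)$ contribution of the rounds preceding the first observed value (where the learner's bid is independent of $b^\star$), yields $\sum_{t}\Pr[B_t\in W(b^\star)]=O(1)$.

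Putting the two pieces together gives $\E_{b^\star}\bigl[R_T(\cA,\cS_{b^\star})\bigr]\ge\tfrac16\bigl(T-O(1)\bigr)$, so by averaging there is a $b^\star$ with $R_T(\cA,\cS_{b^\star})\ge\tfrac16\bigl(T-O(1)\bigr)$; after fixing $\eta_T$ and verifying the finitely many small horizons, the constant can be taken to be $\tfrac1{13}$. The main obstacle is the needle-in-a-haystack estimate of the second paragraph: one must rule out \emph{any} clever adaptive strategy, including the possibility that the learner stumbles into $W(b^\star)$ by luck and then exploits it---this is precisely what forces the (super)exponentially small gap $\eta_T$---and a rigorous argument must control the \emph{random} width of the learner's consistent interval uniformly over all strategies and over the $\tfrac12$-coin flips that make the $\star$-feedback only partially informative inside $W(b^\star)$.
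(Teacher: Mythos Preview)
Your construction is the paper's: an i.i.d.\ environment that, at each round, draws $(1,b^\star)$ or $(0,b^\star+\eta)$ each with probability $\tfrac12$, with $b^\star$ uniform over an interval of constant length and $\eta$ super-exponentially small in $T$. Your suboptimality-gap computation is also correct (indeed, your choice $b^\star\in[\tfrac13,\tfrac23-\eta]$ gives a per-round gap of $\tfrac16$ outside the window, slightly cleaner than the paper's $\tfrac1{12}$ from $b^\star\in(\tfrac13,\tfrac12-\eta)$).

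Where you diverge is in the ``needle'' step, and here your plan is much heavier than it needs to be and not fully justified. You propose to track the width of the learner's \emph{consistent interval} for $b^\star$ and argue via Freedman-type concentration that it stays above $\sqrt{\eta_T}$ with high probability, so that each bid lands in $W(b^\star)$ with probability $O(\sqrt{\eta_T})$. This is plausible, but the consistent-interval process is not a clean object: its increments depend on the learner's strategy, the posterior over $b^\star$ is only approximately uniform (with corrections you acknowledge but never control), and your key assertion that ``log-width increments have conditional mean $\ge-\log 2$'' is really the statement that no adaptive strategy beats bisection---which you state but do not prove.

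The paper replaces all of this with a two-line counting argument. First reduce to deterministic learners via Yao. Then observe that in this environment the bandit feedback takes only three values at each round ($0$, $1$, or $\star$), so the learner's behavior is a ternary decision tree of depth $T$ and, across \emph{all} possible histories, it posts bids from a fixed set of size at most $3^T$. Taking $\eta$ of order $3^{-2T}$, the union of the $3^T$ intervals of length $\eta$ around these bids has total length $o(1)$ inside the prior interval for $b^\star$; hence with probability $1-o(1)$ over the uniform draw of $b^\star$ the window $[b^\star,b^\star+\eta]$ contains \emph{none} of the algorithm's possible bids, and on that event $N_{\mathrm{in}}=0$ deterministically. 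This dispenses with any stochastic-process analysis of the learner's knowledge and gives the constant $\tfrac1{13}$ directly.
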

        \begin{proof}
            We construct a randomized i.i.d.\ environment $\cS$, such that any deterministic algorithm $\cA$ suffers linear regret against it, and then apply Yao's minimax principle to conclude the proof. The randomized environment is simple: before starting the sequence, a uniform seed $b^\star$ is drawn uniformly at random in $(\nicefrac 13, \nicefrac 12-\e)$, where $\e$ is a small parameter we set later. Then, the i.i.d.\ sequence $(V_1,M_1), (V_2,M_2), \dots$ is drawn as follows: at each time step $t$ with probability $\nicefrac 12$ we have $(V_t,M_t) =  \left(1,b^\star\right)$, otherwise $\left(0,b^\star+\e\right)$. The best bid in hindsight, $b^\star$, yields an overall expected utility of $\tfrac T2 (1-b^\star)$, which is at least $\nicefrac T4$, as $b^\star$ belongs to $(\nicefrac 13, \nicefrac 12)$.
            
            We now upper bound the utility achievable by any deterministic algorithm $\cA$ against $\cS$. Fix any such algorithm, and consider its bids against any environment that selects the valuations $V_t$ to be either $0$ or $1$ (as the one we just constructed). At each time step, the feedback that $\cA$ receives is $0$, $1$ or $\star$ (when the item is allocated to one of the competitors), so that the history of the bids posted by $\cA$ is naturally described by a ternary decision tree of height $T$, where each level corresponds to a time step and any node to a bid. Crucially, the leaves of this tree are finite (at most $3^T$), which means that the algorithm $\cA$ only posts bids in a finite subset $N$ of $[0,1]$. Now, let $\e = \nicefrac{3^{-{2T}}}{12}$; we have that, with probability at least $1-\nicefrac{6N\e}{(1-6\e)} \ge 1 - e^{-T}$, the set $[b^\star, b^\star+\e]$ does not intersect $N$. Note: the randomness is with respect to the uniform seed $b^\star$ drawn by $\cS$, while the bound on the probability holds independently to the choice of the deterministic algorithm $\cA$.

            The total utility of $\cA$ when $[b^\star, b^\star+\e]$ does not intersect $N$ is easy to analyze: every time that $\cA$ posts bids smaller than $b^\star$, then it never wins the item (zero utility). Instead, if it posts bids larger than $b^\star+\e$, then it always gets the item (whose average value is $\nicefrac 1{2}$), paying at least $b^\star+\e \ge \nicefrac 13$. Putting these two cases together, we have proved that at each time step the expected utility earned by the learner is at most $\nicefrac 16 = \nicefrac 12 - \nicefrac 13$, when $[b^\star, b^\star+\e] \cap N = \varnothing$ (which happens with probability at least $1 - e^{-T}$). Finally, by combining the lower bound on the performance of $b^\star$ with the upper bound on the expected utility of the learner, we get
            $
                R_T(\cA, \cS) \ge (1 - e^{-T})(\nicefrac T4 - \nicefrac T6) \ge \nicefrac T{13}. 
            $
        \end{proof}

    \subsection{I.I.D.\ -- Semi-Transparent Feedback}\label{sec:iid-semi}

        In this section, we prove two results settling the minimax regret for the semi-transparent feedback where the environment is i.i.d.\ (and, possibly, smooth). First, we construct a learning algorithm, \semitransparent, achieving $T^{\nicefrac 23}$ regret against any i.i.d.\ environment. Then, we complement it with a lower bound of the same order (up to log terms) obtained even in a smooth i.i.d.\ environment.

        \subsubsection[\texorpdfstring{A $T^{\nicefrac 23}$ upper bound for the i.i.d.\ environment}{A T\^23 upper bound of the i.i.d.\ environment}]{A $T^{\nicefrac 23}$ Upper Bound for the i.i.d.\ Environment}

        Our learning algorithm \semitransparent\ is composed of two phases. First, for $T_0=\Theta(T^{\nicefrac 23})$ rounds, it collects samples from the highest competing bid random variables $M_1,M_2,\dots, M_{T_0}$ by posting dummy bids $B_1=B_2=\dots=B_{T_0}=0$. Among these values (plus the value $X_0 = 0$), the algorithm selects $\Theta(\sqrt{T_0})$ candidate bids according to their ordering, in such a way that the empirical frequencies of bids $M_1,M_2,\dots,M_{T_0}$ landing strictly in between two consecutive selected values are at most $\Theta(\nicefrac1{\sqrt{T_0}})$ (see the pseudocode of \collect for details). Second, for the remaining time steps, it runs any bandit algorithm, using as candidate bids the ones collected in the first phase (see \semitransparent\ for details). Note that, in this second phase, the (less informative) bandit feedback would be enough to run the algorithm: the additional information provided by the semi-transparent feedback is only exploited in the initial ``collecting bids'' phase. 

            \begin{algorithm}[t!]
            \caption*{\collect}\label{alg:collect_bids}
            \begin{algorithmic}[1]
                \State \textbf{input:} Time horizon $T_0$
                \State $X_0 \gets 0$ and $M^{(0)}\gets 0$
                \For{each round $t = 1, 2, \dots, T_0$}
                    \State Post bid $B_t = 0$ and observe the highest competing bid $M_t$
                \EndFor
                \State Sort the observed highest competing bids in increasing order: $M^{(1)} \le M^{(2)} \le \dots \le M^{(T_0)}$
                \State \algorithmicif{} $M^{(T_0)}=0$ \algorithmicthen{} \textbf{return} candidate bid $X_0$
                \For{$i=1, 2, \dots$}
                    \State $\jstar_{i-1} \gets \max \bcb{ j\in\{0,\dots,T_0\} \mid X_{i-1} = M^{(j)} }$
                    \State
                    $j_i \gets \min\bcb{ \jstar_{i-1} + \lce{ \sqrt{T_0} }, T_0 }$,
                    $X_i \gets M^{(j_i)}$
                    \State \algorithmicif{} $j_i = T_0$ \algorithmicthen{} let $K \gets i$ and \textbf{break};
                \EndFor
                \State \textbf{return} Candidate bids $X_0, X_1, X_2, \dots, X_K$
            \end{algorithmic}
            \end{algorithm}
            
            As a first step, we state a simple concentration result pertaining the i.i.d.\ process $M,M_1,M_2,\dots,M_{T_0}$, for $T_0 \in \N$. If $\cI$ is the family of all the subintervals of $[0,1]$ and
            $\delta\in(0,1)$, we define
            \[
                \cE_\delta^{T_0} = \bigcap_{I \in \cI} \textstyle{\left\{ \labs{ \frac{1}{T_0}\sum_{t=1}^{T_0} \ind{ M_t \in I} - \Pb[ M \in I] } < 8 \sqrt{\frac{\ln (1/\delta) }{T_0} }\right\}
                }\;.
            \]
            The family $\cI$ of all the subintervals of $[0,1]$ has VC dimension $2$ (see, e.g., \citet[Chapter~14.2]{MitzenmacherU17}). Therefore, $\cE_\delta^{T_0}$ is realized with probability at least $1-\delta$, via standard sample complexity bound for $\e$-samples (see, e.g., \citet[Theorem~14.15]{MitzenmacherU17}). This is summarized in the following lemma.
            \begin{lemma}
            \label{lem:VC}
                For every $T_0 \in \N$ and $\delta \in (0,1)$, we have 
                    $\Pb[\cE_\delta^{T_0}] \ge 1 - \delta$.
            \end{lemma}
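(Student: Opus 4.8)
The plan is to read Lemma~\ref{lem:VC} as the standard uniform-convergence (``$\varepsilon$-sample'') guarantee for a set system of bounded VC dimension, instantiated on the family $\cI$ of subintervals of $[0,1]$. The first ingredient is purely combinatorial: $\cI$ has VC dimension $2$. Indeed, any two-point set $\{x_1,x_2\}$ with $x_1<x_2$ is shattered (the intervals $\varnothing$, $[x_1,x_1]$, $[x_2,x_2]$, $[x_1,x_2]$ realize all four of its subsets), whereas no three-point set $x_1<x_2<x_3$ is shattered, since no interval contains $x_1$ and $x_3$ but excludes $x_2$. This is exactly the fact quoted from \citet[Chapter~14.2]{MitzenmacherU17}.

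The second ingredient is the quantitative sample-complexity bound for $\varepsilon$-samples: for a range space of VC dimension $d$, an i.i.d.\ sample $M_1,\dots,M_m$ from the law of $M$ satisfies $\sup_{I\in\cI}\babs{\frac1m\sum_{t=1}^m\ind{M_t\in I}-\Pb[M\in I]}\le\varepsilon$ with probability at least $1-\delta$ as soon as $m$ exceeds a universal function of $\varepsilon$, $d$ and $\delta$ (this is \citet[Theorem~14.15]{MitzenmacherU17}). I would apply this with $d=2$, $m=T_0$, and $\varepsilon=8\sqrt{\ln(1/\delta)/T_0}$, so that $\cE_\delta^{T_0}$ is precisely the event ``the sample is an $\varepsilon$-sample'' and the lemma reads $\Pb[\cE_\delta^{T_0}]\ge1-\delta$. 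All that remains is to check that this choice of $\varepsilon$ meets the theorem's sample-size hypothesis: the numerical factor $8$ is what provides the slack for the VC-dimension-$2$ contribution, and when $T_0$ is below a fixed constant one has $\varepsilon\ge1$ and the event holds deterministically.

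Equivalently, and most concretely for this particular class, I would instead argue directly: a subinterval $[a,b]\subseteq[0,1]$ has $\Pb[M\in[a,b]]=F(b)-F(a^-)$ with $F$ the CDF of $M$, hence $\sup_{I\in\cI}\babs{\frac1{T_0}\sum_{t\le T_0}\ind{M_t\in I}-\Pb[M\in I]}\le 2\sup_x\babs{\hat F_{T_0}(x)-F(x)}$, where $\hat F_{T_0}$ is the empirical CDF; the Dvoretzky--Kiefer--Wolfowitz inequality in Massart's form, $\Pb\bsb{\sup_x\babs{\hat F_{T_0}(x)-F(x)}>\lambda}\le 2e^{-2T_0\lambda^2}$, then yields the claim after choosing $\lambda$ so that the right-hand side equals $\delta$, with an explicit constant comfortably below $8$. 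I do not expect a genuine obstacle anywhere: the only care needed is bookkeeping, namely matching the constants of the generic VC bound to the closed form $8\sqrt{\ln(1/\delta)/T_0}$ (routine once the small-$T_0$ case is dismissed), which is exactly why the statement is phrased with that specific numerical constant.
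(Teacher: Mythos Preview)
Your primary approach is exactly the paper's: note that the family of subintervals of $[0,1]$ has VC dimension $2$ and invoke the standard $\varepsilon$-sample bound \citep[Theorem~14.15]{MitzenmacherU17}, which is all the paper says before stating the lemma. Your alternative via the Dvoretzky--Kiefer--Wolfowitz inequality is a valid (and arguably cleaner) route that the paper does not mention, but it is not needed here.
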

            For the sake of readability, we introduce the following notation:
            \begin{notation}
                \label{not:k(b)}
                Let $\cX = \{ x_0, \dots, x_K \}$ be any grid with $0 = x_0 < x_1 < \dots < x_K \le 1$, we denote by $k_\cX \colon [0,1] \to \{0,1,\dots,K\}$ the function that maps each $b\in[0,1]$ to the unique $k$ such that $b\in[x_k, x_{k+1})$, with the convention that $x_{K+1} = 2$.
            \end{notation}

            We now prove a lemma that allows us to control the expected cumulative utility of any bid in $[0,1]$ with that of the best bid in a discretization (without relying on any smoothness assumption).

            \begin{lemma}
            \label{l:discret}
                Consider any finite grid $\cX = \{ x_0, \dots, x_K \}$, with $0 = x_0 < x_1 < \dots < x_K \le 1$, and assume that the process $M,M_1,M_2,\dots$ of the highest competing bids form an i.i.d.\ sequence. For all $b\in[0,1]$ and $T_0,T_1 \in \N$ with $T_0 < T_1$, $\bbE\lsb{ \sum_{t=T_0+1}^{T_1} \util_t(b) }$ is at most
                \[
                    \bbE \lsb{ \sum_{ t=T_0+1 }^{T_1} \util_t \brb{ x_{k_\cX(b)} } } + (T_1-T_0) \Pb \bsb{ x_{k_\cX(b)} < M < x_{k_\cX(b)+1} } \:.
                \]
            \end{lemma}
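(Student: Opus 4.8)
The plan is to compare the utility of an arbitrary bid $b$ with that of the grid point $x_{k_\cX(b)}$ immediately below it, by splitting the event $\{B_t = b \text{ wins}\}$ according to where $M_t$ falls relative to the grid interval $[x_{k_\cX(b)}, x_{k_\cX(b)+1})$. Write $k = k_\cX(b)$ for brevity, so $x_k \le b < x_{k+1}$. For each round $t$ we decompose $\I\{b \ge M_t\}$ into the two disjoint cases $\{M_t < x_k\}$ and $\{x_k \le M_t \le b\}$, which is exhaustive on the event $\{b \ge M_t\}$ since $x_k \le b$; note $\{x_k \le M_t \le b\} \subseteq \{x_k \le M_t < x_{k+1}\}$ because $b < x_{k+1}$.

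The first step is the pointwise inequality $\util_t(b) \le \util_t(x_k) + (V_t - b)\I\{x_k \le M_t < x_{k+1}\}$. Indeed, on $\{M_t < x_k\}$ we have $\I\{b \ge M_t\} = \I\{x_k \ge M_t\} = 1$, and since $b \ge x_k$, $(V_t - b) \le (V_t - x_k)$, so $\util_t(b) \le \util_t(x_k)$ on this event; on $\{x_k \le M_t \le b\}$ the term $\util_t(x_k)$ may vanish, but the extra summand $(V_t - b)\I\{x_k \le M_t < x_{k+1}\}$ covers $\util_t(b) = (V_t - b)$ (and on $\{M_t > b\}$ both sides are zero). The second step is to sum over $t = T_0+1, \dots, T_1$, take expectations, and use linearity to obtain
\[
    \bbE\Bsb{ \textstyle\sum_{t=T_0+1}^{T_1} \util_t(b) } \le \bbE\Bsb{ \textstyle\sum_{t=T_0+1}^{T_1} \util_t(x_k) } + \bbE\Bsb{ \textstyle\sum_{t=T_0+1}^{T_1} (V_t - b)\I\{x_k \le M_t < x_{k+1}\} }.
\]
The third step bounds the last expectation: since $V_t - b \le 1 - 0 = 1$ (as $V_t \in [0,1]$ and $b \ge 0$), we get $(V_t - b)\I\{x_k \le M_t < x_{k+1}\} \le \I\{x_k \le M_t < x_{k+1}\}$, so the last term is at most $\sum_{t=T_0+1}^{T_1} \Pb[x_k \le M_t < x_{k+1}]$, which equals $(T_1 - T_0)\Pb[x_k < M < x_{k+1}]$ by the i.i.d.\ assumption — here I absorb the boundary point $\{M = x_k\}$ into $\util_t(x_k)$ rather than the residual term, which matches the strict inequality $x_{k_\cX(b)} < M$ in the statement. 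Wait — I should be careful: the statement has $x_{k_\cX(b)} < M < x_{k_\cX(b)+1}$ with a strict lower inequality, so the case $M_t = x_k$ must go entirely into the $\util_t(x_k)$ bucket; this is consistent with the decomposition above if I instead split $\{b \ge M_t\}$ into $\{M_t \le x_k\}$ and $\{x_k < M_t \le b\}$, and on $\{M_t \le x_k\}$ use $\util_t(b) \le \util_t(x_k)$ exactly as before (it still holds since $b \ge x_k \ge M_t$).

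There is no real obstacle here; the only point requiring a moment's care is the bookkeeping of which boundary points ($M_t = x_k$ versus $M_t = x_{k+1}$) land in which bucket, so that the residual probability matches the half-open interval $(x_{k_\cX(b)}, x_{k_\cX(b)+1})$ claimed in the lemma, and the verification that the crude bound $V_t - b \le 1$ suffices (no smoothness or Lipschitz property of $\util$ is invoked, which is the whole point of this lemma). The i.i.d.\ hypothesis is used only in the final equality to turn $\sum_t \Pb[x_k \le M_t < x_{k+1}]$ into $(T_1 - T_0)\Pb[x_k < M < x_{k+1}]$.
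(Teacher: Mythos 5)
Your proof is correct and, after your self-correction about the boundary point $M_t = x_k$, it is essentially the paper's argument: the paper splits the winning indicator as $\I\{b \ge M_t\} = \I\{x_{k_\cX(b)} \ge M_t\} + \I\{b \ge M_t > x_{k_\cX(b)}\}$, bounds $V_t - b \le V_t - x_{k_\cX(b)} \le 1$ on the respective pieces, and enlarges $(x_{k_\cX(b)}, b]$ to $(x_{k_\cX(b)}, x_{k_\cX(b)+1})$ exactly as you do. The only cosmetic difference is that you keep the factor $(V_t - b)$ on the residual term and bound it by $1$ directly, whereas the paper first replaces it by $(V_t - x_{k_\cX(b)})$ everywhere.
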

            \begin{proof}
                Fix any $b\in[0,1]$, $T_0,T_1 \in \N$ with $T_0 < T_1$, and a time step $t\in\{T_0+1, \dots, T_1\}$. 
                Then
                \begin{align*}
                &
                   \bbE\bsb{\util_t(b)} 
                = 
                    \bbE\bsb{(V_t-b)\I\{b \ge M_t \}}
                \\
                &
                \le 
                    \bbE\bsb{(V_t-x_{k_\cX(b)})\brb{ \I\{x_{k_\cX(b)} \ge M_t\} + \I\{b \ge M_t > x_{k_\cX(b)} \} } }
                \\
                &
                \le 
                    \bbE\bsb{\util_t(x_{k_\cX(b)})} + \Pb[x_{k_\cX(b)} < M_t \le b]
                \\
                &
                \le 
                    \bbE\bsb{\util_t(x_{k_\cX(b)})} + \Pb[x_{k_\cX(b)} < M_t < x_{k_\cX(b)+1}] \;.
            \end{align*}
            Summing over all times $t$ and recalling that $M_t$ and $M$ share the same distribution, yields the conclusion.
            \end{proof}

            As a corollary of \Cref{l:discret,lem:VC} we obtain a similar discretization error guarantee when the grid of points $\cX$ is random.

            \begin{lemma}
                \label{l:discret-rand}
                Fix any $T_0\in\N$ and $\delta \in (0,1)$.
                Let $\cX = \{ X_0, \dots, X_K \}$ be a random set containing a random number $K$ of points satisfying 
                $0 = X_0 < X_1 < \dots < X_K \le 1$.
                Assume that the random variables $K, X_0,X_1, \dots, X_{K+1}$ are $\cH_{T_0}$-measurable, where $\cH_{T_0}$ is the history up to and including time $T_0$. 
                Assume that the process $(V_1,M_1), (V_2,M_2),\dots$ of the valuations/highest competing bids form an i.i.d.\ sequence.
                Then, for all $b\in[0,1]$ and $T_1 \in \N$ with $T_1 > T_0$, we have:
                \begin{align*}
                    &\bbE\lsb{ \sum_{t=T_0+1}^{T_1} \util_t(b) }
                \le
                    \bbE \lsb{ \sum_{ t=T_0+1 }^{T_1} \util_t \brb{ X_{k_\cX(b)} } }
                \\
                 &+ (T_1-T_0) \lrb{ \! \tfrac{1}{T_0} \!\sum_{t=1}^{T_0} \Pb \lsb{ X_{k_\cX(b)} < M_t < X_{k_\cX(b)+1} } + 8\sqrt{\tfrac{\ln(\nicefrac 1\delta)}{T_0}} + \delta }.
                \end{align*}
            \end{lemma}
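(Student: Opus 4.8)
The plan is to condition on the history $\cH_{T_0}$, apply \Cref{l:discret} to the \emph{then deterministic} grid, and absorb the extra random-interval probability that appears using the uniform concentration bound of \Cref{lem:VC}.

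First I would fix $b\in[0,1]$ and $T_1>T_0$ and condition on $\cH_{T_0}$. Since $(V_1,M_1),(V_2,M_2),\dots$ is i.i.d.\ and everything observed up to time $T_0$ is a function of the first $T_0$ pairs together with the algorithm's internal randomization (which is independent of the environment), the tail $(V_t,M_t)_{t>T_0}$ is, conditionally on $\cH_{T_0}$, still an i.i.d.\ sequence with the original common law; at the same time the grid $\cX$, the index $k_\cX(b)$, and the points $X_{k_\cX(b)},X_{k_\cX(b)+1}$ are deterministic. Hence \Cref{l:discret}, applied conditionally on $\cH_{T_0}$ to the grid $\cX$, gives
\[
    \bbE\Bigl[ \sum_{t=T_0+1}^{T_1}\util_t(b) \Bigm| \cH_{T_0} \Bigr]
    \le
    \bbE\Bigl[ \sum_{t=T_0+1}^{T_1}\util_t\brb{X_{k_\cX(b)}} \Bigm| \cH_{T_0} \Bigr]
    + (T_1-T_0)\,\Pb\Bigl[ X_{k_\cX(b)}<M<X_{k_\cX(b)+1} \Bigm| \cH_{T_0} \Bigr].
\]
Taking expectations on both sides and using the tower property, it only remains to bound $\bbE\bsb{ \Pb\bsb{ X_{k_\cX(b)}<M<X_{k_\cX(b)+1} \mid \cH_{T_0} } }$ by the bracketed quantity in the statement.

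Next I would handle this term with \Cref{lem:VC}. Let $I^\star=\brb{ X_{k_\cX(b)},X_{k_\cX(b)+1} }$; by \Cref{not:k(b)} it is a subinterval of $[0,1]$ (with the convention $X_{K+1}=2$, so that in the boundary case $I^\star\cap[0,1]=(X_K,1]$), and it is $\cH_{T_0}$-measurable. Since a fresh sample $M$ is independent of $\cH_{T_0}$, the conditional probability $\Pb\bsb{ M\in I^\star \mid \cH_{T_0} }$ equals the mass of $I^\star$ under the common law of the $M_t$'s. On the event $\cE_\delta^{T_0}$ the inequality defining $\cE_\delta^{T_0}$ holds simultaneously for every subinterval of $[0,1]$, hence in particular for $I^\star$, so pointwise on $\cE_\delta^{T_0}$,
\[
    \Pb\bsb{ M\in I^\star \mid \cH_{T_0} } \le \frac1{T_0}\sum_{t=1}^{T_0}\ind{ M_t\in I^\star } + 8\sqrt{\frac{\ln(1/\delta)}{T_0}} ;
\]
on the complementary event I use the trivial bound $\Pb\bsb{ M\in I^\star \mid \cH_{T_0} }\le1$, while $\Pb\bsb{ \overline{\cE_\delta^{T_0}} }\le\delta$ by \Cref{lem:VC}. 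Splitting $\bbE\bsb{ \Pb\bsb{ M\in I^\star \mid \cH_{T_0} } }$ over $\cE_\delta^{T_0}$ and its complement, dropping the indicator of $\cE_\delta^{T_0}$ on the first (nonnegative) summand, and using $\bbE\bsb{ \ind{ M_t\in I^\star } }=\Pb\bsb{ X_{k_\cX(b)}<M_t<X_{k_\cX(b)+1} }$, gives
\[
    \bbE\Bsb{ \Pb\bsb{ X_{k_\cX(b)}<M<X_{k_\cX(b)+1} \mid \cH_{T_0} } }
    \le
    \frac1{T_0}\sum_{t=1}^{T_0}\Pb\bsb{ X_{k_\cX(b)}<M_t<X_{k_\cX(b)+1} }+8\sqrt{\frac{\ln(1/\delta)}{T_0}}+\delta ,
\]
which, substituted into the conditional bound above, is exactly the claim.

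The only delicate points are bookkeeping ones. The first is the conditioning step: one must verify that freezing $\cH_{T_0}$ makes the grid deterministic \emph{while} leaving the tail of the environment i.i.d.\ with the original common law, which is what makes \Cref{l:discret} applicable --- this is precisely where it matters that the grid is built from the first $T_0$ observations only. The second is that the interval $I^\star$ fed to the $\e$-sample guarantee is random, albeit $\cH_{T_0}$-measurable, which is exactly why the \emph{uniform}-over-all-intervals formulation baked into $\cE_\delta^{T_0}$ is needed rather than a single-interval Hoeffding bound. Everything else is linearity of expectation and the crude bound $\Pb\bsb{ \overline{\cE_\delta^{T_0}} }\le\delta$.
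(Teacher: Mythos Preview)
Your proof is correct and matches the paper's intended argument: the paper states this lemma as a corollary of \Cref{l:discret} and \Cref{lem:VC} without giving further details, and your proposal fills in exactly those details --- conditioning on $\cH_{T_0}$ to freeze the grid and apply \Cref{l:discret}, then invoking the uniform concentration event $\cE_\delta^{T_0}$ to control the (random) interval probability, with the trivial bound on the complement absorbing the extra $\delta$. Your remarks on the two delicate points (why the conditioning preserves the i.i.d.\ tail, and why uniformity over intervals is essential for the random $I^\star$) are precisely the justifications the paper leaves implicit.
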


    \begin{algorithm}[t!]
    \caption*{\semitransparent{} (\st)}\label{alg:semi_iid}
    \begin{algorithmic}[1]
        \State \textbf{input:} Time horizon $T$, bandit algorithm $\tilde{\cA}$ for gains in $[-1,1]$
        \State $T_0 \gets \lceil T^{\nicefrac 23} \rceil $
        \State Run \collect with horizon $T_0$ and obtain $X_0, X_1, \dots, X_K$
        \State Initialize $\tilde{\cA}$ on $K+1$ actions (one for each candidate bid $X_i$) and $T-T_0$ as time horizon
        \For{each round $t = T_0 + 1, T_0 + 2, \dots, T$}
            \State Receive from $\tilde{\cA}$ the bid $B_t = X_{I_t}$ for some $I_t \in \{0,1,\dots, K\}$
            \State Post bid $B_t$ and observe feedback $Z_t$
            \State Reconstruct $\util_t(B_t)$ from $Z_t$ and feed it to $\tilde{\cA}$
        \EndFor
    \end{algorithmic}
    \end{algorithm}

We are now ready to present the main theorem of this section.
    
        \begin{theorem}
        \label{thm:upper-iid-general-semi-transparent}
            Consider the problem of repeated bidding in first-price auctions in a stochastic i.i.d.\ environment with semi-transparent feedback. 
            Then there exists a learning algorithm $\cA$ such that
            % Consider the problem of repeated bidding in first-price auctions within a stochastic environment for the semi-transparent feedback model.
            % Then there exists a learning algorithm $\cA$ that satisfies the following bound on the regret:
            \[  
                R_T(\cA) \le 16 \brb{ 13 + \sqrt{\ln T} } T^{\nicefrac 23} \;.    
            \]
        \end{theorem}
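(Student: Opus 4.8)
The plan is to split the horizon at $T_0=\lceil T^{\nicefrac{2}{3}}\rceil$ and control the regret of each phase of \st\ separately, combining the discretization lemmas above with the regret guarantee of the black-box bandit subroutine $\tilde\cA$. First I would fix an arbitrary i.i.d.\ environment and comparator bid $b\in[0,1]$, set $\delta=1/T$, and note that during the first $T_0$ rounds the learner posts $B_t=0$, so $\util_t(B_t)=V_t\ind{M_t=0}\ge0$ whereas $\util_t(b)\le V_t\le1$; hence the first phase contributes at most $T_0\le T^{\nicefrac{2}{3}}+1$ to the regret.

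For the second phase, the grid $\cX=\{X_0,\dots,X_K\}$ returned by \collect is $\cH_{T_0}$-measurable (it depends only on $M_1,\dots,M_{T_0}$, which are recoverable under semi-transparent feedback), so \Cref{l:discret-rand} with $T_1=T$ gives
\[
\E{\sum_{t=T_0+1}^{T}\util_t(b)}\le\E{\sum_{t=T_0+1}^{T}\util_t\brb{X_{k_\cX(b)}}}+(T-T_0)\Brb{\tfrac1{T_0}\textstyle\sum_{t=1}^{T_0}\Pb\bsb{X_{k_\cX(b)}<M_t<X_{k_\cX(b)+1}}+8\sqrt{\tfrac{\ln T}{T_0}}+\tfrac1T}.
\]
The key structural observation is that \collect makes the empirical interval mass here small \emph{deterministically}: two consecutive candidate bids $X_{i-1},X_i$ are at most $\lceil\sqrt{T_0}\rceil$ positions apart in the sorted list $M^{(0)}\le\cdots\le M^{(T_0)}$, and the top interval $[X_K,2)$ contains no observed bid, so in every realization and for every $b$ at most $\lceil\sqrt{T_0}\rceil-1$ of $M_1,\dots,M_{T_0}$ lie strictly between $X_{k_\cX(b)}$ and $X_{k_\cX(b)+1}$. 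By Fubini, $\tfrac1{T_0}\sum_t\Pb[X_{k_\cX(b)}<M_t<X_{k_\cX(b)+1}]\le\lceil\sqrt{T_0}\rceil/T_0\le2/\sqrt{T_0}$; since $T_0\ge T^{\nicefrac{2}{3}}$, the whole error term is at most $2T^{\nicefrac{2}{3}}+8\sqrt{\ln T}\,T^{\nicefrac{2}{3}}+1$.

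Next I would relate $\E{\sum_{t>T_0}\util_t(X_{k_\cX(b)})}$ to the learner's second-phase utility. Since $X_0=0\le b\le1<2=X_{K+1}$, the bid $X_{k_\cX(b)}$ is one of the $K+1$ arms fed to $\tilde\cA$, with gains in $[-1,1]$; conditioning on $\cH_{T_0}$ (so that $\cX$ and $K$ are fixed) and using that, given $\cH_{T_0}$, the second-phase gains are i.i.d.\ and independent of $\tilde\cA$'s internal randomness, the regret bound of $\tilde\cA$ gives $\E{\max_{0\le i\le K}\sum_{t>T_0}\util_t(X_i)-\sum_{t>T_0}\util_t(B_t)\mid\cH_{T_0}}\le R_{\tilde\cA}(K+1,T-T_0)$, where $R_{\tilde\cA}$ denotes its worst-case regret. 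Because each non-final iteration of the second loop of \collect advances the sorted index by at least $\lceil\sqrt{T_0}\rceil$, one has $K+1\le\sqrt{T_0}+2=O(T^{\nicefrac{1}{3}})$, so instantiating $\tilde\cA$ with a standard bandit algorithm with regret $O\brb{\sqrt{(K+1)(T-T_0)}}$ (up to logarithmic factors) makes this term $\tilde O(T^{\nicefrac{2}{3}})$. Taking expectations over $\cH_{T_0}$, using $\util_t(X_{k_\cX(b)})\le\max_i\util_t(X_i)$ pointwise, adding the first-phase and discretization bounds (all independent of $b$), and finally taking the supremum over $b\in[0,1]$ and over i.i.d.\ environments, yields a bound of the form $cT^{\nicefrac{2}{3}}+c'\sqrt{\ln T}\,T^{\nicefrac{2}{3}}$; a (loose) accounting of the absolute constants---governed only by the choice of $\tilde\cA$ and of $\delta=1/T$---fits it inside $16\brb{13+\sqrt{\ln T}}T^{\nicefrac{2}{3}}$.

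The hardest part, I expect, will be the bandit-regret step: one must pass to the conditional expectation given $\cH_{T_0}$ to legitimately treat the data-dependent grid $\cX$ and the random number of arms $K+1$ as fixed, and verify $K+1=O(\sqrt{T_0})$. This is exactly the balance that makes both the discretization error $O(T/\sqrt{T_0})$ and the bandit term $O(\sqrt{(K+1)T})$ equal to $O(T^{\nicefrac{2}{3}})$, and it is what pins down the choices $T_0=\Theta(T^{\nicefrac{2}{3}})$ and grid size $\Theta(\sqrt{T_0})$.
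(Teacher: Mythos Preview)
Your proposal is correct and follows essentially the same approach as the paper: both split at $T_0=\lceil T^{\nicefrac23}\rceil$, bound the first phase trivially by $T_0$, invoke \Cref{l:discret-rand} together with the design property of \collect\ that the empirical mass strictly between consecutive grid points is at most $O(1/\sqrt{T_0})$, and then condition on $\cH_{T_0}$ to apply the bandit subroutine's regret bound with $K+1=O(\sqrt{T_0})$ arms. The only cosmetic differences are that the paper sets $\delta=1/T_0$ rather than $1/T$ and commits to (rescaled) MOSS to get the clean $98\sqrt{(K+1)(T-T_0)}$ bound without logarithmic factors, whereas you leave the choice of $\tilde\cA$ open.
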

        \begin{proof}
            We prove that $\semitransparent$ yields the desired bound when its learning routine $\tilde{\cA}$ is (a rescaled version of) MOSS \citep{audibert2009minimax}: since MOSS is designed to run with gains in $[0,1]$ while the utilities we observe are in $[-1,1]$, we first apply the reward transformation $x \mapsto \frac{x+1}{2}$ to the observed utilities. This costs a multiplicative factor of $2$ on the regret guarantees of MOSS. 
            Leveraging the fact that the empirical frequency between two consecutive $X_k$ and $X_{k+1}$ generated by \collect is at most $\nicefrac 2{\sqrt{T_0}}$ by design and applying \Cref{l:discret-rand} with $T_1 = T$ to the random variables $X_0,X_1,\dots,X_K$, we get, for all $b\in[0,1]$, that
            \[
                    \bbE\lsb{ \sum_{t=T_0+1}^{T} \util_t(b) } \le \bbE \lsb{ \sum_{ t=T_0+1 }^{T} \util_t \brb{ X_{k_\cX(b)} } }
                    + (T-T_0) \lrb{ \frac{2}{\sqrt{T_0}} + 8\sqrt{\frac{\ln(1/\delta)}{T_0}} + \delta }
                =
                    (\star) \;.
            \]
            Now, applying the tower rule to the expectation on the right-hand side conditioning to the history $\cH_{T_0}$ up to time $T_0$, we can use the fact that the regret of the rescaled version of MOSS is upper bounded by $98\sqrt{(K+1) (T-T_0)}$ and the number of points $K+1$ collected by \collect{} is at most $\sqrt{T_0} + 1$ to obtain 
            % Focus on the second phase of the time horizon, from $T_0+1$ to $T$ conditioning on the clean event $\cE$ for the first $T_0$ rounds. We are conditioning with respect to the clean event, therefore we can apply \Cref{l:discret-rand} (with the choice of $T_0$ as in \semitransparent), leveraging the fact that the empirical frequency between two consecutive $x_k=x_{T_0}^{k}$ and $x_{k+1}=x_{T_0}^{k+1}$, which are generated by \collect, is at most $2/\sqrt{T_0}$ by design:
            \begin{align*}
            &
                (\star)
            \le
                \bbE \lsb{ \sum_{ t=T_0+1 }^{T} \util_t (B_t) } 
                + 98\sqrt{( \sqrt{T_0}+1)(T-T_0) }
                + (T-T_0) \lrb{ \frac{2}{\sqrt{T_0}} + 8\sqrt{\frac{\ln(1/\delta)}{T_0}} + \delta } \;.
            \end{align*}
            Finally, tuning $\delta = 1/T_0$, upper bounding the cumulative regret over the first $T_0$ rounds with $T_0$, and recalling that $T_0 = \lceil T^{\nicefrac 23} \rceil$, yields the conclusion.
            % \begin{align*}
            %     \E{\sum_{t=T_0 + 1}^T\util(b)} &\le \max_{i= 0,\dots, K} \E{\sum_{t=T_0 + 1}^T\util(x_i)}  + 9 T^{\nicefrac 23}  \sqrt{\ln T}\\
            %     &\le \E{\sum_{t=1}^T \util_t(b_t)} + \textcolor{red}{25}T^{\nicefrac 23}  \sqrt{\ln T},
            % \end{align*}
            % where in the last inequality we used the guarantees of UCB1 \citep[e.g., Theorem 1.7][]{Slivkins19}
            % We conclude the result by law of total probability and the bound on the probability of the clean event (\Cref{lem:VC}):
            % \[
            %     R_T(\st) \le \E{R_T(\st)\mid \cE} + T \P{\cE} \le {27}T^{\nicefrac 23}  \sqrt{\ln T}. \qedhere
            % \]
        \end{proof}

        \subsubsection[\texorpdfstring{A $T^{\nicefrac 23}$ lower bound for the smooth i.i.d.\ environment}{A T\^ 23 lower bound for the smooth i.i.d.\ environment}]{A $T^{\nicefrac 23}$ Lower Bound for the Smooth i.i.d.\ Environment}

            We prove that the $\tilde O(T^{\nicefrac 23})$ bound achieved by \semitransparent \xspace is indeed optimal, up to logarithmic terms. Our lower bound consists in carefully embedding into our model a hard multiarmed bandit instance with $K = \Theta(T^{\nicefrac 13})$ arms, which entails a lower bound of order $\Omega(\sqrt{KT}) = \Omega(T^{\nicefrac 23})$. This proof agenda involves various challenges: we want to embed a discrete construction of $K$ independent actions into our continuous framework, where the utilities of different bids are correlated, while enforcing smoothness. Furthermore, the semi-transparent feedback is richer than the bandit one.
            We report here a proof sketch and refer the interested reader to \Cref{app:lower-iid-smooth-semi-transparent} for the missing details.

            \begin{theorem}
            \label{thm:lower-iid-smooth-semi-transparent}
                Consider the problem of repeated bidding in first-price auctions in a stochastic i.i.d.\ $\sigma$-smooth environment with semi-transparent feedback, for $\sigma \in (0,\nicefrac 1{66}]$. Then, any learning algorithm $\cA$ satisfies, for $T \ge 8$,
                % Consider the problem of repeated bidding in first-price auction against a stochastic $\sigma$-smooth environment in the semi-transparent feedback model, where $\sigma \in (0,\tfrac 1{66}]$. Then any learning algorithm $\cA$ suffers at least a regret of
                \[
                    R_T(\cA)
                \ge
                    \frac{3}{10^4}T^{\nicefrac 23} \;.
                \]
            \end{theorem}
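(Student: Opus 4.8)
The plan is to reduce \Cref{thm:lower-iid-smooth-semi-transparent} to the classical $\Omega(\sqrt{KT})$ minimax lower bound for stochastic $K$-armed bandits, instantiated with $K=\Theta(T^{\nicefrac 13})$ arms so that $\sqrt{KT}=\Theta(T^{\nicefrac 23})$. The conceptual point guiding the embedding is that under semi-transparent feedback the learner \emph{never} observes the pair $(V_t,M_t)$ jointly — on a lost auction it sees only $M_t$, on a won auction only $V_t$ — so information can be hidden in the \emph{dependence structure} between $V_t$ and $M_t$ while leaving both marginals fixed. Accordingly, I would construct a family $\{\cS_i\}_{i\in[K]}$ of stochastic i.i.d.\ $\sigma$-smooth environments, indexed by a hidden ``good index'' $i$, together with a partition of a fixed subinterval of $[0,1]$ into $K$ equal-length cells $I_1,\dots,I_K$, with the following properties: (P1) in every $\cS_i$, all bids inside a common cell $I_k$ induce, up to an $O(\nicefrac 1K)$ additive perturbation of the utility (the unavoidable variation of the bid cost $-b$ within a cell), the same law of the observable triple $\bigl(\ind{B_t\ge M_t},\,\util_t(B_t),\,Z_t\bigr)$, so that the learner is effectively pulling one of $K$ arms; (P2) the marginal distributions of $V_t$ and of $M_t$ are the same in all $\cS_i$ — in particular the distribution of the $M_t$ revealed on a lost round carries no information about $i$, which is what prevents the richer semi-transparent feedback from helping — and the only $i$-dependence sits in the conditional law of $V_t$ given $M_t$; (P3) under this conditional law, cell $I_i$ has expected per-round utility larger than every other cell by $\Delta=\Theta(T^{-\nicefrac 13})$, while a single $V_t$ observed on a won auction has KL-divergence only $O(\Delta^2)$ between the hypotheses ``cell $k$ is the good one'' and ``it is not''. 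Smoothness with $\sigma\le\nicefrac 1{66}$ is obtained by spreading the conditional laws of $V_t$ into uniform blobs of width $\ge 66^{-1}$, keeping every joint density bounded by $66$.

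Granting the construction, the rest is the standard change-of-measure argument. By Yao's minimax principle it suffices to lower-bound the regret of an arbitrary deterministic learner against the mixture over a uniformly random good index $i^\star\in[K]$. Let $N_k$ be the (random) number of rounds the learner bids in cell $I_k$, and let $\cS_0$ be a symmetric reference environment with no good cell. The chain rule for relative entropy applied to the observed feedback sequence, together with (P1)--(P3) (only rounds spent in cell $k$ are informative, each contributing $O(\Delta^2)$), gives $\kl\bigl(\mathbb{P}_0\,\|\,\mathbb{P}_i\bigr)\le O(\Delta^2)\,\mathbb{E}_0[N_i]$; Pinsker's inequality then yields $\mathbb{E}_i[N_i]\le\mathbb{E}_0[N_i]+O(\Delta)\,T\sqrt{\mathbb{E}_0[N_i]}$. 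Summing over $i\in[K]$, using $\sum_i\mathbb{E}_0[N_i]\le T$ and Cauchy--Schwarz, and dividing by $K$ gives $\tfrac 1K\sum_i\mathbb{E}_i[N_i]\le\tfrac TK+O(\Delta)\,T\sqrt{\tfrac TK}$. Since in $\cS_{i^\star}$ the regret is at least $\Delta\bigl(T-\mathbb{E}_{i^\star}[N_{i^\star}]\bigr)$ minus an $o(T^{\nicefrac 23})$ term accounting for bids falling outside the slotted region and for the $O(\nicefrac 1K)$ slack in (P1), choosing $K=\Theta(T^{\nicefrac 13})$ and $\Delta$ equal to a sufficiently small constant times $\sqrt{K/T}$ makes the right-hand side $\Theta(\Delta T)=\Theta(T^{\nicefrac 23})$. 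Tracking the absolute constants (and using $T\ge 8$ to dominate the lower-order terms) produces the stated bound $R_T(\cA)\ge\tfrac{3}{10^4}T^{\nicefrac 23}$.

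The main obstacle is entirely in the embedding: realizing (P1)--(P3) \emph{simultaneously} under $\sigma$-smoothness, which caps all densities of $(V_t,M_t)$ at $66$. This bounded-density constraint forbids hiding the good-cell advantage in an atom or a narrow density spike; it fights the requirement that the per-cell behaviour be (nearly) constant, since $\util_t$ contains the non-constant term $-b$ and the cell width is of the same order $\Theta(T^{-\nicefrac 13})$ as the target gap $\Delta$; and it must coexist with the semi-transparent feedback, which exposes the exact value of $M_t$ on every lost round. Getting the quantitative balance right — $K$ cells, a genuine $\Theta(\Delta)$ per-round gap, only $O(\Delta^2)$ information per observed valuation, and a constant smoothness parameter, all at once — is the delicate core of the proof; once the construction is pinned down, the information-theoretic steps (KL chain rule, Pinsker, Cauchy--Schwarz, Yao) and the final constant bookkeeping are routine.
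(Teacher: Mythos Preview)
Your plan matches the paper's strategy closely: embed a $K$-armed bandit with $K=\Theta(T^{\nicefrac 13})$ by perturbing a smooth base distribution so that the expected utility acquires a small spike in one of $K$ cells, then run the standard KL-chain-rule/Pinsker/averaging argument. The paper makes the construction explicit --- a base density whose expected utility has a plateau of maximizers on $[\tfrac14,\tfrac34]$, perturbed by a signed density supported on four thin rectangles --- and, as a stylistic difference, first formally reduces to a bandit game via a simulation lemma (the semi-transparent feedback $\psi_t(b)$ can be regenerated from a single Bernoulli observation plus independent noise) before invoking the bandit lower bound, whereas you run the information-theoretic machinery directly on the auction feedback. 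Both routes are fine.

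There is, however, one genuine gap in your property list. Your (P2) --- identical marginals of $V_t$ and of $M_t$, with the $i$-dependence confined to the conditional law of $V_t$ given $M_t$ --- does ensure that \emph{lost} rounds are uninformative, since you then observe $M_t\mid M_t>B_t$, which depends only on the $M$-marginal. But on a \emph{won} round you observe $V_t\mid M_t\le B_t$, and if $B_t$ lies in a cell \emph{above} the good cell $I_i$, the event $\{M_t\le B_t\}$ contains $I_i$, so the perturbed conditional $V_t\mid M_t=m$ for $m\in I_i$ leaks into what you see. Identical marginals do not rule this out, and your claim ``only rounds spent in cell $k$ are informative'' does not follow from (P1)--(P3) as stated. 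The paper's four-rectangle perturbation $g_{w,\e}$ is engineered precisely so that, for every fixed $v$, the signed perturbation integrates to zero in $m$ over the perturbed slab (and symmetrically in $v$ for every fixed $m$); it is this double cancellation that forces $\Pb^{w,\e}_{\psi_t(b)}=\Pb^0_{\psi_t(b)}$ for \emph{all} $b\notin[w-\e,w+\e]$, above the cell as well as below. You need to add that cancellation requirement to your construction; once you do, the rest of your argument goes through exactly as you outline.
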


         \begin{figure*}
                \centering
                \begin{tikzpicture}[scale=3]
                    \draw[gray, dashed] (0,1/4) node[black, left] {$\nicefrac14$} -- (7/8,1/4);
                    \draw[gray, dashed] (0,0.33) node[black, left] {$w$} -- (7/8,0.33);
                    \draw[gray, dashed] (0,3/4) node[black, left] {$\nicefrac34$} -- (7/8,3/4);
                    \draw[gray, dashed] (0,7/8) node[black, left] {$\nicefrac78$} -- (1,7/8);
                    \fill[yellow] (7/8,0) rectangle (1,1/4);
                    \draw (1.02, 0.36) -- (1.04, 0.36) -- (1.04, 0.3) -- (1.02, 0.3);
                    \draw (1.04, {0.33}) node[right] {$\Theta(\e)$};
                    \filldraw[yellow] (7/8,3/4) -- (1,7/8) -- (1,3/4) -- cycle;
                    \fill[green] (7/8,1/4) rectangle (1,0.3);
                    \fill[green] (7/8,0.36) rectangle (1,3/4);
                    \fill[blue] (7/8,0.3) rectangle (15/16,0.33);
                    \fill[blue] (15/16,0.33) rectangle (1,0.36);
                    \fill[red] (7/8,0.33) rectangle (15/16,0.36);
                    \fill[red] (15/16,0.3) rectangle (1,0.33);
                    \draw (7/8,1/4) -- (1,1/4)
                        (7/8,0.3) -- (1,0.3)
                        (7/8,0.33) -- (1,0.33)
                        (7/8,0.36) -- (1,0.36)
                        (15/16,0.36) -- (15/16,0.3) 
                        (7/8,3/4) -- (1,3/4)
                        (7/8,0) -- (7/8,3/4) --(1,7/8) -- (1,0)
                    ;
                    \draw[<->] (0,1.1) node[left]{$m$} -- (0,0) -- (1.1,0) node [right] {$v$};
                    \draw (7/8,0) node[below] {$\frac78$};
                    \draw (1,0) node[below] {$1$};

                    \fill[blue] (1.5, 0.2) rectangle (2, 0.44);
                    \fill[blue] (2, 0.44) rectangle (2.5, 0.68);
                    \fill[red] (1.5, 0.44) rectangle (2, 0.68);
                    \fill[red] (2, 0.2) rectangle (2.5, 0.44);
                    \draw (1.5, 0.44) -- (2.5, 0.44)
                        (2, 0.2) -- (2, 0.68);
                    \draw[thick, gray] (1, 0.3) -- (1.5, 0.2)
                        (1, 0.36) -- (1.5, 0.68);
                    \draw[thick, gray] (7/8, 0.3) rectangle (1,0.36);
                    \draw[thick, gray] (1.5, 0.2) rectangle (2.5,0.68);
                    \draw[white] (2.25, 0.31) node{$R^1_{w,\e}$};
                    \draw[white] (2.25, 0.55) node{$R^2_{w,\e}$};
                    \draw[white] (1.75, 0.31) node{$R^3_{w,\e}$};
                    \draw[white] (1.75, 0.55) node{$R^4_{w,\e}$};
                \end{tikzpicture}
                \qquad\qquad
                \begin{tikzpicture}[
                declare function={
                    func(\x)
                = 
                    (\x < 0) * (0)
                    + and(\x >= 0, \x < 1/4) * \x * ( 1/2 + ( 1 - 4 * \x ) * ln(6/5) )
                    + and(\x >= 1/4, \x < 3/4) * ( 1/8 )
                    + and(\x >= 3/4, \x < 7/8) * ( - 4 * \x * \x + 6 * \x - 17/8 )
                    + and(\x >= 7/8, \x < 1) * ( 15/16 - \x )
                    + (\x >= 1) * (0)
                   ;
                  },
                declare function={
                    tent(\w,\eps,\x)
                =
                    (\x < \w - \eps) * (0)
                    + and(\x >= \w - \eps, \x < \w) * ( 1 + ( \x - \w ) / \eps )
                    + and(\x >= \w, \x < \w + \eps) * ( 1 - ( \x - \w ) / \eps )
                    + (\x >= \w + \eps) * (0)
                    ;
                  },
                  scale = 3
                ]
                % gray lines
                \draw[gray!25!white, very thin] (1/4, {3 * func(1/4)}) -- (1/4, 0);
                \draw[gray!25!white, very thin] (3/4, {3 * func(3/4)}) -- (3/4, 0);
                \draw[gray!25!white, very thin] (0.30, 0.6) -- (0.30, 0);
                \draw[gray!25!white, very thin] (0.33, { 3 * func( 0.33 ) + 30 * 0.33/144 *
                tent( 0.33, 0.03, 0.33 ) } ) -- (0.33, 0);
                \draw[gray!25!white, very thin] (0.36, 0.6) -- (0.36, 0);
                \draw[gray!25!white, very thin] (0.3, { 3 * func( 0.3 ) } ) -- (0, { 3 * func( 0.3 ) } );
                \draw[gray!25!white, very thin] (0.33, { 3 * func( 0.33 ) + 30 * 0.33/144 * tent( 0.33, 0.03, 0.33 ) } ) -- (0, { 3 * func( 0.33 ) + 30 * 0.33/144 * tent( 0.33, 0.03, 0.33 ) } );
                \draw (-0.02, { 3 * func( 0.3 ) }) -- (-0.04, { 3 * func( 0.3 ) } ) -- (-0.04, { 3 * func( 0.33 ) + 30 * 0.33/144 * tent( 0.33, 0.03, 0.33 ) } ) -- (-0.02, { 3 * func( 0.33 ) + 30 * 0.33/144 * tent( 0.33, 0.03, 0.33 ) } );
                \draw (-0.04, { ( 3 * func( 0.33 ) + 30 * 0.33/144 * tent( 0.33, 0.03, 0.33 ) + 3 * func( 0.3 ) ) / 2  } ) node[left] {$\Theta(\e)$}; 
                \draw (0.3, 0.62) -- (0.3, 0.64) -- (0.36, 0.64) -- (0.36, 0.62);
                \draw (0.33, 0.64) node[above] {$\Theta(\e)$};
                % plots
                \draw[domain = 0:1, thick, densely dotted, samples = \nsamples] plot (\x, { 3 * func( \x ) });
                \draw[domain = 0:1, red, samples = \nsamples] plot (\x, { 3 * func( \x ) + 30 * 0.33/144 *
                tent( 0.33, 0.03, \x ) });
                % axes
                \draw[<->] (0,1.1) node[left]{$m$} -- (0,0) -- (1.1,0) node [right] {$v$};
                \draw (1/4, 0) -- (1/4, -0.02)
                    (0.33, 0) -- (0.33, -0.02)
                    (3/4, 0) -- (3/4, -0.02)
                ;
                \draw (1/4, 0) node[below] {$\frac{1}{4}$}
                    (0.33, 0) node[below] {$w$}
                    (3/4, 0) node[below] {$\frac{3}{4}$}
                ;
                \end{tikzpicture}
                \caption{Left: The support of the base density $f$ lies inside the yellow and green regions. The perturbation $g_{w,\e}$ of $f$ occurs inside the green region, where the four rectangles $R^1_{w,\e},\dots, R^4_{w,\e}$ (in red and blue) lie. 
                Right: The corresponding qualitative plots of $b \mapsto \bbE[\util_t(b)]$ (black, dotted) and $p \mapsto \bbE^{w,\e}[\util_t(b)]$ (red, solid).%
                }
                \label{fig:lowerBoundSemiTransp}
            \end{figure*}
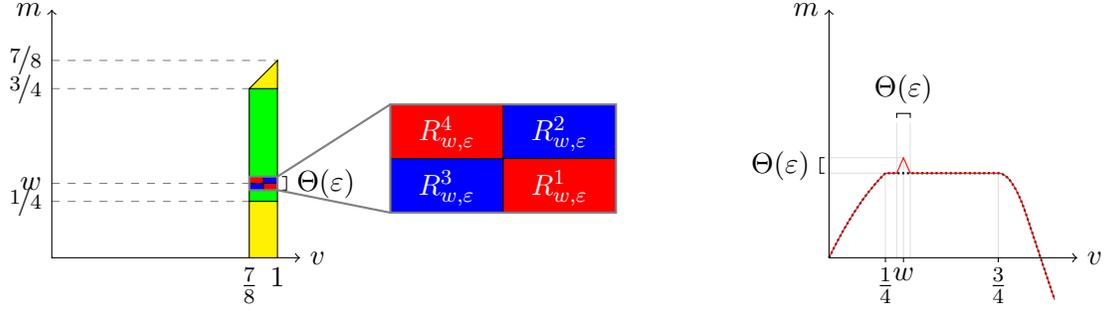

        \begin{proof}[Proof sketch]
            Define, for all $v,m\in[0,1]$, the density
            \[
                f(v,m)
            =
                \I_{\left[\tfrac 78, \, 1\right]}(v) \lrb{ \frac{1}{(v-m)^2} \I_{\left[\tfrac 14, \, v-\tfrac 18\right]}(m) + \frac{4}{v-\nicefrac 14} \I_{ \left[0 , \, \tfrac 14\right) } (m) }\footnote{Note, we use the notation $\I_A(x)$ to denote the indicator function that has value $1$ when $x \in A$, and $0$ otherwise.}.
            \]
            
            Let $\Pb^0$ be a probability measure such that $(V,M),(V_1,M_1)$, $ \dots$ is a $\Pb$-i.i.d.\ sequence where each pair $(V,M)$ has common probability density function $f$. 
            Denoting by $\bbE^0$ the expectation with respect to $\Pb^0$, we have, for any bid $b\in[0,1]$ and any time step $t$
            \begin{align*}
                \bbE^0\bsb{ \util_t(b) }
                    =
                    &b\lrb{\tfrac 12 + (1-4b) \ln \tfrac65 } \I_{\left[0,\tfrac14\right)}(b) 
                    +
                    \frac18 \I_{\left[\tfrac14, \tfrac34\right)} (b)
                    \\
                    &-
                    \lrb{4b^2 - 6b + \tfrac{17}8} \I_{\left[\tfrac34, \tfrac78\right) } (b) 
                    +
                    \lrb{ \tfrac{15}{16} - b } \I_{\left[\tfrac 78 , 1\right]} (b) \;.
            \end{align*}
               
            This function grows with $b$ on $[0,\nicefrac 14)$, has a plateau of maximizers $[\nicefrac 14, \nicefrac 34]$, then decreases on $(\nicefrac 34,1]$ (see \Cref{fig:lowerBoundSemiTransp}, right). We introduce the perturbation space $\Xi$: 
            \[
            \Xi = \bcb{ (w,\e) \in [0,1]^2 : w-\e \ge \tfrac 14 \text{ and } w+\e \le \tfrac 34 }
            \]
            and define, for all $(w,\e)\in \Xi$, the four rectangles 
            \begin{align*}
                R^1_{w,\e} &= [\nicefrac {15}{16}, \, 1] \times [w-\e, \, w), \quad R^2_{w,\e} = [\nicefrac {15}{16}, \, 1] \times [w, \, w + \e ),\\
                R^3_{w,\e} &= [\nicefrac {7}{8}, \, \nicefrac {15}{16}) \times [w -  \e, \, w), \,\,\, R^4_{w,\e} = [\nicefrac {7}{8}, \, \nicefrac {15}{16}) \times [w , \, w + \e ).
            \end{align*}
            For all $(w,\e)\in \Xi$, we introduce the probability density function $f_{w,\e}$ as follows
            $
                f_{w,\e}
            =
                f + g_{w,\e}
            $, where the perturbation $g_{w,\e}$ is defined as follows 
            \[
                g_{w,\e} (v,m)
            =
                \frac{16}{9} \brb{ \I_{R^1_{w,\e} \cup R^4_{w,\e}}(v,m) - \I_{R^2_{w,\e} \cup R^3_{w,\e}}(v,m)
                % \frac{16}{9} \lrb{ \I_{ [15/16, \, 1] \times [w-\e, \, w) } 
                %     - \I_{ [15/16, \, 1] \times [w, \, w + \e ) } 
                %     - \I_{ [7/8, \, 15/16) \times [w -  \e, \, w) } 
                %     + \I_{ [7/8, \, 15/16) \times [w , \, w + \e ) } 
                }
              \;.
            \]
            We refer to the left plot in \Cref{fig:lowerBoundSemiTransp} for a visualization of the support of the $f_{w,\e}$. 
            For all $(w,\e)\in\Xi$, let $\Pb^{w,\e}$ be a probability measure such that $(V,M),(V_1,M_1), (V_2,M_2), \dots$ is a $\Pb^{w,\e}$-i.i.d.\ sequence where each pair $(V,M)$ has common probability density function $f_{w,\e}$. 
            Denoting by $\bbE^{w,\e}$ the expectation with respect to $\Pb^{w,\e}$, we have, for any bid $b\in[0,1]$ and any $t$
            \[
                \bbE^{w,\e}\bsb{ \util_t(b) }
            =
                \bbE^0 \bsb{ \util_t(b) } + \frac{\e}{144} \Lambda_{w,\e}(b)
            \]
            where $\Lambda_{u,r}$ is the tent map centered at $u$ with radius $r$ defined as $\Lambda_{u,r}(x) = \max\lcb{1-\fracc{|x-u|}{r}, 0}$. %(for any $a \in \R$, we have denoted by $(a)^+$ the positive part $\max\{a,0\}$ of $a$).
            In words, in a perturbed scenario $\Pb^{w,\e}$ the expected utility is maximized at the peak of a spike centered at $w$ with length and height $\Theta(\varepsilon)$ perturbing the plateau area $[\nicefrac 14, \, \nicefrac 34]$ of maximum height (see \Cref{fig:lowerBoundSemiTransp}, right). 
            Define, for all times $t\in\N$, the feedback function $\psi_t \colon [0,1] \to \brb{ [0,1] \times \{\star\} } \cup \brb{ \{\star\} \times [0,1] }$, as follows:
            \[
             b \mapsto
                \begin{cases}
                    (V_t,\star) & \text{ if } b \ge M_t \\
                    (\star, M_t) & \text{ if } b < M_t \\
                \end{cases}
            \]
            and note that, in our semi-transparent feedback model, the feedback $Z_t$ received after bidding $B_t$ at time $t$ is $\psi_t(B_t)$.
            Crucially, for each $(w,\e) \in \Xi$ and each $b\in[0,1]\m[w-\e,w+\e]$, the distribution of $\psi_t(b)$ under $\Pb^{w,\e}$ coincides with the distribution of $\psi_t(b)$ under $\Pb^0$. In push-forward notation(for a refresher on push-forward measures, see \Cref{s:appe-measure}), it holds that 
            \begin{equation}
            \label{eq:same-stuff}
                \Pb_{\psi_t(b)}^{w,\e} = \Pb^0_{\psi_t(b)} \;.
            \end{equation}
            Now, let $K\in \N$, $\e = \nicefrac 1{(4K)}$, $w_k = \nicefrac 14 + (2k-1) \e$ and $\Pb^k = \Pb^{w_k,\e}$ (for each $k\in[K]$).
            At a high level, we built a problem with two crucial properties: (i) we know in advance the region where the optimal bid belongs to (i.e., the interval $[\nicefrac 14, \nicefrac 34]$), but (ii) when the underlying scenario is determined by the probability measure $\Pb^k$, the learner has to detect inside this potentially optimal region where a spike of height (and length) $\Theta(\e)$ occurs (to avoid suffering suffer $\Omega( \e T )$ regret). This last task can be accomplished only by locating where the perturbation in the base probability measure occurs, which, given the feedback structure, can only be done by playing in the interval $[w_k-\e,w_k+\e)$ if the underlying probability is $\Pb^k$, suffering instantaneous regret of order $\e$ whenever the underlying probability is $\Pb^j$, with $j \neq k$.
            Given that we partitioned the potentially optimal region $[\nicefrac 14, \nicefrac 34]$ into $\Theta(\nicefrac{1}{\e})$ disjoint intervals where these perturbations can occur, the feedback structure implies that each of these intervals deserves its dedicated exploration. 

            To better highlight this underlying structure, in \Cref{app:lower-iid-smooth-semi-transparent}, we show that our problem is not easier than a simplified $K$-armed stochastic bandit problem, where the instances we consider are determined by the probability measures $\Pb^1, \dots, \Pb^K$. 
            In this bandit problem, when the underlying probability measure is induced by some $\Pb^k$, the corresponding arm $k$ has an expected reward $\Theta(\e)$ larger than the others.
            Then, via an information-theoretic argument, we can show that any learner would need to spend at least order of $\nicefrac 1{\e^2}$ rounds to explore each of the $K$ arms (paying $\Omega(\e)$ each time) or else, it would pay a regret $\Omega(\e T)$. 
            Hence, the regret of any learner, in the worst case, is lower bounded by $\Omega\brb{ \frac{K}{\e^2} \e + \e T } = \Omega \brb{ K^2 + \tfrac TK } $ (recalling our choice of $\e = \nicefrac 1{(4K)}$). 
            Picking $K = \Theta(T^{\nicefrac 13})$ yields a lower bound of order $T^{\nicefrac 23}$. For all missing technical details, see \Cref{app:lower-iid-smooth-semi-transparent}.
        \end{proof}

    \subsection{I.I.D.\ -- Transparent/Full Feedback}
    \label{sec:iid-semi-full}

        This section completes the study of the stochastic i.i.d.\ environment by determining the minimax regret when the learner has access to full or transparent feedback.

        \subsubsection[\texorpdfstring{A $\sqrt{T}$ upper bound for the i.i.d.\ environment}{A sqrt T upper bound for the i.i.d.\ environment}]{A $\sqrt{T}$ Upper Bound for the i.i.d.\ Environment}
        While with semi-transparent feedback, the learning algorithm has to rely on dummy bids $B_1=\dots=B_{T_0} = 0$ to gather information about the distribution of the highest competing bids, with the transparent one, this information is collected for free at each bidding round.
        To use this extra information, we present a wrapper \expthreefpawrapper{} (for a sequence of base learning algorithms for the transparent feedback model) whose purpose is restarting the learning process with a geometric step to update the set of candidate bids. 
        We assume that each of the wrapped base algorithms $\tilde \cA_\tau$ can take as input any finite subset $\cX \subset [0,1]$ and returns bids in $\cX$. 
        Furthermore, for all $T'$, we let $\cR_{T'}(\tilde \cA_\tau,\cX)$ be an upper bound on the regret over $T'$ rounds of $\tilde\cA_\tau$ with input $\cX$ against the best fixed $x\in \cX$. 
        Formally, we require that for any two times $T_0 < T_1$ such that $T' = T_1 - T_0$, the quantity $\cR_{T'}(\tilde \cA_\tau,\cX)$ is an upper upper bound on
        $
            \max_{x\in\cX} \bbE\bsb{ \sum_{t=T_0+1}^{T_1} \util_t(x) - \sum_{t=T_0+1}^{T_1} \util_t(B_t) }
        $, where $B_t \in \cX$ is the sequence of prices played by $\tilde \cA_\tau$ (with input $\cX$) when started at round $t=T_0+1$ and ran up to time $T_1$.
        Without loss of generality, we assume that $T' \mapsto \cR_{T'}(\tilde \cA_\tau,\cX)$ is non-decreasing.

        \begin{algorithm}[ht]
        \caption*{\expthreefpawrapper{} (\expthreefpawrapperLong{}) }\label{alg:expthreefpaWrapper}
        \begin{algorithmic}[1]
            \State \textbf{input:} Base algorithms $\tilde \cA_1 , \tilde\cA_2, \dots$  \State \textbf{initialization:} $s\gets 0$
            % \State Let $s\gets 0$ and define $\gamma_{\tau,n} \gets \sqrt{\fracc { \ln ( n ) }{ \lrb{ (e-1) 2^{\tau-1} } }}$ for all $\tau,n\in\N$
            \For{each epoch $\tau = 1, 2,\dots$}
                \State $\cX_\tau \gets \{0\} \cup \{ M_1, \dots, M_s\}$ (with $\cX_1 = \{0\}$) 
                \State Start $\tilde\cA_\tau$ with input $\cX_{\tau}$ and run it for $t = s+1, \dots, s+ 2^{\tau -1}$
                \State Update $s \gets s+ 2^{\tau -1}$
            \EndFor
        \end{algorithmic}
        \end{algorithm}

        \begin{proposition}
        \label{prop:upper-iid-general-transparent}
            Consider the problem of repeated bidding in first-price auctions in a stochastic i.i.d.\ environment with transparent feedback. 
            % Then there exists a learning algorithm $\cA$ that suffers at most a regret of
            % Consider the problem of repeated bidding in first-price auction against a stochastic environment in the transparent feedback model.
            Then the regret of \expthreefpawrapper{} run with base algorithms $\tilde\cA_1, \tilde\cA_2,\dots$ satisfies
            \[  
                R_T(\mathrm{\expthreefpawrapper{}}) \le \!\!\! \sum_{\tau=2}^{\lceil \log_2(T+1) \rceil} \!\!\!\cR_{2^{\tau - 1}} \brb{ \tilde \cA_\tau,\cX_\tau } + 3 + 16 \brb{ \sqrt{2}+2 } \sqrt{ T \ln T} \;.    
            \]
        \end{proposition}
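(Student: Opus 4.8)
The plan is to fix an arbitrary comparator bid $b\in[0,1]$ and decompose the regret against $b$ over the geometric epochs, then take suprema over $b$ and over the environment at the end. Write $s_\tau=2^{\tau-1}-1$ for the number of rounds elapsed strictly before epoch $\tau$, so that epoch $\tau$ covers rounds $s_\tau+1,\dots,T_1^\tau$ with $T_1^\tau=\min\{2^\tau-1,T\}$, and the number of epochs is $\bar\tau=\lceil\log_2(T+1)\rceil$. Epoch $1$ is handled trivially: its only candidate bid is $\cX_1=\{0\}$, hence $B_1=0$ and $\util_1(b)-\util_1(B_1)\le V_1-0\le 1$. For each epoch $\tau\ge 2$ I will split the per-epoch regret into (i) a discretization error, comparing $b$ with the grid point $X_{k_{\cX_\tau}(b)}$ obtained by rounding $b$ down onto the grid $\cX_\tau=\{0\}\cup\{M_1,\dots,M_{s_\tau}\}$; (ii) the internal regret of the base algorithm $\tilde{\cA}_\tau$ run on $\cX_\tau$; and (iii) a concentration term produced by \Cref{l:discret-rand}.

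The key observation for (i) and (iii) is that $\cX_\tau$ is $\cH_{s_\tau}$-measurable and contains \emph{every} past highest competing bid $M_1,\dots,M_{s_\tau}$. Hence for every $t\le s_\tau$ the value $M_t$ is itself a grid point of $\cX_\tau$ and therefore cannot lie strictly between two consecutive grid points, which makes the empirical-frequency term in \Cref{l:discret-rand} vanish identically: $\tfrac1{s_\tau}\sum_{t=1}^{s_\tau}\Pb\bsb{X_{k_{\cX_\tau}(b)}<M_t<X_{k_{\cX_\tau}(b)+1}}=0$. Applying \Cref{l:discret-rand} with $T_0=s_\tau$, $T_1=T_1^\tau$ and a parameter $\delta_\tau\in(0,1)$ then yields
\[
\bbE\Bsb{\textstyle\sum_{t=s_\tau+1}^{T_1^\tau}\util_t(b)}
\le
\bbE\Bsb{\textstyle\sum_{t=s_\tau+1}^{T_1^\tau}\util_t\brb{X_{k_{\cX_\tau}(b)}}}
+(T_1^\tau-s_\tau)\Brb{8\sqrt{\tfrac{\ln(1/\delta_\tau)}{s_\tau}}+\delta_\tau}\,.
\]
For (ii), since $X_{k_{\cX_\tau}(b)}\in\cX_\tau$ is $\cH_{s_\tau}$-measurable, conditioning on $\cH_{s_\tau}$ and invoking the regret guarantee of $\tilde{\cA}_\tau$ against this (conditionally fixed) comparator, together with the assumed monotonicity of $T'\mapsto\cR_{T'}(\tilde{\cA}_\tau,\cX_\tau)$ and $T_1^\tau-s_\tau\le 2^{\tau-1}$, gives
\[
\bbE\Bsb{\textstyle\sum_{t=s_\tau+1}^{T_1^\tau}\util_t\brb{X_{k_{\cX_\tau}(b)}}}
\le
\bbE\Bsb{\textstyle\sum_{t=s_\tau+1}^{T_1^\tau}\util_t(B_t)}
+\cR_{2^{\tau-1}}\brb{\tilde{\cA}_\tau,\cX_\tau}\,.
\]

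It then remains to chain the two displays, sum over $\tau=2,\dots,\bar\tau$, add the $1$ from epoch $1$, and tune $\delta_\tau=1/T$. Since $s_\tau=2^{\tau-1}-1\ge 2^{\tau-2}$ for $\tau\ge 2$ and $T_1^\tau-s_\tau\le 2^{\tau-1}$, each concentration term is at most $8\sqrt2\,\sqrt{2^{\tau-1}}\sqrt{\ln T}$, and the geometric sum $\sum_{\tau=2}^{\bar\tau}\sqrt{2^{\tau-1}}\le \sqrt{2^{\bar\tau}}/(\sqrt2-1)\le 2(\sqrt2+1)\sqrt{T}$ (using $2^{\bar\tau}<2(T+1)\le 4T$) produces exactly the $16(\sqrt2+2)\sqrt{T\ln T}$ term; meanwhile $\sum_{\tau=2}^{\bar\tau}(T_1^\tau-s_\tau)\delta_\tau\le \tfrac1T\sum_{\tau=2}^{\bar\tau}2^{\tau-1}=\tfrac{2^{\bar\tau}-2}{T}<2$, which together with the $1$ from epoch $1$ gives the additive $3$; the term $\sum_\tau\cR_{2^{\tau-1}}(\tilde{\cA}_\tau,\cX_\tau)$ is carried through unchanged. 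As none of these bounds depends on $b$ or on the environment, taking suprema concludes.

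The difficulty here is largely bookkeeping rather than conceptual: one must recognize that storing \emph{all} past competing bids in the grid annihilates the discretization-concentration term with no smoothness assumption; one must be careful with measurability so that the base-learner guarantee is applied against the random comparator $X_{k_{\cX_\tau}(b)}$ only after conditioning on $\cH_{s_\tau}$; and one must push the constants through the geometric series and through the truncation of the final epoch so as to recover precisely $16(\sqrt2+2)$ and $+3$ rather than looser numbers. One also has to dispatch the degenerate cases $T=1$ (only epoch $1$, regret $\le 1\le 3$) and $\tau=1$ (where $s_1=0$ makes \Cref{l:discret-rand} inapplicable) separately, exactly as above.
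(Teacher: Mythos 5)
Your proposal is correct and follows essentially the same route as the paper's proof: the same geometric-epoch decomposition, the same application of \Cref{l:discret-rand} with the observation that storing all past $M_t$'s in $\cX_\tau$ annihilates the empirical-frequency term, the same conditioning on the pre-epoch history to invoke the base learner's guarantee, and the same tuning $\delta=\nicefrac1T$ with epoch~$1$ bounded by~$1$. You merely carry out the constant bookkeeping (the geometric sum yielding $16(\sqrt2+2)\sqrt{T\ln T}$ and the additive $3$) more explicitly than the paper does.
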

        \begin{proof}
            Fix an arbitrary epoch $\tau \in \bcb{2,\dots, \lce{ \log_2 (T+1) } }$; we want to bound the regret suffered there by \expthreefpawrapper{} using \Cref{l:discret-rand}. Using the notation of the lemma, let $\cX = \cX_\tau$,
            $K + 1 = \labs{\cX}$,
            $T_0 = \sum_{\tau'=1}^{\tau-1}2^{\tau'-1}=2^{\tau-1}-1$ (the time passed from the beginning of epoch $1$ up to and including the end of epoch $\tau-1$),
            $T_1 = \min\{T_0+2^{\tau-1},T\}$ (the end of epoch $\tau$),
            and let $X_0 < X_1 < \dots < X_K$ be the distinct elements of $\cX$ in increasing order, where we note that $X_0 = 0$, $X_K\le 1$, and we set $X_{K+1}=2$.
            Let $\cH_{T_0}$ be the history, including time $T_0$. 
            
            Applying \Cref{l:discret-rand} (together with the fact that the empirical frequency between any two consecutive values $X_k$ and $X_{k+1}$ is $0$ by design), and exploiting the monotonicity of $T' \mapsto \cR_{T'}(\tilde \cA_\tau,\cX_\tau)$ for the last epoch (if $T_0+2^{\tau-1} > T$), we obtain, for all $b\in[0,1]$ and $\delta\in(0,1)$,
            \begin{align*}
                \bbE&\lsb{ \sum_{t=T_0+1}^{T_1} \!\!\!\util_t(b) }
                \le \!\!\!
                     \sum_{ t=T_0+1 }^{T_1} \!\!\!\bbE \lsb{\util_t \Brb{ X_{k_\cX(b)} } }
                    + 2^{\tau-1} \lrb{ 8\sqrt{\tfrac{\ln(\nicefrac 1\delta)}{T_0}} + \delta }
                 \\
                 &
                 \;
                 \le
                     \sum_{ t=T_0+1 }^{T_1} \!\!\!\bbE \lsb{ \util_t (B_t) }
                    + \cR_{2^{\tau - 1}} \brb{ \tilde \cA_\tau,\cX_\tau }
                    + 2^{\tau-1} \lrb{ 8 \sqrt{\tfrac{\ln(\nicefrac 1\delta)}{2^{\tau-1}-1}} + \delta }.
            \end{align*}
            Summing over epochs $\tau \in \bcb{2,\dots, \lce{ \log_2 (T+1) } }$, upper bounding by $1$ the regret incurred in the first epoch, and tuning $\delta = \nicefrac 1T$ yields the conclusion.
        \end{proof}

        Now we are only left to design appropriate base algorithms $\tilde\cA_1,\tilde\cA_2,\dots$ for the transparent feedback to wrap \expthreefpawrapper{} around.
        
        % First, we introduce and analyze a learning algorithm, \expthreefpa{}, extending to the transparent feedback the $\sqrt{T \ln K}$ regret bound typical of learning with $K$ experts. 
        % Then, we design a wrapper for \expthreefpa{} to obtain a $\tilde O(\sqrt{T})$ regret bound holding for both full and transparent feedback in a (possibly non-smooth) i.i.d.\ environment. 
        % Finally, we prove that the $\sqrt{T}$ regime is optimal (up to logarithmic terms) for these two feedback models, even when the environment is constrained to select smooth distributions. 

        \paragraph{The \expthreefpa{} algorithm.}
        % In this section, w
        To this end, w%
        e introduce the \expthreefpa{} algorithm (designed to run with transparent feedback), which
        borrows ideas from online learning with feedback graphs \citep{AlonCGMMS17}. %Although the analysis for our setting is new,
        Similar algorithms for related settings have been previously proposed by \citet{WeedPR16} and \citet{FengPS18}. 
        For the familiar reader, note that our setting can be seen as an instance of online learning with strongly observable feedback graphs. 
        In contrast to a black-box application of feedback-graph results, we shave off a logarithmic term (in the time horizon) by using a dedicated analysis. %the algorithmic idea applied in \cite{alon2015online} to weakly observable feedback graphs.
        For any $x \in [0,1]$, we denote by $\delta_x$ the Dirac distribution centered at $x$.
        
        \begin{algorithm}[ht]
        \caption*{\expthreefpa{}}\label{alg:expthreefpa}
        \begin{algorithmic}[1]
            \State \textbf{input:} Finite $\cX\s[0,1]$ with maximum $\xbar$, exploration rate $\gamma\in(0,1)$
            \State For all $x\in\cX$, let $w_1(x) \gets 1$
            \For{each round $t = 1, 2, \dots$}
                \State Post bid $B_t \sim p_t \gets  (1-\gamma) \frac{w_t}{\lno{w_t}_1} + \gamma\delta_{\xbar}$
                \State For all $x\in\cX$, define the reward estimate:
                \[
                \ghat_t(x) \gets (V_t - x) \I \{ x \ge M_t \} \frac{ \I \{ M_t \le B_t \} } { \sum_{y\ge M_t} p_t(y) }
                \]
                \label{state:transp-fb-line}
                \State For all $x\in\cX$, update the weight:
                \[
                w_{t+1}(x) \gets w_t(x) \exp\brb{ \gamma \ghat_t(x) } 
                \]
            \EndFor
        \end{algorithmic}
        \end{algorithm}

        Note that the transparent feedback is sufficient to compute the reward estimates in \Cref{state:transp-fb-line}.

        \begin{proposition}
            \label{prop:routine}
            Let $ \cX \subset [0,1]$ be a finite set, $T\in \N$ a time horizon, and tune the exploration rate as $\gamma = \sqrt{\fracc { \ln ( \labs{\cX} ) }{ (e-1)T }}$. 
            Then, the regret of \expthreefpa{} against the best fixed bid in $\cX$ is
            \[
                \max_{x\in\cX} \bbE\lsb{ \sum_{t=1}^T \util_t(x) - \sum_{t=1}^T \util_t(B_t) }
            \le
                2 \sqrt{(e-1) \ln \brb{ \labs{\cX} } T }
            \]
        \end{proposition}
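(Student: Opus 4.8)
The plan is to view \expthreefpa{} as Exp3 for online learning with a \emph{strongly observable} feedback graph and to run the standard exponential-weights potential argument, with one problem-specific twist: exploiting the nested structure of the observations to keep the per-round second moment of the importance-weighted reward estimates bounded by an absolute constant --- which is exactly what saves the extra logarithmic factor relative to a black-box feedback-graph bound. Write $\widetilde p_t = w_t/\lVert w_t\rVert_1$, so $p_t=(1-\gamma)\widetilde p_t+\gamma\delta_{\xbar}$, let $\langle p,g\rangle := \sum_{x\in\cX}p(x)g(x)$, and put $P_t=\sum_{x\in\cX:\,x\ge M_t}p_t(x)=\Pb_t[B_t\ge M_t]$, where $\Pb_t$ and $\bbE_t$ denote conditioning on the history up to time $t$ and on $(V_t,M_t)$.

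First I would record the basic properties of $\ghat_t$. Since $\xbar=\max\cX$, for each round $t$ there are two cases: either no bid in $\cX$ is at least $M_t$, in which case $\util_t$ and $\ghat_t$ vanish on $\cX$ and round $t$ is immaterial; or $\xbar\ge M_t$, so $P_t\ge p_t(\xbar)\ge\gamma>0$. Hence $\ghat_t$ is well defined, $\gamma\ghat_t(x)\le 1$ for all $x$ (using $|V_t-x|\le 1$ and $P_t\ge\gamma$), and it is conditionally unbiased, $\bbE_t[\ghat_t(x)]=\util_t(x)$, the factor $\I\{B_t\ge M_t\}/P_t$ being the importance weight for the event $\{B_t\ge M_t\}$ under which transparent feedback reveals $\util_t(x)$.

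The key step --- which I expect to be the real crux --- is the variance bound $\bbE_t\bsb{\langle\widetilde p_t,\ghat_t^2\rangle}\le 1$, valid at every $t$, where $\ghat_t^2$ denotes the entrywise square. The point is structural: for every bid $x\ge M_t$ the estimate $\ghat_t(x)$ carries the \emph{same} importance factor $\I\{B_t\ge M_t\}/P_t$, so $\langle\widetilde p_t,\ghat_t^2\rangle\le \big(\I\{B_t\ge M_t\}/P_t^2\big)\sum_{x\ge M_t}\widetilde p_t(x)\le \I\{B_t\ge M_t\}/P_t$, and taking $\bbE_t$ collapses $\I\{B_t\ge M_t\}$ against the denominator. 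This is where the dedicated analysis beats the generic feedback-graph bound, which would leave a logarithmic factor here. Although the computation itself is short, recognizing that the in-neighborhoods $\{y\ge M_t\}$ are nested --- and placing the exploration mass precisely on $\xbar$ so that $P_t\ge\gamma$ --- is the heart of the argument.

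With these ingredients the rest is routine Exp3 bookkeeping. Using $e^z\le 1+z+(e-2)z^2$ for $z\le 1$ (which applies even when $\ghat_t(x)<0$, since $\gamma\ghat_t(x)\le 1$) and $\ln(1+u)\le u$, I get $\ln\!\big(\lVert w_{t+1}\rVert_1/\lVert w_t\rVert_1\big)\le \gamma\langle\widetilde p_t,\ghat_t\rangle+(e-2)\gamma^2\langle\widetilde p_t,\ghat_t^2\rangle$; telescoping over $t\le T$, lower-bounding $\lVert w_{T+1}\rVert_1\ge w_{T+1}(x^\star)=\exp\!\big(\gamma\sum_t\ghat_t(x^\star)\big)$ for an arbitrary $x^\star\in\cX$, using $\lVert w_1\rVert_1=|\cX|$, and dividing by $\gamma$ gives
\[
    \sum_{t=1}^T\ghat_t(x^\star)-\sum_{t=1}^T\langle\widetilde p_t,\ghat_t\rangle\ \le\ \frac{\ln|\cX|}{\gamma}+(e-2)\gamma\sum_{t=1}^T\langle\widetilde p_t,\ghat_t^2\rangle .
\]
Taking expectations: the left term becomes $\sum_t\bbE[\util_t(x^\star)]$ by unbiasedness; the variance sum is at most $T$ by the key step; and since $\langle\widetilde p_t,\util_t\rangle=\bbE_t[\util_t(B_t)]+\gamma\big(\langle\widetilde p_t,\util_t\rangle-\util_t(\xbar)\big)$ with $\babs{\langle\widetilde p_t,\util_t\rangle-\util_t(\xbar)}\le 1$ (on its support $\{b\ge M_t\}$ the map $b\mapsto\util_t(b)=V_t-b$ has variation at most $1$ over $\cX$), the $\widetilde p_t$-averaged utility differs from $\bbE_t[\util_t(B_t)]$ by at most $\gamma$ per round. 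Collecting the bounds gives $\max_{x^\star\in\cX}\bbE\bsb{\sum_t\util_t(x^\star)-\sum_t\util_t(B_t)}\le \ln(|\cX|)/\gamma+(e-1)\gamma T$, and the stated choice $\gamma=\sqrt{\ln(|\cX|)/((e-1)T)}$ equalizes the two terms, producing $2\sqrt{(e-1)\ln(|\cX|)T}$.
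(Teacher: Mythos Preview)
Your proposal is correct and follows essentially the same route as the paper: the standard exponential-weights potential argument, unbiasedness of $\ghat_t$, and the crucial per-round second-moment bound $\bbE_t\bsb{\langle \cdot,\ghat_t^2\rangle}\le 1$ coming from the nested observation structure (all arms $x\ge M_t$ share the same importance weight $1/P_t$), leading to $\ln|\cX|/\gamma+(e-1)\gamma T$. The only cosmetic difference is how the mixing gap between $\widetilde p_t$ and $p_t$ is handled: the paper upper-bounds $\widetilde p_t(x)\le p_t(x)/(1-\gamma)$ and carries a multiplicative $\tfrac{1}{1-\gamma}$, while you keep $\widetilde p_t$ and pay an additive $\gamma$ per round via $\big|\langle\widetilde p_t,\util_t\rangle-\util_t(\xbar)\big|\le 1$; both collapse to the same final constants. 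One small remark: your parenthetical justification for that last inequality only covers $b\ge M_t$; you should also note that for $b<M_t$ one has $\util_t(b)-\util_t(\xbar)=0-(V_t-\xbar)=\xbar-V_t\in[-1,1]$, so the bound still holds.
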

        \begin{proof}
            Let $\gamma >0$.
            Notice that, for each $t \in \N$, it holds that $\sum_{y \ge M_t} p_t(y) \ge \gamma$. It follows, for each $x \in \cX$ and $t \in \N$, that $\gamma \ghat_t(x) \le 1$, and hence
            \[
                \exp(\gamma \ghat_t(x)) \le 1 + \gamma \ghat_t(x) + (e-2) \gamma^2 \brb{\ghat_t(x)}^2\;.
            \]
            Then, for each $t \in \N$,
            \[
                \frac{\lno{w_{t+1}}_1}{\lno{w_{t}}_1}
            =
                \sum_{x \in \cX} \frac{w_t(x)}{\lno{w_{t}}_1}  \exp\brb{\gamma \ghat_t(x)}
            \le
                1+\sum_{x \in \cX} \frac{w_t(x)}{\lno{w_{t}}_1}  \Brb{\gamma \ghat_t(x) + (e-2) \gamma^2 \brb{\ghat_t(x)}^2}\;,
            \]
            which implies
            \[
                \ln\lrb{\frac{\lno{w_{t+1}}_1}{\lno{w_{t}}_1}}
            \le
                \sum_{x \in \cX} \frac{w_t(x)}{\lno{w_{t}}_1}  \Brb{\gamma \ghat_t(x) + (e-2) \gamma^2 \brb{\ghat_t(x)}^2}
            \le
                \frac{\gamma}{1-\gamma}\sum_{x \in \cX} p_t(x)  \Brb{ \ghat_t(x) + (e-2) \gamma \brb{\ghat_t(x)}^2}.
            \]
            Now, for each $t \in \N$, let $\cF_t$ be the $\sigma$-algebra generated by $p_t, V_t$ and $M_t$ and denote by $\bbE_t := \bbE[\cdot \mid \cF_t]$. First, notice that, for each $t \in \N$ and each $x \in \cX$
            \begin{align*}
                \bbE_t[\ghat_t(x)] = \util_t(x)\;, \quad 
                \bbE_t \lsb{ \sum_{x \in \cX} p_t(x)\ghat_t(x) } = \bbE[\util_t(B_t) \mid V_t, M_t]\;,
            \end{align*}
            and that
            \[
                \bbE_t\lsb{\sum_{x \in \cX} p_t(x)\brb{\ghat_t(x)}^2}
            \le
                \bbE_t\lsb{\sum_{x \in \cX} p_t(x) \frac{\I\{x\ge M_t\}\I\{M_t \le B_t\}}{ \brb{ \sum_{y \ge M_t} p_t(y) }^2}}
            =
                \bbE_t\lsb{\sum_{x \in \cX} p_t(x) \frac{\I\{x\ge M_t\}}{\sum_{y \ge M_t} p_t(y)}} = 1\;.
            \]
            It follows that, for each $x \in \cX$,
            \begin{align*}
                \bbE \lsb{ \sum_{t=1}^T \util_t(x)} - \ln \brb{ \labs{ \cX } }
            &=
                \bbE \lsb{ \sum_{t=1}^T \ghat_t(x)} - \ln \brb{ \labs{ \cX } }
            =
                \bbE\Bsb{\ln\brb{w_{T+1}(x)}} - \ln \brb{ \labs{ \cX } }
            \\
            &\le
                \bbE\lsb{\ln\lrb{\frac{\lno{w_{T+1}}_1}{\lno{w_{1}}_1}}}
            =
                \sum_{t=1}^T\bbE\lsb{ \bbE_t\lsb{ \ln\lrb{\frac{\lno{w_{t+1}}_1}{\lno{w_{t}}_1}}} }
            \\
            &
            \le
                \frac{\gamma}{1-\gamma} \lrb{\bbE \lsb{ \sum_{t=1}^T \util_t(B_t) } + (e-2)\gamma T} \;,
            \end{align*}
        which, after rearranging and upper bounding, yields
        \[
            \bbE \lsb{ \sum_{t=1}^T \util_t(x) - \sum_{t=1}^T \util_t(B_t) }
        \le
            \frac{\ln\lrb{|\cX|}}{\gamma} + (e-1)\gamma T\;.
        \]
        Selecting $\gamma$ as in the statement of the theorem leads to the conclusion.
        \end{proof}

        Putting together \Cref{prop:upper-iid-general-transparent,prop:routine} yields the desired rate.

        \begin{theorem}
            \label{thm:upper-iid-general-transparent}
            Consider the problem of repeated bidding in first-price auctions in a stochastic i.i.d.\ environment with transparent feedback. 
            Then there exists a learning algorithm $\cA$ such that \[
                R_T(\cA) \le 
                3 + 2\brb{ \sqrt{2}+2 }\brb{ \sqrt{2(e-1)} + 8 } \sqrt{T \ln T} \;.
            \]
            
        \end{theorem}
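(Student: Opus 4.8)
The plan is to instantiate the wrapper \expthreefpawrapper{} of \Cref{prop:upper-iid-general-transparent} with the base algorithm \expthreefpa{} analyzed in \Cref{prop:routine}, and then carry out the book-keeping needed to sum the per-epoch regret bounds. Concretely, for each epoch $\tau$ I would let $\tilde\cA_\tau$ be \expthreefpa{} run on the input set $\cX_\tau$ over the horizon $2^{\tau-1}$, with exploration rate tuned as $\gamma_\tau = \sqrt{\ln(\labs{\cX_\tau})/((e-1)2^{\tau-1})}$ (when $\labs{\cX_\tau}=1$, i.e.\ $\cX_\tau=\{0\}$, the per-epoch regret is trivially $0$, so this case is immaterial). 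The algorithm \expthreefpa{} takes any finite $\cX\subset[0,1]$ as input, always bids inside $\cX$, needs only transparent feedback to form its estimates (as noted right after its pseudocode), and by \Cref{prop:routine} admits the regret bound $\cR_{T'}(\tilde\cA_\tau,\cX_\tau) = 2\sqrt{(e-1)\ln(\labs{\cX_\tau})\,T'}$, which is non-decreasing in $T'$ --- exactly the interface required by \expthreefpawrapper{}.

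Next I would control the size of the candidate sets and the number of epochs. Write $L = \lceil\log_2(T+1)\rceil$ for the number of epochs. The set used in epoch $\tau$ is $\cX_\tau = \{0\}\cup\{M_1,\dots,M_{2^{\tau-1}-1}\}$, so $\labs{\cX_\tau}\le 2^{\tau-1}$; and since $L-1 < \log_2(T+1)$ we have $2^{\tau-1}\le 2^{L-1} \le T < T+1 \le 2T$ for all $\tau\le L$. Hence $\ln\labs{\cX_\tau}\le\ln(2^{\tau-1})\le\ln T$, which plugged into the per-epoch bound gives $\cR_{2^{\tau-1}}(\tilde\cA_\tau,\cX_\tau)\le 2\sqrt{(e-1)\ln T}\,(\sqrt2)^{\tau-1}$. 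Summing over epochs with the geometric-series estimate $\sum_{\tau=2}^{L}(\sqrt2)^{\tau-1}\le\frac{\sqrt2}{\sqrt2-1}(\sqrt2)^{L-1} = (\sqrt2+2)\sqrt{2^{L-1}}\le(\sqrt2+2)\sqrt{2T}$ yields $\sum_{\tau=2}^{L}\cR_{2^{\tau-1}}(\tilde\cA_\tau,\cX_\tau)\le 2(\sqrt2+2)\sqrt{2(e-1)}\,\sqrt{T\ln T}$. Substituting this into \Cref{prop:upper-iid-general-transparent}, namely $R_T(\expthreefpawrapper{})\le\sum_{\tau=2}^{L}\cR_{2^{\tau-1}}(\tilde\cA_\tau,\cX_\tau) + 3 + 16(\sqrt2+2)\sqrt{T\ln T}$, and combining the two $\sqrt{T\ln T}$ terms gives exactly $R_T(\cA)\le 3 + 2(\sqrt2+2)(\sqrt{2(e-1)}+8)\sqrt{T\ln T}$, as claimed.

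There is no substantial obstacle here: \Cref{prop:upper-iid-general-transparent,prop:routine} already contain the two non-trivial arguments (the discretization-error control via \Cref{l:discret-rand}, and the feedback-graph-style analysis of \expthreefpa{}). The points requiring care are (i) checking that \expthreefpa{} meets the wrapper's interface --- in particular that its regret bound is monotone in the horizon and that transparent feedback suffices to compute the estimates $\ghat_t$ --- and (ii) the arithmetic of bounding $\labs{\cX_\tau}$, the epoch count $L$, and the geometric sum tightly enough to land on the stated constant; the factor $\sqrt2$ inside $\sqrt{2(e-1)}$ is precisely the slack coming from the crude bound $2^{L-1}\le 2T$ in the last step.
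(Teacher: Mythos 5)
Your proposal is correct and follows essentially the same route as the paper: instantiate \expthreefpawrapper{} with \expthreefpa{} tuned as in \Cref{prop:routine}, then substitute the per-epoch guarantee into \Cref{prop:upper-iid-general-transparent} using $\labs{\cX_\tau}\le 2^{\tau-1}$. The paper leaves the geometric-sum arithmetic implicit; your accounting (including the deliberate loosening $2^{L-1}\le 2T$ to recover the stated constant) fills it in correctly.
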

        \begin{proof}
            The statement of the theorem holds for  \expthreefpawrapper{} run with the base algorithm of each epoch $\tau$ being $\expthreefpa{}$ tuned with $\gamma = \gamma(\tau) = \sqrt{\fracc { \ln ( |\cX_\tau| ) }{ \lrb{ (e-1) 2^{\tau-1} } }}$. Substituting the guarantees of \Cref{prop:routine} into those of \Cref{prop:upper-iid-general-transparent} and recalling that $\labs{\cX_\tau} \le 2^{\tau-1}$ for each epoch $\tau = 2,3,\dots$, yields the desired bound.
        \end{proof}

        \subsubsection[\texorpdfstring{A $\sqrt{T}$ lower bound for the i.i.d.\ environment}{A sqrt T lower bound for the i.i.d.\ environment}]{A $\sqrt{T}$ Lower Bound for the i.i.d.\ Environment}
            
            We complement the positive result of \Cref{thm:upper-iid-general-transparent} with a matching lower bound of order $\sqrt{T}$. The idea underlying our hard instance is to embed the well-known lower bound for prediction with (two) experts into our framework: we construct two smooth distributions that are ``similar'' but have two different optimal bids whose performance is separated so that no learner can identify the correct distribution without suffering less than $\sqrt{T}$ regret.

\begin{figure}[ht!]
         \centering
         \includegraphics[width = 0.7\textwidth]{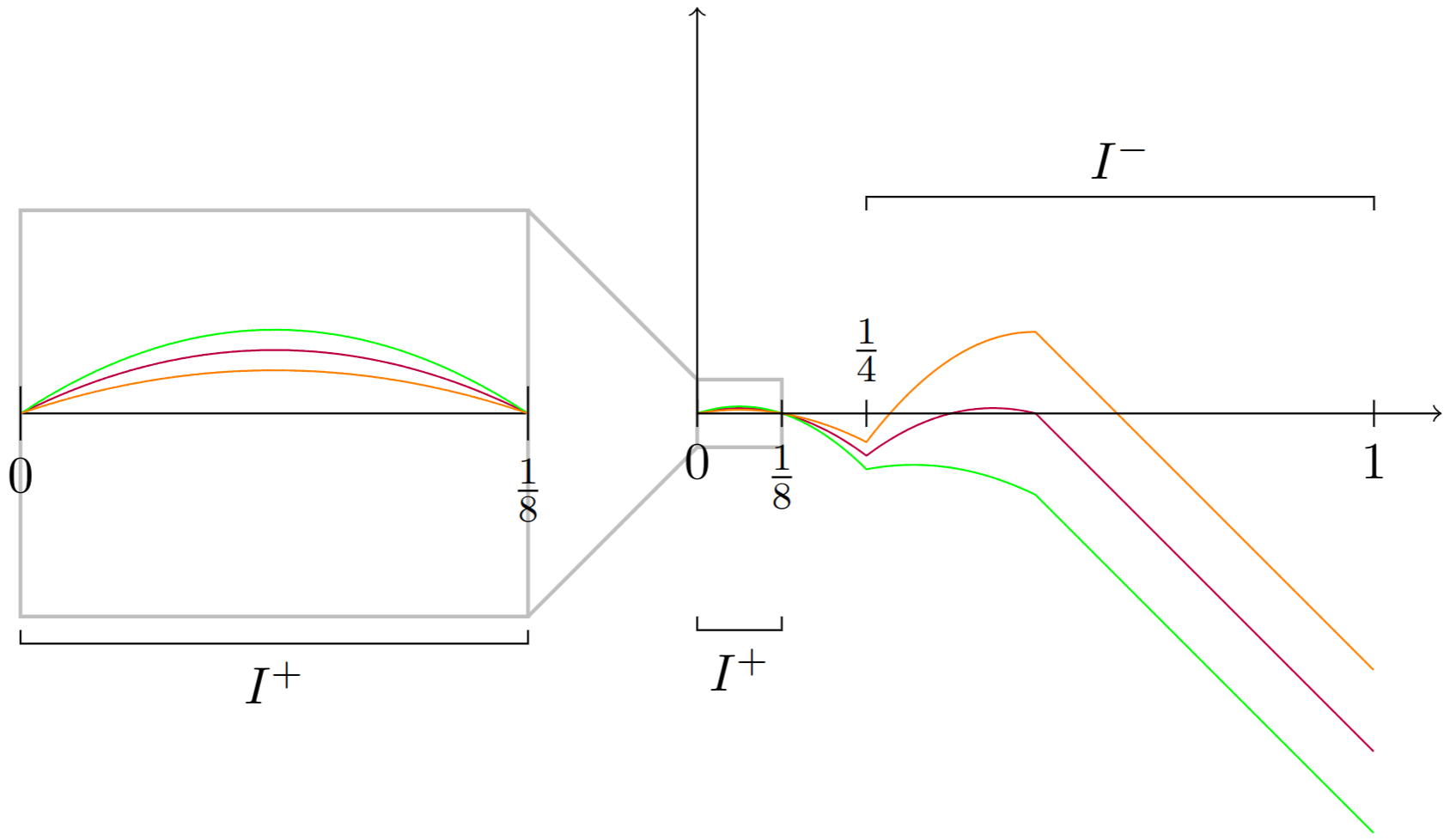}
         \caption{The expected utility function for three different distributions: $\Pb^0$ in purple, $\Pb^+$ in orange, and $\Pb^+$ in green. }
    \label{fig:supporting_full}
    \end{figure}

        \begin{theorem}
        \label{thm:lower-iid-smooth-full}
            Consider the problem of repeated bidding in first-price auctions in a stochastic i.i.d.\ $\sigma$-smooth environment with full feedback, for $\sigma \in (0,\nicefrac 19]$. Then, any learning algorithm $\cA$ satisfies
        \[
            R_T(\cA)
        \ge
            \frac1{2048} \sqrt{T} \;.
        \]
        \end{theorem}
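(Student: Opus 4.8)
The plan is to prove the bound by the two‑point method. I will exhibit two stochastic i.i.d.\ $\nicefrac19$‑smooth environments $\cS^+,\cS^-$, whose $(V_t,M_t)$ pairs have densities $f^+$ and $f^-$ on $[0,1]^2$, that are so close in total variation that no learner can tell them apart within $T$ rounds, yet whose optimal bids are separated enough that no single bid is near‑optimal for both. Compared with \Cref{thm:lower-iid-smooth-semi-transparent}, the analysis is cleaner because under full feedback the observation $Z_t=(V_t,M_t)$ does not depend on the learner's bid; hence the law of the whole observation trajectory is the $T$‑fold product of the one‑round law of $(V,M)$, and the divergence between the two trajectory laws factorizes \emph{exactly} as $T\kl\brb{f^+\,\|\,f^-}$. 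I will also use the elementary fact that, since the environment is i.i.d.\ and action‑independent, $R_T(\cA,\cS)\ge\sum_{t=1}^{T}\bbE\bsb{G(b^\star)-G(B_t)}$, where $G(b):=\bbE\bsb{\util_t(b)}$, $b^\star\in\argmax_b G(b)$, the expectation is under $\cS$, and $B_t$ is the (history‑ and randomness‑dependent) bid of $\cA$.

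For the construction I would start from a base density $f^0$, reverse‑engineered in the spirit of the base of \Cref{thm:lower-iid-smooth-semi-transparent} but flat enough to respect the tighter budget $\sup f^0<9$ (hence $\nicefrac19$‑smoothness), so that its expected‑utility curve $G^0$ is \emph{exactly} constant and maximal on a short interval $P$ containing two interior points $b_+<b_-$ with $b_--b_+>2r$ for a fixed constant $r>0$; see the three curves in \Cref{fig:supporting_full}. I then perturb $f^0$ only inside two disjoint $m$‑slabs $[b_{\pm}-r,b_{\pm}+r]\times[0,1]$, and only by reshuffling the conditional law of $V$ given $M=m$ — for $m$ just below $b_+$ move a sliver of $V$‑mass upward, for $m$ just above $b_+$ move it downward, with magnitude $\Theta(\e)$ — leaving the marginal of $M$ untouched. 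Since the $M$‑marginal is unchanged, the induced change of $G$ at bid $b$ is $\int_0^b\psi^+(m)\dif m$ with $\psi^+(m)=\int v\, g^+(v,m)\dif v$, and the shape above makes this precisely $\e\,\Lambda_{b_+,r}(b)$, a tent of height $\e$ supported on $[b_+-r,b_++r]$ and vanishing for all other $b$; the mirror perturbation around $b_-$ defines $\cS^-$. One checks that $f^{\pm}=f^0+g^{\pm}$ are valid $\nicefrac19$‑smooth densities ($\e=\e(T)\to 0$ makes the budget slack harmless and keeps $f^{\pm}$ positive), that $G^{\pm}=G^0+\e\,\Lambda_{b_{\pm},r}$ so the optimal bid of $\cS^{\pm}$ is exactly $b_{\pm}$, and — crucially, since the tents are disjointly supported —
\[
\brb{G^+(b_+)-G^+(b)}+\brb{G^-(b_-)-G^-(b)}\ \ge\ \e\brb{2-\Lambda_{b_+,r}(b)-\Lambda_{b_-,r}(b)}\ \ge\ \e
\qquad\text{for all }b\in[0,1].
\]

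Given these pieces, the rest is standard. Fix $\cA$; for each round $t$ let $E_t=\bcb{G^+(b_+)-G^+(B_t)<\nicefrac\e2}$, an event measurable with respect to $Z_1,\dots,Z_{t-1}$ and $\cA$'s internal randomness. On $E_t$ the displayed inequality forces $G^-(b_-)-G^-(B_t)>\nicefrac\e2$, so, writing $\bbE^{\pm}$ for expectation under $\Pb^{\pm}$,
\[
\bbE^+\bsb{G^+(b_+)-G^+(B_t)}+\bbE^-\bsb{G^-(b_-)-G^-(B_t)}\ \ge\ \tfrac\e2\brb{\Pb^+(E_t^{c})+\Pb^-(E_t)}\ \ge\ \tfrac\e4\,e^{-T\kl(f^+\|f^-)},
\]
using the Bretagnolle--Huber inequality and $\kl\brb{\Pb^+_{Z_{1:t-1}}\|\Pb^-_{Z_{1:t-1}}}\le T\kl(f^+\|f^-)$. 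Summing over $t$ and invoking the regret lower bound from the first paragraph gives $R_T(\cA,\cS^+)+R_T(\cA,\cS^-)\ge\tfrac{\e T}{4}\,e^{-T\kl(f^+\|f^-)}$. Finally, the perturbations are $\Theta(\e)$ in sup‑norm and supported where $f^0$ is bounded below by a constant, so $\kl(f^+\|f^-)\le C\e^2$ for an absolute constant $C$ (expand $\ln(1+g^{\pm}/f^0)$ to second order; the linear terms vanish since $\int g^{\pm}=0$). Picking $\e=\Theta(1/\sqrt T)$ small enough that $T\kl(f^+\|f^-)\le 1$ yields $\max\{R_T(\cA,\cS^+),R_T(\cA,\cS^-)\}=\Omega(\sqrt T)$, and since $C$ shrinks as one enlarges $r$ and $|P|$, a sufficiently generous (but absolute) choice of these makes the coefficient at least $\nicefrac1{2048}$; as $\cS^+,\cS^-$ both lie in the class of $\nicefrac19$‑smooth i.i.d.\ environments, $R_T(\cA)\ge\max\{R_T(\cA,\cS^+),R_T(\cA,\cS^-)\}$.

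The main obstacle is the construction in the second paragraph: one must engineer a perturbation that simultaneously (i) relocates the maximizer of $G$ by a genuinely $\Theta(\e)$‑sized but spatially \emph{localized} amount, despite $\util_t(b)$ depending on $M_t$ non‑locally through $\I\{M_t\le b\}$ — which is exactly why the perturbation acts only on the conditional law of $V$ at fixed $M$, leaving the ``$\I\{M\le b\}$ part'' untouched so that its effect on $G$ telescopes to a compactly supported tent; (ii) has Kullback--Leibler divergence only $O(\e^2)$; and (iii) fits inside the tight $\nicefrac19$‑smoothness budget while still allowing a base $G^0$ that is exactly flat at its maximum over an interval. These are the same ideas behind the ``balanced rectangles'' $R^1_{w,\e},\dots,R^4_{w,\e}$ of \Cref{thm:lower-iid-smooth-semi-transparent}, here specialized to two hypotheses and tuned to $\sigma=\nicefrac19$; once the construction is fixed, the information‑theoretic part is routine.
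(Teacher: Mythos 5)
Your proposal is logically sound and would prove the theorem, but it reaches it by a noticeably different construction than the paper's. The paper also plays two close $\nicefrac19$-smooth i.i.d.\ distributions off against each other, but its hard pair is much more concrete: a base density $f^0=8(\I_{Q_+}+\I_{Q_-})$ supported on two small rectangles, whose expected utility has two tied isolated maxima at $\nicefrac1{16}$ and $\nicefrac7{16}$, perturbed simply by re-biasing the mixture weight between the two rectangles ($8(1\pm\e)$ vs.\ $8(1\mp\e)$). That global reweighting tilts the whole curve toward one maximizer or the other; no flat plateau, no localized tent, and no conditional-on-$M$ reshuffling is needed. The information-theoretic step is then Pinsker plus a counting argument over the number of rounds spent in each of two disjoint ``good'' intervals, rather than your per-round Bretagnolle--Huber bound; the two are interchangeable here because, as you correctly note, under full feedback the observation trajectory is action-independent and its law factorizes. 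What your route buys is uniformity with the semi-transparent lower bound (your tents are exactly the mechanism of \Cref{thm:lower-iid-smooth-semi-transparent}); what it costs is that the heart of your proof --- a base density with $\sup f^0<9$ whose utility curve is \emph{exactly} flat on a plateau and which admits the tent-producing perturbation inside its own support --- is asserted rather than exhibited. That existence claim is true (the paper's semi-transparent base realizes the plateau, though only with $\sigma=\nicefrac1{64}$, and its full-feedback base shows two tied maximizers are achievable under the $\nicefrac19$ budget), but as written it is the one genuinely unfinished step, along with the unverified constant $\nicefrac1{2048}$. You correctly anticipated the main pitfall of your construction: the perturbation $g^{\pm}$ must be supported where $f^0$ is bounded away from zero, or else $\kl(f^+\|f^-)$ is infinite; keep that constraint when you instantiate $f^0$. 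The rest --- the separation inequality $\brb{G^+(b_+)-G^+(b)}+\brb{G^-(b_-)-G^-(b)}\ge\e$, the measurability of $E_t$, the regret decomposition via $\bbE[\util_t(B_t)]=\bbE[G(B_t)]$, and the second-order KL expansion using $\int g^{\pm}=0$ --- is correct.
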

        \begin{proof}
            We prove the theorem by Yao's principle: we show that there exists a distribution over stochastic $\sigma$-smooth environments such that any deterministic learning algorithm $\cA$ suffers $\Omega(\sqrt{T})$ regret against it, in expectation. We do that in two steps. First, for every $\e\in (0,\nicefrac 12)$ we construct a pair of $\nicefrac 19$-smooth distributions that are hard to discriminate for the learner. Then, we prove that, for the right choice of $\e$, any learner suffers the desired regret against a uniform mixture of them. For visualization, we refer to \Cref{fig:supporting_full}.

            As a tool for our construction, we introduce a baseline probability measure $\Pb^0$, such that the sequence $(V,M),(V_1,M_1),(V_2,M_2),\dots$ is $\Pb^0$-i.i.d., and $(V,M)$ has distribution $\Pb^0_{(V,M)}$ (for a refresher on push-forward measures, see \Cref{s:appe-measure}) whose pdf is
            \[
                f^0(v,m) = 8 \,(\I_{Q_+}(v,m) + \I_{Q_-}(v,m)),
            \]
            where $Q_+ = (0,\nicefrac{1}{4}) \times (0,\nicefrac{1}{4})$ and $Q_- = (\nicefrac{3}{4},1) \times (\nicefrac 14,\nicefrac{1}{2})$. %(see \Cref{fig:supporting_full}, top).
            A convenient way to visualize this distribution is to draw a uniform random variable $U_t$ in the square $Q_+$ and then toss an unbiased coin. If the coin yields heads, then $(V_t,M_t)$ is equal to $U_t$, otherwise $(V_t,M_t)$ coincides with $U_t$ translated by $(\nicefrac 34, \nicefrac 14)$. With some simple computation, it is possible to explicitly compute the expected utility of posting any bid $b\in [0,1]$ when $(V_t,M_t)$ is drawn following the distribution $\Pb^0$ (and expectation $\mathbb{E}^0$):
            \[
            \mathbb{E}^0[\util_t(b)] = 
            \begin{cases}
                \frac b4 (1 - 8b) \quad & \text{ if $b \in [0,\nicefrac 14)$}\\
                -\frac 1{8} (16b^2 - 14b + 3) \quad & \text{ if $b \in [\nicefrac 14,\nicefrac 12)$}\\
                \tfrac 12(1 - 2b) \quad & \text{ if $b \in [\nicefrac 12,1]$}
            \end{cases}
            \]
            The function $\mathbb{E}^0[\util_t(b)]$ has two global maxima in $[0,1]$, of value $\nicefrac{1}{128}$, attained in $\nicefrac{1}{16}$ and $\nicefrac{7}{16}$ (see purple line in \Cref{fig:supporting_full}).

            For any $\e \in (0,\nicefrac 12)$, we also define two additional (perturbed) probability measures $\Pb^{\pm \e}$, such that the sequence $(V,M)$, $(V_1,M_1)$, $\dots$ is $\Pb^{\pm \e}$-i.i.d.\ and the distribution $\Pb^{\pm \e}_{(V,M)}$ of $(V,M)$ has density:
            \[
                f^{\pm \e}(v,m) = 8(1 \pm \e)\I_{Q_+}(v,m) + 8(1 \mp \e)\I_{Q_-}(v,m).
            \]
            Note, $||f^{\pm \e}||_{\infty} < 9$, while $||f^{0}||_{\infty} = 8$, therefore all the distributions considered in this proof are $\nicefrac 19$-smooth.
            To visualize these new perturbed distributions, recall the construction of $\Pb^0_{(V,M)}$ using the coin toss and the uniform random variable $U$: in this case, the coin is biased, and the probability of tails is $\nicefrac{(1\pm\e)}2$. It is possible to explicitly compute the expected utility under these perturbed distributions for any bid $b \in [0,1]$: $\mathbb{E}^{\pm \e}[\util_t(b)]$ is equal to
            \begin{equation}
            \label{eq:Epm}
            \begin{cases}
                \frac b4 (1 - 8b) \pm \e\frac{b}{4}(1-8b) \quad & \text{ if $b \in [0,\tfrac 14)$}\\
                -\frac 1{8} (16b^2 - 14b + 3) \pm \frac{\e}{4}(8b^2-11b+2) \quad & \text{ if $b \in [\tfrac 14,\tfrac 12)$}\\
                \tfrac 12(1 - 2b\mp \frac34 \e) \quad & \text{ if $b \in [\tfrac 12,1]$}
            \end{cases}     
            \end{equation}
            For visualization, we refer to \Cref{fig:supporting_full} (bottom). The crucial property of the distributions we constructed is that the instantaneous regret of not playing in the ``correct'' region is $\Omega(\e)$; formally we have the following result. For the sake of readability, we postpone the proof of this claim to \Cref{app:lower-iid-smooth-full}. 
            \begin{restatable}{claim}{suboptimality}\label{cl:suboptimality_gap}
                There exists two  disjoint intervals $I_+$ and $I_-$ in $[0,1]$ such that, for any $\e \in (0,\nicefrac 12)$ and any time $t$, the following hold:
                \[
                    \max_{x \in [0,1]}\mathbb E ^{\pm \e} [\util_t(x)] \ge \mathbb E ^{\pm \e} [\util_t(b)] +  \tfrac 1{128}\e, \text{ for all $b\notin I_{\pm}$}    
                \]
            \end{restatable}
            
            Since the two distributions are ``$\e$-close\footnote{In \Cref{app:lower-iid-smooth-full} we formally prove that their total variation is at most $\Theta(\e)$.}'', any learner needs at least $\nicefrac 1{\eps^2}$ rounds to discriminate which ones of the two distributions it is actually facing, paying each error with an instantaneous regret of $\Omega(\e)$ (\Cref{cl:suboptimality_gap}). All in all, any learner suffers a regret that is $\Omega(\e \cdot \tfrac{1}{\e^2} + \e T)$, which is of the desired $\Omega(\sqrt{T})$ order for the right choice of $\e \approx T^{-\nicefrac 12}$.
            
            As the last step of the proof, we formalize the above argument. Fix $\e = \nicefrac 1{(4\sqrt{T})}$ and rename $\Pb^{+\e}=\Pb^{1}$ and $\Pb^{-\e}=\Pb^{2}$. Similarly, denote with $I_1$ and $I_2$ the two intervals $I_+$ and $I_-$ as in the statement of \Cref{cl:suboptimality_gap}. 
            For each $j \in \{0,1,2\}$, consider the run of $\mathcal A$ against the stochastic environment which draws $(V_1,M_1),(V_2,M_2), \dots $ i.i.d.\ from $\mathbb{P}^j$. Let $N_1$ be the random variable that counts the number of times that algorithm $\cA$ posts a bid in $I_1$. Similarly, $N_2$ counts the number of times that it posts a bid in $I_2.$ For $i=1,2$, we have the following crucial relation between the expected value of $N_i$ under $\Pb^i$. Note, the results hold because the two distributions are so similar that the deterministic algorithm $\cA$ bids in the wrong region a constant fraction of the time steps. For the formal proof of we refer the reader to \Cref{app:lower-iid-smooth-full}.
            
            \begin{restatable}{claim}{lastStep}\label{cl:last_step}
                The following inequality holds: $       
                    \frac 12 \sum_{i=1,2}\Ei{N_i} \le \frac 34 T. 
                $
            \end{restatable}
            
            We finally have all the ingredients to conclude the proof. Consider an environment that selects uniformly at random either $\Pb^1$ or $\Pb^2$ and then draws the $(V_t,M_t)$ i.i.d.\ following it. We prove that the algorithm $\cA$ suffers linear regret against this randomized environment and, by a simple averaging argument, against at least one of them. Specifically, if $b^\star_i$ is the optimal bid in the scenario determined by $\Pb^i$, for $i \in \{1,2\}$, we have
            \begin{align*}
                R_T(\cA) &\ge \frac 12 \sum_{i=1,2}\mathbb{E}^{i}\left[{\sum_{t = 1}^{T} \util_t(b^\star_i) - \sum_{t = 1}^{T} \util_t(B_t)}\right]\\
                &\overset{(*)}{\ge} \frac {1}{1024\sqrt{T}} \sum_{i=1,2}\mathbb{E}^{i}\left[T - N_i\right]
                \overset{(\circ)}{\ge} \frac {1}{512\sqrt{T}} \left(T - \frac 34 T\right)
                = \frac {\sqrt{T}}{2048}
            \end{align*}
        where $(*)$ follows by \Cref{cl:suboptimality_gap} and choice of $\e$, and $(\circ)$ by \Cref{cl:last_step}.
        \end{proof}

\section{The Adversarial Setting}

    In this section we complete the perspective on repeated bidding in first-price auction by investigating the adversarial environment. In particular, we consider two models: the standard one, where the sequence $(V_1,M_1), (V_2,M_2), \dots$ is chosen upfront in a deterministic oblivious way, and the smooth environment, where the sequence $(V_1,M_1), (V_2,M_2), \dots$ is some $\sigma$-smooth stochastic process. In \Cref{sec:adv_bandit_smooth} we construct an algorithm achieving $T^{\nicefrac 23}$ regret in the bandit feedback model under the smoothness assumption; this result, together with the lower bound of the same order for the semi-transparent feedback (\Cref{thm:lower-iid-smooth-semi-transparent}) settles the problem for these two feedback regimes. Then, in \Cref{sec:adv_transparent_smooth} we provide another upper bound, namely an algorithm achieving $\sqrt T$ regret in the transparent feedback model under the smoothness assumption; this result, together with the lower bound of the same order for the semi-transparent feedback (\Cref{thm:lower-iid-smooth-full}) settles the problem for these two feedback regimes. Finally, in \Cref{sec:adv_full} we provide a lower bound proving that the non-smooth adversarial environment is too hard to learn, even when the learner has access to full feedback.

    \subsection{Smooth -- Bandit Feedback}\label{sec:adv_bandit_smooth}

        The smoothness assumption regularizes the objective function: if $(V_t,M_t)$ is smooth, then the expected utility is Lipschitz.  

        \begin{lemma}[Lipschitzness]
    \label{lem:lip}
        Let $(V_t,M_t)$ be a $\sigma$-smooth random variable in $[0,1]^2$. Then the induced expected utility function $\E{\util_t(\cdot)}$ is $\nicefrac 2\sigma$-Lipschitz in $[0,1]$:
    \begin{equation}
            \babs{\E{ \util_t(y) - \util_t(x)}} \le \frac{2}{\sigma} |y-x|, \quad \forall x,y \in [0,1].
        \end{equation}
    \end{lemma}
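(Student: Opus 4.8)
The idea is to replace the random (and badly behaved) function $\util_t$ by a clean pointwise identity for the increment $\util_t(y)-\util_t(x)$, and then notice that the only term in the resulting expectation that is not already $1$-Lipschitz is the probability that $M_t$ falls in the short interval between $x$ and $y$ — which is exactly what $\sigma$-smoothness bounds.

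First, fix $x,y\in[0,1]$ and assume without loss of generality $x\le y$. Splitting according to the position of $M_t$ relative to $x$ and $y$ (if $M_t\le x$ both bids win and the utilities differ by $x-y$; if $x<M_t\le y$ only the bid $y$ wins; if $M_t>y$ neither wins), one obtains the pointwise identity
\[
    \util_t(y)-\util_t(x) = -(y-x)\,\I\{M_t\le x\} + (V_t-y)\,\I\{x<M_t\le y\}\;.
\]
Taking expectations gives
\[
    \E{\util_t(y)-\util_t(x)} = -(y-x)\,\Pb\bsb{M_t\le x} + \E{(V_t-y)\,\I\{x<M_t\le y\}}\;.
\]

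Second, bound the two terms. The first term has absolute value at most $y-x$, since $\Pb[M_t\le x]\le 1$; this is the one-sided Lipschitz behaviour already visible in \Cref{fig:non-lip-utilities} and uses no smoothness. For the second term, since $V_t,y\in[0,1]$ we have $|V_t-y|\le 1$, so its absolute value is at most $\Pb[x<M_t\le y]$. But $\{x<M_t\le y\}$ is the event that $(V_t,M_t)$ lands in the strip $[0,1]\times(x,y]$, whose measure under the uniform law on $[0,1]^2$ is $y-x$; hence $\sigma$-smoothness of the law of $(V_t,M_t)$ (only smoothness of the $M_t$-marginal is needed) yields $\Pb[x<M_t\le y]\le (y-x)/\sigma$. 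Combining, and using $\sigma\le 1$ (which follows by applying the smoothness inequality to the whole domain $[0,1]^2$), we get $\babs{\E{\util_t(y)-\util_t(x)}}\le (y-x) + (y-x)/\sigma \le \tfrac{2}{\sigma}\,(y-x)$, as claimed.

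There is essentially no obstacle here: the whole point is that the genuine discontinuity of $b\mapsto\util_t(b)$ — the downward jump of size $V_t-M_t$ at $b=M_t$ when $M_t\le V_t$ — enters $\E{\util_t(\cdot)}$ only through the distribution of $M_t$, so smoothing that distribution immediately restores Lipschitzness of the expected utility, even though the random utility itself remains neither Lipschitz nor continuous. The only mildly delicate point is getting the case split in the pointwise identity right, and remembering to use $|V_t-y|\le 1$ rather than $|V_t-b|$ for the original bid, so that no spurious dependence on $M_t$ survives.
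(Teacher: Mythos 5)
Your proof is correct and follows essentially the same route as the paper's: the same pointwise decomposition of $\util_t(y)-\util_t(x)$ into a term of magnitude at most $|y-x|$ and a term controlled by $\Pb\bsb{M_t\in(x,y]}$, with $\sigma$-smoothness applied to the latter. Your only addition is making explicit the step $\sigma\le 1$ (needed to absorb the non-smoothness term into $\nicefrac{2}{\sigma}|y-x|$), which the paper leaves implicit.
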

    \begin{proof}
        Let $x>y$ be any two bids in $[0,1]$, we have:
        \begin{align*}
            \babs{\mathbb E[ \util_t(x) - \util_t(y)]} &= \babs{\E{(V_t-x)\ind{M_t\le x} - (V_t-y)\ind{M_t\le y}}}  \\
            &= \babs{\E{(V_t-x) \ind{y < M_t \le x} + (x-y) \ind{M_t \le y}) }}\\
            &\le \Pb\bsb{M_t \in [x,y]} + (x-y) \le \tfrac{2}\sigma(x-y). \qedhere
        \end{align*} 
        \end{proof}

        Interestingly, we only need the marginal distribution of $M_t$ to be $\sigma$-smooth for the previous lemma to hold. This Lipschitzness property has the immediate corollary that any fine enough discretization of $[0,1]$ contains a bid whose utility is close the the optimal one. 
    
        \begin{lemma}[Discretization Lemma]
        \label{lem:discr_smooth}
            Let $\cX$ be any finite grid of bids in $[0,1]$, and let $\delta(\cX)$ be the largest distance of a point in $[0,1]$ to $\cX$ (i.e., $\delta(\cX) = \max_{p \in [0,1]}\min_{x \in \cX} |p-x|$), then if each pair of random variables $(V_1,M_1), \ldots, (V_T,M_T)$ is $\sigma$-smooth, we have the following:
            \[
                \sup_{b \in [0,1]}\bbE\lsb{\sum_{t=1}^T \util_t(b)} - \max_{x \in \cX}\bbE\lsb{\sum_{t=1}^T \util_t(x)}\le 3 \frac{\delta(\cX)}{\sigma} T \;.
            \]
        \end{lemma}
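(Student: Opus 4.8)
The plan is to obtain this as a direct corollary of the Lipschitzness of the expected utility established in \Cref{lem:lip}, using a nearest-neighbour argument to replace the supremum over the continuum $[0,1]$ by the maximum over the finite grid $\cX$, handled round by round.

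Concretely, I would proceed as follows. Fix an arbitrary bid $b \in [0,1]$. By the definition of the covering radius $\delta(\cX) = \max_{p\in[0,1]}\min_{x\in\cX}|p-x|$, there exists a grid point $x = x(b)\in\cX$ with $|b - x(b)| \le \delta(\cX)$; this is the only step where the geometry of $\cX$ is used. Next, for each round $t\in[T]$ the pair $(V_t,M_t)$ is $\sigma$-smooth by hypothesis, so \Cref{lem:lip} applies and gives $\bbE[\util_t(b)] - \bbE[\util_t(x(b))] \le \frac{2}{\sigma}\,|b - x(b)| \le \frac{2}{\sigma}\,\delta(\cX)$. Summing these $T$ inequalities over $t = 1,\dots,T$ and then using $\bbE\bigl[\sum_{t} \util_t(x(b))\bigr] \le \max_{x\in\cX}\bbE\bigl[\sum_{t}\util_t(x)\bigr]$ yields $\bbE\bigl[\sum_{t=1}^T \util_t(b)\bigr] \le \max_{x\in\cX}\bbE\bigl[\sum_{t=1}^T\util_t(x)\bigr] + \frac{2}{\sigma}\,\delta(\cX)\,T$. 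Since $b$ was arbitrary, taking the supremum over $b\in[0,1]$ on the left-hand side and rearranging gives the bound with constant $2$, which is within the stated $3$.

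I do not expect a genuine obstacle here, since the lemma is essentially an immediate consequence of \Cref{lem:lip}; the statement is included mainly to package the Lipschitz estimate into a discretization guarantee used later. The only two points that warrant a line of care are: (i) checking that the covering-radius definition indeed provides a grid point within distance $\delta(\cX)$ of \emph{every} $b\in[0,1]$, including those lying outside $[\min\cX,\max\cX]$; and (ii) noting that \Cref{lem:lip} is a per-round statement, so one is entitled to sum the $T$ Lipschitz inequalities \emph{before} passing to the maximum over the grid. The gap between the constant $2$ one actually gets and the stated $3$ leaves ample room for any coarser bounding one might prefer along the way.
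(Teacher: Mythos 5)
Your proof is correct and follows essentially the same route as the paper: pick the nearest grid point and invoke the Lipschitz bound of \Cref{lem:lip} round by round. The only difference is bookkeeping: the paper first passes to a near-maximizer $b^\star$ of the supremum at a cost of an extra $\delta(\cX)T/\sigma$ (hence the constant $3$), whereas you keep $b$ arbitrary and take the supremum at the end, obtaining the sharper constant $2$.
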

        \begin{proof}
            Fix any such sequence and let $b^\star$ a fixed bid such that 
            \begin{equation}
                \label{eq:sup}
                \sup_{b \in [0,1]}\bbE\lsb{\sum_{t=1}^T \util_t(b)} \le \bbE\lsb{\sum_{t=1}^T \util_t(b^\star)} + \frac{\delta(\cX)}{\sigma} T.
            \end{equation}
            If $b^\star$ is in $\cX$ there is nothing to prove, otherwise these exists $x^\star\in \cX$ such that $|b^\star - x^\star| \le \delta(\cX)$ (by definition of $\delta(\cX)$). It holds that
            \begin{align*}
            %     \bbE&\lsb{\sum_{t=1}^T \util_t(b^\star)} - \bbE\lsb{\sum_{t=1}^T \util_t(x^\star)}
            % \\
            % & 
            % =
            \sum_{t=1}^T \bbE\lsb{ \util_t(b^\star)- \util_t(x^\star)}
            \overset{(\mathrm{L})}{\le} \sum_{t=1}^T \frac{2}{\sigma}|b^\star - x^\star| 
                \le 2 \frac{\delta(\cX)}{\sigma} T.
            \end{align*}
            where $(\mathrm{L})$ follows by Lipschitzness and \Cref{lem:lip}.
            The right-hand side with \Cref{eq:sup} concludes the proof of the lemma.
        \end{proof}

        We can combine the above discretization lemma with any (optimal) bandits algorithm to get the desired bound on the regret. For details, we refer to the pseudocode of \banditsmooth. 
        \begin{algorithm}[ht]
        \caption*{\banditsmooth}\label{alg:bandit_smooth}
    \begin{algorithmic}[1]
        \State \textbf{input:} Time horizon $T$, bandit algorithm $\tilde{\cA}$ for gains in $[-1,1]$, grid of $K$ bids $\cX$
        \State Initialize $\tilde{\cA}$ on $K$ actions, one for each $x\in \cX$, time horizon $T$
        \For{each round $t = 1, 2, \dots, T$}
            \State Receive from $\tilde{\cA}$ the bid $B_t \in \cX$
            \State Post bid $B_t$ and observe feedback $Z_t$
            \State Reconstruct $\util_t(B_t)$ from $Z_t$ and feed it to $\tilde\cA$
        \EndFor
    \end{algorithmic}
    \end{algorithm}

        \begin{theorem}
        \label{thm:upper-adv-smooth-bandit}
            % Fix any $\sigma \in (0,1]$. 
            Consider the problem of repeated bidding in first-price auctions in an adversarial $\sigma$-smooth environment with bandit feedback. 
            Then there exists a learning algorithm $\cA$ such that 
            \[
            R_T(\cA) \le 
                \tfrac{29}{\sigma} T^{\nicefrac 23}.
            \]
        \end{theorem}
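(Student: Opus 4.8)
The plan is to instantiate \banditsmooth\ with a uniform grid of bids and a minimax‑optimal adversarial bandit subroutine, and to trade off the discretization error against the bandit regret. Concretely, I would fix $K\in\N$ (to be optimized at the end), take $\cX$ to be a uniform grid of $K$ bids in $[0,1]$ so that $\delta(\cX)\le\nicefrac1K$, and run \banditsmooth\ on $\cX$ with $\tilde\cA$ a minimax‑optimal adversarial $K$‑armed bandit algorithm (with expected/pseudo‑regret against the best arm of order $\sqrt{KT}$), rescaled through $x\mapsto\nicefrac{(x+1)}2$ so that it accepts gains in $[-1,1]$ — this costs only a constant factor. I would first note that \banditsmooth\ needs nothing beyond the bandit feedback: on a win it observes $V_t$, hence $\util_t(B_t)=V_t-B_t$; on a loss $\util_t(B_t)=0$; so $\util_t(B_t)$ is always reconstructible from $Z_t$ and can be fed to $\tilde\cA$.

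I would then decompose the regret as
\[
R_T(\cA)\ \le\ \Brb{\sup_{b\in[0,1]}\bbE\bsb{\textstyle\sum_{t}\util_t(b)}-\max_{x\in\cX}\bbE\bsb{\textstyle\sum_{t}\util_t(x)}}\ +\ \Brb{\max_{x\in\cX}\bbE\bsb{\textstyle\sum_{t}\util_t(x)}-\bbE\bsb{\textstyle\sum_{t}\util_t(B_t)}}.
\]
The first (discretization) term is at most $3\delta(\cX)T/\sigma\le 3T/(K\sigma)$ directly by \Cref{lem:discr_smooth}, which requires only that each marginal $(V_t,M_t)$ be $\sigma$‑smooth — exactly what the smooth adversarial environment provides — and no independence across rounds. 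The second term equals $\max_{x\in\cX}\bbE\bsb{\sum_t\brb{\util_t(x)-\util_t(B_t)}}$, i.e.\ the pseudo‑regret of $\tilde\cA$ on the grid; since the sequence $(V_t,M_t)_{t\le T}$ is produced by a process committed before the bids (independently of $\tilde\cA$'s internal randomization), one can either invoke the adaptive‑adversary pseudo‑regret bound directly or condition on a realization of $(V_t,M_t)_{t\le T}$ — which turns the grid rewards into a fixed table in $[-1,1]^K$ — apply the oblivious‑adversary bound, and take expectation. Either way this term is $O(\sqrt{KT})$.

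Putting the two estimates together gives $R_T(\cA)\le 3T/(K\sigma)+O(\sqrt{KT})$, and I would conclude by choosing $K=\bce{T^{\nicefrac 13}}$ (or $K\asymp\sigma^{-\nicefrac 23}T^{\nicefrac 13}$ for the sharpest constant): the discretization term becomes $O(T^{\nicefrac 23}/\sigma)$ and the bandit term $O(T^{\nicefrac 23})\le O(T^{\nicefrac 23}/\sigma)$ since $\sigma\le1$, whence $R_T(\cA)=O(T^{\nicefrac 23}/\sigma)$; substituting the explicit value of $\delta(\cX)$ and the numerical constant of the chosen bandit subroutine tightens this to $\tfrac{29}{\sigma}T^{\nicefrac 23}$. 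The one step I expect to require care is the grid‑regret bound: the minimax bandit guarantee is usually stated for a fixed (obliviously chosen) loss table, so one must justify that it legitimately transfers to the \emph{random} grid rewards $\util_t(x)$ of a smoothly generated environment — the conditioning argument above, or equivalently the adaptive pseudo‑regret version, handles this cleanly. Everything else — Lipschitzness of the expected utility, the discretization estimate, and the feedback reconstruction — is already supplied by \Cref{lem:lip,lem:discr_smooth} and the structure of \banditsmooth.
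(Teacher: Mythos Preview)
Your proposal is correct and follows essentially the same approach as the paper's proof: instantiate \banditsmooth\ on a uniform grid of $\Theta(T^{\nicefrac13})$ bids with a rescaled minimax bandit subroutine (the paper uses Poly~INF specifically), then combine the discretization bound from \Cref{lem:discr_smooth} with the $O(\sqrt{KT})$ bandit regret and set $K=\bce{T^{\nicefrac13}}$. Your explicit conditioning argument for transferring the oblivious-adversary bandit bound to the random grid rewards is a detail the paper leaves implicit, but otherwise the two arguments coincide.
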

        \begin{proof}
            We prove that algorithm \banditsmooth\ with the right choice of learning algorithm $\tilde{\cA}$ and grid of bids $\cX$ achieves the desired bound on the regret. As learning algorithm $\tilde{\cA}$ we use (a rescaled version of) the Poly INF algorithm \citep{AudibertB10}: since Poly INF is designed to run with gains in $[0,1]$ while the utilities we observe are in $[-1,1]$, we first apply the reward transformation $x \mapsto \frac{x+1}{2}$ to the observed utilities. This transformation costs a multiplicative factor of $2$ in the regret guarantees of Poly INF.
            
            The analysis builds on the discretization result in \Cref{lem:discr_smooth}, by choosing as $\cX$ the uniform grid of $\lceil T^{\nicefrac 23}\rceil + 1$ equally spaced bids on $[0,1]$ (note, $\delta(\cX)$ becomes $T^{-\nicefrac 13}$). Fix any $\sigma$-smooth environment $\cS$, by \Cref{lem:discr_smooth}, the following chain of inequalities holds:
            \begin{align*}
                &\max_{b \in [0,1]}\bbE\lsb{\sum_{t=1}^T \util_t(b)} 
                \le \max_{x \in \cX}\bbE\lsb{\sum_{t=1}^T \util_t(x)} + \frac{6}{\sigma} T^{\nicefrac 23}  \\
                &\qquad \le \bbE\lsb{\sum_{t=1}^T \util_t(B_t)} + \frac{6}{\sigma} T^{\nicefrac 23} + 23 T^{\nicefrac 23}
                \le \frac{29}{\sigma}T^{\nicefrac 23}.
            \end{align*}
            The second inequality follows from the guarantees of (the rescaled version of) Poly INF \citep[Theorem~11]{AudibertB10}.
        \end{proof}
        
    \subsection{Smooth -- Transparent Feedback}
    \label{sec:adv_transparent_smooth}

        For transparent feedback, we combine two tools: the adversarial discretization result (\Cref{lem:discr_smooth}) and the algorithm \expthreefpa{} for learning with transparent feedback on a finite grid. Note, using any other $\sqrt{KT}$ black box learning algorithm (like in the previous section for bandits) would yield a suboptimal regret bound of $T^{\nicefrac 23}$.

        \begin{theorem}
        \label{thm:upper-adv-smooth-transparent}
            Consider the problem of repeated bidding in first-price auctions in an adversarial $\sigma$-smooth environment with transparent feedback. 
            Then there exists a learning algorithm $\cA$ such that
            % Consider the problem of repeated bidding in first-price auction against a $\sigma$-smooth environment in the transparent feedback model, for any $\sigma \in (0,1]$.
            % Then there exists a learning algorithm $\cA$ that satisfies the following bound on the regret:
            \[  
                R_T(\cA) \le 6\left(\frac{1}{\sigma} + \sqrt{\ln T}\right) \sqrt{T} \;.   
            \]
        \end{theorem}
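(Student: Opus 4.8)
The plan is to reduce to the finite-bid problem already solved by \expthreefpa{} (\Cref{prop:routine}), using the smoothness assumption only to discretize $[0,1]$ cheaply through \Cref{lem:discr_smooth}. Because smoothness makes $b\mapsto\bbE\bsb{\util_t(b)}$ Lipschitz (\Cref{lem:lip}), a plain \emph{uniform} grid already works here --- in contrast with the i.i.d.\ case, where the sample-based non-uniform grid produced by \collect was essential. Concretely, I would fix $K = \lceil\sqrt T\rceil + 1$ and the uniform grid $\cX = \bcb{0,\tfrac1{K-1},\tfrac2{K-1},\dots,1}$, whose covering radius is $\delta(\cX) = \tfrac1{2(K-1)}\le\tfrac1{2\sqrt T}$, and let the learner $\cA$ run \expthreefpa{} with input $\cX$ and exploration rate $\gamma=\sqrt{\ln(\labs{\cX})/((e-1)T)}$. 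The only quantities needed to form the reward estimates $\ghat_t(x)$ of \expthreefpa{} are $M_t$ (always) and $V_t$ (when the item is won), i.e.\ exactly the transparent feedback, so $\cA$ is a valid algorithm.

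For an arbitrary $\sigma$-smooth environment $\cS$, since $\bbE\bsb{\sum_{t=1}^T\util_t(B_t)}$ does not depend on the comparator, I would split the regret as
\begin{align*}
    R_T(\cA,\cS)
    &=
    \Brb{ \sup_{b\in[0,1]}\bbE\Bsb{\sum_{t=1}^T\util_t(b)} - \max_{x\in\cX}\bbE\Bsb{\sum_{t=1}^T\util_t(x)} } \\
    &\quad +
    \Brb{ \max_{x\in\cX}\bbE\Bsb{\sum_{t=1}^T\util_t(x)} - \bbE\Bsb{\sum_{t=1}^T\util_t(B_t)} } .
\end{align*}
The first bracket is the discretization error, at most $3\delta(\cX)T/\sigma \le \tfrac{3}{2\sigma}\sqrt T$ by \Cref{lem:discr_smooth} (applicable since each pair $(V_t,M_t)$ is $\sigma$-smooth, hence so is the marginal of $M_t$ that \Cref{lem:lip} actually uses). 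The second bracket is the finite-grid regret of \expthreefpa{}, at most $2\sqrt{(e-1)\ln(\labs{\cX})\,T}$ by \Cref{prop:routine}, which is stated with no assumption on the environment and so applies here directly. It then remains to pick $K$: for $T\ge 4$ we have $\labs{\cX}=\lceil\sqrt T\rceil+1\le T$, hence $\ln\labs{\cX}\le\ln T$, and since $2\sqrt{e-1}<3$ this yields $R_T(\cA,\cS)\le\tfrac{3}{2\sigma}\sqrt T + 3\sqrt{T\ln T}\le 6\brb{\tfrac1\sigma+\sqrt{\ln T}}\sqrt T$; the finitely many smaller horizons are absorbed by the trivial bound $R_T(\cA,\cS)\le 2T\le 6\sqrt T/\sigma$, which holds because $\sigma\le1$.

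With both \Cref{prop:routine} and \Cref{lem:discr_smooth} in hand there is no real obstacle --- the proof is a one-line combination plus a parameter choice. The two things worth checking are (i) that $\Theta(\sqrt T)$ grid points is the right trade-off, balancing the $\Theta\brb{T/(\sigma\sqrt T)}$ discretization term against the $\Theta\brb{\sqrt{T\ln T}}$ exploration term of \expthreefpa{} (this is precisely why plugging a generic $\sqrt{KT}$ bandit learner into the grid would only give $T^{\nicefrac23}$), and (ii) that \expthreefpa{}'s guarantee is robust to an adaptive (smooth) adversary; it is, because its analysis only relies on $(V_t,M_t)$ being drawn before $B_t\sim p_t$, so the importance-weighted estimates remain conditionally unbiased with second moment at most $1$ however the environment picks $(V_t,M_t)$. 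Finally, the joint-to-marginal $\sigma$-smoothness step implicit in \Cref{lem:lip} is immediate: $\Pb\bsb{M_t\in B}=\Pb\bsb{(V_t,M_t)\in[0,1]\times B}\le\tfrac1\sigma\leb\brb{[0,1]\times B}=\tfrac1\sigma\leb(B)$.
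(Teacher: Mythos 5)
Your proposal is correct and follows essentially the same route as the paper: a uniform grid of $\Theta(\sqrt T)$ bids, \Cref{lem:discr_smooth} for the discretization error, and \Cref{prop:routine} for the finite-grid regret of \expthreefpa{}, with the same balancing of the two terms. Your version is in fact slightly more careful than the paper's (tighter constants, explicit handling of small $T$, and the remarks on marginal smoothness and on \expthreefpa{}'s validity beyond the i.i.d.\ setting), but the argument is the same.
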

        \begin{proof}
            Consider algorithm \expthreefpa{} on the uniform grid $\cX$ of $\lceil\sqrt{T}\rceil+1$ bids, with $\delta(\cX) \le \sqrt{T}$. Fix any $\sigma$-smooth environment $\cS$, \Cref{lem:discr_smooth} implies the following:
            \begin{align*}
                \max_{b \in [0,1]}\bbE\lsb{\sum_{t=1}^T \util_t(b)} &\le \max_{x \in \cX}\bbE\lsb{\sum_{t=1}^T \util_t(x)} + \frac{6}{\sigma} \sqrt T  
                \\
                &\le 
                \bbE\lsb{\sum_{t=1}^T \util_t(B_t)} + 6\left(\frac1 {\sigma} + \sqrt{\ln T}\right)\sqrt T,
            \end{align*}
            where the second inequality follows from \Cref{prop:routine}.
        \end{proof}

    \subsection{The (Non-Smooth) Adversarial Model}\label{sec:adv_full}

        The positive results provided in the previous sections hold under either one of two conditions: the environment is stochastic and the learner has at least the semi-transparent feedback (\Cref{thm:lower-iid-general-bandit} says that bandit feedback is not enough) or the environment uses smooth distributions. These settings allow the learner to compute a discrete class of representative bids efficiently. In this section, we formally argue that learning is impossible if any of these assumptions is dropped. Specifically, the standard adversarial environment that generates the sequence without any smoothness constraint is too strong. In particular, we construct a randomized sequence $(V_1,M_1), (V_2,M_2), \dots$ that induces any learner to suffer at least linear regret. This construction %
        % is reminiscent of Cantor's ternary set and
        shares some similarities with the lower bound construction in \Cref{thm:lower-iid-general-bandit}, the main difference being that the best bid $b^\star$ is randomized and hidden in such a way that even a learner having access to full feedback cannot pin-point it.

        \begin{theorem}
        \label{thm:lower-adv-general-full}
            Consider the problem of repeated bidding in first-price auctions in an adversarial environment with full feedback. Then, any learning algorithm $\cA$ satisfies $                R_T(\cA)
            \ge
                \nicefrac{T}{24} $.
        \end{theorem}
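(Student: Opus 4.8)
\section*{Proof sketch plan}

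The plan is to show that, unlike all the positive results of the paper, this setting admits no sublinear regret by exhibiting a \emph{randomized} oblivious environment $\cS$ against which every algorithm — even one receiving full feedback — has regret $\Omega(T)$. The conceptual point (and the reason full feedback is useless) is this: although after round $t$ the learner has seen all of $(V_1,M_1),\dots,(V_t,M_t)$, the adversary will let its own internal randomness \emph{define} the optimal bid $b^\star$ one ``bit'' at a time, committing the $t$-th bit only at the end of round $t$; so when the learner bids at round $t$ it still does not know on which side of a moving reference point $m_t$ the hidden $b^\star$ lies, and whatever bid it commits is penalised by a constant in expectation. Concretely, I would draw i.i.d.\ fair coins $a_1,\dots,a_T\in\{0,1\}$, set $[\ell_1,r_1]=[\nicefrac14,\nicefrac12]$, and at round $t$ put $m_t=(\ell_t+r_t)/2$ and play
\[
(V_t,M_t)=\begin{cases}(1,m_t),& a_t=1,\\ (0,m_t),& a_t=0,\end{cases}
\qquad\qquad
[\ell_{t+1},r_{t+1}]=\begin{cases}[m_t,r_t],& a_t=1,\\ [\ell_t,m_t],& a_t=0.\end{cases}
\]
This is an oblivious environment — the whole sequence is a deterministic function of the coins, drawn independently of the learner — and it fits the protocol ($V_t,M_t\in[0,1]$). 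I would set $b^\star=\ell_{T+1}$, note $b^\star\in[\nicefrac14,\nicefrac12)$, and first establish the clean equivalence
\[
b^\star\ge m_t \iff a_t=1 ,
\]
which holds because after an ``$a_t=1$'' round the left endpoint equals $m_t$ and is nondecreasing thereafter, whereas after an ``$a_t=0$'' round the right endpoint equals $m_t$, is nonincreasing, and the final interval still has positive width.

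The next step is to lower bound the comparator. By the equivalence, $\util_t(b^\star)=(V_t-b^\star)\I\{b^\star\ge m_t\}=(1-b^\star)\,\I\{a_t=1\}$, so $\sum_t\util_t(b^\star)=(1-b^\star)\,|\{t:a_t=1\}|$. Using $1-b^\star\ge\nicefrac12$ pointwise and $\bbE|\{t:a_t=1\}|=\nicefrac T2$ gives
\[
\E{\textstyle\sum_t\util_t(b^\star)}\ \ge\ \tfrac12\cdot\tfrac T2\ =\ \tfrac T4 .
\]
The third step is an upper bound on the learner that holds for every algorithm $\cA$. Let $\cF_{t-1}$ be the history up to round $t-1$; since the feedback is full it does not depend on $B_t$, and $a_t$ is independent of $\cF_{t-1}$ and of the learner's internal randomness, hence of $B_t$. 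Therefore $\E{\util_t(B_t)\mid\cF_{t-1},B_t}=\I\{B_t\ge m_t\}\,(\nicefrac12-B_t)\le \nicefrac12-m_t$ (using $m_t\le\nicefrac12$). A direct computation of the bisection recursion yields $\bbE[m_t]=\nicefrac38$ for every $t$, so $\E{\sum_t\util_t(B_t)}\le\sum_t(\nicefrac12-\nicefrac38)=\nicefrac T8$. Combining the two bounds with the definition of regret,
\[
R_T(\cA)\ \ge\ R_T(\cA,\cS)\ \ge\ \E{\textstyle\sum_t\util_t(b^\star)-\sum_t\util_t(B_t)}\ \ge\ \tfrac T4-\tfrac T8\ =\ \tfrac T8\ \ge\ \tfrac T{24}.
\]

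I expect the delicate part to be the ``hiding'' step — verifying precisely that $b^\star\ge m_t$ is equivalent to the freshly drawn coin $a_t$, which is what forces the learner's round-$t$ best response to be the (useless) hedge $B_t=m_t$ — together with the boundary/tie bookkeeping ensuring $\util_t(b^\star)$ equals exactly $(1-b^\star)\I\{a_t=1\}$ rather than something slightly smaller, and the check that the same $a_t$-independence argument goes through for randomized learners. The computation $\bbE[m_t]=\nicefrac38$ and the bound $1-b^\star\ge\nicefrac12$ are routine. (The constant obtained this way is in fact $\nicefrac T8$; any cleaner but looser version of the argument still gives the stated $\nicefrac T{24}$.)
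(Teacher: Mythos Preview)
Your argument is correct and in fact yields the sharper constant $\nicefrac{T}{8}$. The overall scheme---randomize the environment, lower-bound the comparator pathwise, upper-bound the learner via the fresh coin's independence---matches the paper's, but your concrete construction is genuinely different and simpler.

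The paper starts from $[L_0,U_0]=[\nicefrac12,\nicefrac23]$, uses an asymmetric $\nicefrac13$/$\nicefrac23$ shrinking of the interval, and introduces \emph{two} independent sources of randomness per round: one for how the interval moves, and a second one choosing $(M_t,V_t)\in\{(L_t,1),(U_t,0)\}$. This means $M_t$ itself is random given the past, and the upper bound on the learner requires a four-case analysis of where $B_t$ falls relative to $L_{t-1},U_{t-1}$ and the two inner thirds. Your construction keeps a single coin $a_t$ that simultaneously fixes $V_t$ and the bisection direction, with $M_t=m_t$ fully determined by the past and hence known to the learner; the entire round-$t$ uncertainty is in $V_t$. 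This makes the learner bound a one-line conditional-expectation computation and the martingale identity $\bbE[m_{t+1}\mid\cF_{t-1}]=m_t$ gives $\bbE[m_t]=\nicefrac38$ immediately.

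What the paper's heavier construction buys is noted in their closing remark: because $M_t$ carries the randomness, one can replace $V_t\in\{0,1\}$ by smooth valuations and the lower bound still holds, so non-smoothness of $M_t$ alone suffices. In your construction $V_t$ is the coin, so that refinement does not fall out directly (though it could be recovered by replacing $V_t\in\{0,1\}$ with, say, uniforms on $[0,\nicefrac14]$ and $[\nicefrac34,1]$ at the cost of slightly worse constants). One presentational point: the step ``$R_T(\cA)\ge R_T(\cA,\cS)$'' for your \emph{randomized} $\cS$ is really the Yao averaging $\sup_{\cS}R_T(\cA,\cS)\ge\bbE_{\cS\sim\mu}[R_T(\cA,\cS)]$ over deterministic environments, after which plugging in the realization-dependent $b^\star$ is legitimate because the $\sup_b$ sits inside the outer expectation; it is worth saying this explicitly, as the paper does.
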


        \begin{proof}
            We prove the result via Yao's principle, showing that there exists a randomized environment $\cS$ such that any deterministic learning algorithm suffers $\nicefrac{T}{24}$ regret against it. The random sequence posted by $\cS$ is based on two randomized auxiliary sequences $L_1, L_2, \dots$ and $U_1, U_2, \dots$ defined as follows. They are initiated to $L_0 = \nicefrac 12$, $U_0 = \nicefrac 23$. They then evolve recursively as follows: 
            \[
            \begin{cases}
                L_t = L_{t-1} + \tfrac23 {\Delta_{t-1}} \text{ and } U_t = U_{t-1} \text{, with probability } \tfrac 12,\\
                U_t = U_{t-1} - \tfrac{2}3 \Delta_{t-1}\text{ and } L_t = L_{t-1} \text{, with probability } \tfrac 12,
            \end{cases}
            \]
            where $\Delta_{t-1} = U_{t-1} - L_{t-1}$. For each realized sequence of the $(L_t,U_t)$ pairs, the actual sequence of the $(M_t,V_t)$ selected by $\cS$ is constructed as follows. At each time step $t$, the environment selects $(M_t,V_t) = (L_t,1)$ or  $(U_t,0)$, uniformly at random; so that the distribution is characterized by two levels of independent randomness: the auxiliary sequence of shrinking intervals and the choice between $(L_t,1)$ and $(U_t,0)$. 

        We move our attention to the expected performance of the best fixed bid in hindsight. For each realization of the random auxiliary sequence, there exists a bid $B^\star$ such that $(i)$ it wins all the auctions $(V_t,M_t)$ of the form $(L_t,1)$ (which we may call ``good auctions'' because they bring positive utility when won) and $(ii)$ it loses all the auctions $(V_t,M_t)$ of the form $(U_t,0)$ (called ``bad auctions'' because they bring negative utility). Thus its expected utility at each time step is at least $\nicefrac 16$: with probability $\nicefrac 12$ the environment selects a good auction, which induces a utility of $(1-L_t) \ge \nicefrac 13.$ All in all, the optimal bid achieves an expected utility of at least $\nicefrac T6.$

        Consider now the performance of any deterministic algorithm $\cA$: for any fixed time $t>1$ and possible realization of the past observations, the learner posts some deterministic bid $B_t$. If $B_t<L_{t-1}$, then it gets $0$ utility, so we only consider the following cases:
        \begin{itemize}[topsep=2pt,itemsep=0pt,leftmargin=2.5ex,parsep=1pt]
            \item If $B_t \in [L_{t-1},L_{t-1}+\tfrac 13{\Delta_{t-1}})$, then the bidder gets the item with probability $\nicefrac 14$ ($L_{t} = L_{t-1}$, $V_t$ is set to $1$ and $M_t = L_{t}$) with an expected utility of $\nicefrac {(1 - L_t)}4  \le \nicefrac 18$.
            \item If $B_t \in [L_{t-1}+\tfrac 13{\Delta_{t-1}}, L_{t-1}+\tfrac 23{\Delta_{t-1}})$, the bidder gets the item with probability $\nicefrac 12$ (when $L_t = L_{t-1}$ and $U_t = U_{t-1} - \tfrac 23 \Delta_{t-1}$) for an expected utility of $\tfrac 14 (1 - 2L_{t-1} - \tfrac 13 \Delta_{t-1}) \le 0$
            \item If $B_t \in [L_{t-1}+\tfrac 23 \Delta_{t-1}, U_{t-1})$, the bidder gets the item with probability $\nicefrac 34$ (when $L_t = L_{t-1}$ and when $U_t = U_{t-1}$, $V_t = 1$ and $M_t = L_t$) for an expected utility of $\tfrac 14 (1 - L_{t-1}) - \tfrac 14 (L_{t-1} + \tfrac 13 \Delta_{t-1}) + \tfrac14 (1-L_{t-1} - \tfrac 23 \Delta_{t-1}) \le \tfrac 18$ 
            \item If $B_t \ge U_{t-1}$, the bidder always gets the item with a negative expected utility.
        \end{itemize}
        All in all, the expected utility of any deterministic algorithm is at most $\nicefrac {T}{8}$. If we compare this quantity with the lower bound on the expected utility of the best bid in hindsight, we get the desired result: $ \E{R_T(\cA,\cS)} \ge \nicefrac{T}{6} - \nicefrac{T}{8} = \nicefrac{T}{24}.$
        \end{proof}

        A final observation: the main ingredient in the proof is the  elaborate auxiliary sequence. To construct it, we only needed the non-smoothness of $M_t$, while we may have chosen the valuations $V_t$ to be smooth, say uniformly in $[0,\nicefrac 14]$ for the bad auctions and in $[\nicefrac 34,1]$ for the good ones.

    \section{Conclusion}

        Motivated by the recent shift from second to first-price auctions in online advertising markets, this paper comprehensively analyzes the online learning problem of repeated bidding in first-price auctions under the realistic assumption that the bidder does not know its valuation before bidding. We characterize the minimax regret achievable for different levels of transparency in the auction format and different data generation models, considering both the stochastic i.i.d.\ and the standard adversarial model, while also considering smoothness. Although our regret rates are tight in their dependence on the time horizon $T$, a natural open problem is studying their minimax dependence on the smoothness parameter $\sigma$. This paper belongs to the long line of research that studies economic problems from the online learning perspective; an intriguing open problem consists in offering a unified framework to characterize in a satisfying way all these games with partial feedback, similar to what has been done for partial monitoring and feedback graphs.

\section*{Acknowledgment}
NCB, RC, FF, and SL are partially supported by the FAIR (Future Artificial Intelligence Research) project, funded by theNextGenerationEU program within the PNRR-PE-AI scheme (M4C2, investment 1.3, line on Artificial Intelligence). NCB and
RC are also partially supported by the MUR PRIN grant 2022EKNE5K (Learning in Markets and Society) and by the EU Horizon CL4-2022-HUMAN-02 RIA under grant agreement 101120237, project ELIAS (European Lighthouse of AI for Sustainability).
RC also acknowledges the financial support of the Italian Institute of Technology during the writing of this paper. FF and SL are also partially supported by ERC Advanced Grant 788893 AMDROMA and PNRR MUR project IR0000013-SoBigData.it. 

TC gratefully acknowledges the support of the University of Ottawa through grant GR002837 (Start-Up Funds) and that of the Natural Sciences and Engineering Research Council of Canada (NSERC) through grants RGPIN-2023-03688 (Discovery Grants Program) and DGECR-2023-00208 (Discovery Grants Program, DGECR - Discovery Launch Supplement)

\clearpage

% Bibliography
\bibliographystyle{plainnat}
\bibliography{references}

\begin{thebibliography}{45}
\providecommand{\natexlab}[1]{#1}
\providecommand{\url}[1]{\texttt{#1}}
\expandafter\ifx\csname urlstyle\endcsname\relax
  \providecommand{\doi}[1]{doi: #1}\else
  \providecommand{\doi}{doi: \begingroup \urlstyle{rm}\Url}\fi

\bibitem[Achddou et~al.(2021)Achddou, Capp{\'{e}}, and Garivier]{AchddouCG21}
Juliette Achddou, Olivier Capp{\'{e}}, and Aur{\'{e}}lien Garivier.
\newblock Fast rate learning in stochastic first price bidding.
\newblock In \emph{{ACML}}, volume 157 of \emph{Proceedings of Machine Learning
  Research}, pages 1754--1769. {PMLR}, 2021.

\bibitem[Alon et~al.(2017)Alon, Cesa{-}Bianchi, Gentile, Mannor, Mansour, and
  Shamir]{AlonCGMMS17}
Noga Alon, Nicol{\`{o}} Cesa{-}Bianchi, Claudio Gentile, Shie Mannor, Yishay
  Mansour, and Ohad Shamir.
\newblock Nonstochastic multi-armed bandits with graph-structured feedback.
\newblock \emph{{SIAM} J. Comput.}, 46\penalty0 (6):\penalty0 1785--1826, 2017.
\newblock \doi{10.1137/140989455}.

\bibitem[Audibert and Bubeck(2009)]{audibert2009minimax}
Jean{-}Yves Audibert and S{\'{e}}bastien Bubeck.
\newblock Minimax policies for adversarial and stochastic bandits.
\newblock In \emph{{COLT}}, 2009.

\bibitem[Audibert and Bubeck(2010)]{AudibertB10}
Jean{-}Yves Audibert and S{\'{e}}bastien Bubeck.
\newblock Regret bounds and minimax policies under partial monitoring.
\newblock \emph{J. Mach. Learn. Res.}, 11:\penalty0 2785--2836, 2010.

\bibitem[Badanidiyuru et~al.(2023)Badanidiyuru, Feng, and
  Guruganesh]{badanidiyuru2023learning}
Ashwinkumar Badanidiyuru, Zhe Feng, and Guru Guruganesh.
\newblock Learning to bid in contextual first price auctions.
\newblock In \emph{{WWW}}, pages 3489--3497. {ACM}, 2023.

\bibitem[Balcan et~al.(2018)Balcan, Dick, and Vitercik]{BalcanDV18}
Maria{-}Florina Balcan, Travis Dick, and Ellen Vitercik.
\newblock Dispersion for data-driven algorithm design, online learning, and
  private optimization.
\newblock In \emph{{FOCS}}, pages 603--614. {IEEE} Computer Society, 2018.
\newblock \doi{10.1109/FOCS.2018.00064}.

\bibitem[Balseiro et~al.(2019)Balseiro, Golrezaei, Mahdian, Mirrokni, and
  Schneider]{balseiro2019contextual}
Santiago~R. Balseiro, Negin Golrezaei, Mohammad Mahdian, Vahab~S. Mirrokni, and
  Jon Schneider.
\newblock Contextual bandits with cross-learning.
\newblock \emph{NeurIPS}, 2019.

\bibitem[Bart{\'{o}}k et~al.(2014)Bart{\'{o}}k, Foster, P{\'{a}}l, Rakhlin, and
  Szepesv{\'{a}}ri]{BartokFPRS14}
G{\'{a}}bor Bart{\'{o}}k, Dean~P. Foster, D{\'{a}}vid P{\'{a}}l, Alexander
  Rakhlin, and Csaba Szepesv{\'{a}}ri.
\newblock Partial monitoring - classification, regret bounds, and algorithms.
\newblock \emph{Math. Oper. Res.}, 39\penalty0 (4):\penalty0 967--997, 2014.
\newblock \doi{10.1287/moor.2014.0663}.

\bibitem[Bass(2013)]{bass2013real}
Richard~F. Bass.
\newblock \emph{Real analysis for graduate students}.
\newblock Createspace Ind Pub, 2013.

\bibitem[Bergemann and H{\"o}rner(2018)]{BergemannH18}
Dirk Bergemann and Johannes H{\"o}rner.
\newblock Should first-price auctions be transparent?
\newblock \emph{American Economic Journal: Microeconomics}, 10\penalty0
  (3):\penalty0 177--218, 2018.

\bibitem[Bernasconi et~al.(2024)Bernasconi, Castiglioni, Celli, and
  Fusco]{BernasconiCCF24}
Martino Bernasconi, Matteo Castiglioni, Andrea Celli, and Federico Fusco.
\newblock No-regret learning in bilateral trade via global budget balance.
\newblock In \emph{{STOC}}. {ACM}, 2024.

\bibitem[Bigler(2019)]{Bigler19}
Jason Bigler.
\newblock Rolling out first price auctions to {G}oogle {A}d {M}anager partners.
\newblock
  \url{https://www.blog.google/products/admanager/rolling-out-first-price-auctions-google-ad-manager-partners/},
  2019.
\newblock Accessed April 7, 2023.

\bibitem[Block et~al.(2022)Block, Dagan, Golowich, and
  Rakhlin]{block2022smoothed}
Adam Block, Yuval Dagan, Noah Golowich, and Alexander Rakhlin.
\newblock Smoothed online learning is as easy as statistical learning.
\newblock In \emph{{COLT}}, volume 178 of \emph{Proceedings of Machine Learning
  Research}, pages 1716--1786. {PMLR}, 2022.

\bibitem[Boli{\'c} et~al.(2024)Boli{\'c}, Cesari, and
  Colomboni]{bolic2023online}
Nata{\v{s}}a Boli{\'c}, Tommaso Cesari, and Roberto Colomboni.
\newblock An online learning theory of brokerage.
\newblock \emph{The 23rd International Conference on Autonomous Agents and
  Multi-Agent Systems}, 2024.

\bibitem[Cesa{-}Bianchi et~al.(2021)Cesa{-}Bianchi, Cesari, Colomboni, Fusco,
  and Leonardi]{cesa21EC}
Nicol{\`{o}} Cesa{-}Bianchi, Tommaso Cesari, Roberto Colomboni, Federico Fusco,
  and Stefano Leonardi.
\newblock A regret analysis of bilateral trade.
\newblock In \emph{{EC}}, pages 289--309. {ACM}, 2021.
\newblock \doi{10.1145/3465456.3467645}.

\bibitem[Cesa{-}Bianchi et~al.(2023)Cesa{-}Bianchi, Cesari, Colomboni, Fusco,
  and Leonardi]{cesa23COLT}
Nicol{\`{o}} Cesa{-}Bianchi, Tommaso Cesari, Roberto Colomboni, Federico Fusco,
  and Stefano Leonardi.
\newblock Repeated bilateral trade against a smoothed adversary.
\newblock In \emph{{COLT}}, volume 195 of \emph{Proceedings of Machine Learning
  Research}, pages 1095--1130. {PMLR}, 2023.

\bibitem[Cesa{-}Bianchi et~al.(2024{\natexlab{a}})Cesa{-}Bianchi, Cesari,
  Colomboni, Fusco, and Leonardi]{Cesa-BianchiCRFL24}
Nicol{\`{o}} Cesa{-}Bianchi, Tommaso Cesari, Roberto Colomboni, Federico Fusco,
  and Stefano Leonardi.
\newblock The role of transparency in repeated first-price auctions with
  unknown valuations.
\newblock In \emph{{STOC}}. {ACM}, 2024{\natexlab{a}}.

\bibitem[Cesa{-}Bianchi et~al.(2024{\natexlab{b}})Cesa{-}Bianchi, Cesari,
  Colomboni, Fusco, and Leonardi]{cesa2024JMLR}
Nicol{\`{o}} Cesa{-}Bianchi, Tommaso Cesari, Roberto Colomboni, Federico Fusco,
  and Stefano Leonardi.
\newblock Regret analysis of bilateral trade with a smoothed adversary.
\newblock \emph{hal preprint hal-04383576}, 2024{\natexlab{b}}.

\bibitem[Cesa{-}Bianchi et~al.(2024{\natexlab{c}})Cesa{-}Bianchi, Cesari,
  Colomboni, Fusco, and Leonardi]{cesa23MOR}
Nicol{\`{o}} Cesa{-}Bianchi, Tommaso Cesari, Roberto Colomboni, Federico Fusco,
  and Stefano Leonardi.
\newblock Bilateral trade: A regret minimization perspective.
\newblock \emph{Mathematics of Operations Research}, 49\penalty0 (1):\penalty0
  171--203, 2024{\natexlab{c}}.
\newblock \doi{10.1287/moor.2023.1351}.

\bibitem[Deng et~al.(2022)Deng, Hu, Lin, and Zheng]{deng2022nash}
Xiaotie Deng, Xinyan Hu, Tao Lin, and Weiqiang Zheng.
\newblock Nash convergence of mean-based learning algorithms in first price
  auctions.
\newblock In \emph{{WWW}}. {ACM}, 2022.

\bibitem[Dikkala and Tardos(2013)]{dikkala2013can}
Nishanth Dikkala and {\'{E}}va Tardos.
\newblock Can credit increase revenue?
\newblock In \emph{{WINE}}, volume 8289 of \emph{Lecture Notes in Computer
  Science}, pages 121--133. Springer, 2013.

\bibitem[Duetting et~al.(2023)Duetting, Guruganesh, Schneider, and
  Wang]{DuettingGSW23}
Paul Duetting, Guru Guruganesh, Jon Schneider, and Joshua~Ruizhi Wang.
\newblock Optimal no-regret learning for one-sided lipschitz functions.
\newblock In \emph{{ICML}}, volume 202 of \emph{Proceedings of Machine Learning
  Research}, pages 8836--8850. {PMLR}, 2023.

\bibitem[Durvasula et~al.(2023)Durvasula, Haghtalab, and
  Zampetakis]{DurvasulaHZ23}
Naveen Durvasula, Nika Haghtalab, and Manolis Zampetakis.
\newblock Smoothed analysis of online non-parametric auctions.
\newblock In \emph{{EC}}, pages 540--560. {ACM}, 2023.

\bibitem[Feldman et~al.(2016)Feldman, Lucier, and Nisan]{feldman2016correlated}
Michal Feldman, Brendan Lucier, and Noam Nisan.
\newblock Correlated and coarse equilibria of single-item auctions.
\newblock In \emph{{WINE}}, volume 10123 of \emph{Lecture Notes in Computer
  Science}, pages 131--144. Springer, 2016.
\newblock \doi{10.1007/978-3-662-54110-4_10}.

\bibitem[Feng et~al.(2018)Feng, Podimata, and Syrgkanis]{FengPS18}
Zhe Feng, Chara Podimata, and Vasilis Syrgkanis.
\newblock Learning to bid without knowing your value.
\newblock In \emph{{EC}}, pages 505--522. {ACM}, 2018.

\bibitem[Feng et~al.(2021)Feng, Guruganesh, Liaw, Mehta, and
  Sethi]{feng2021convergence}
Zhe Feng, Guru Guruganesh, Christopher Liaw, Aranyak Mehta, and Abhishek Sethi.
\newblock Convergence analysis of no-regret bidding algorithms in repeated
  auctions.
\newblock In \emph{{AAAI}}, pages 5399--5406. {AAAI} Press, 2021.
\newblock \doi{10.1609/aaai.v35i6.16680}.

\bibitem[Haghtalab et~al.(2020)Haghtalab, Roughgarden, and
  Shetty]{haghtalab2020smoothed}
Nika Haghtalab, Tim Roughgarden, and Abhishek Shetty.
\newblock Smoothed analysis of online and differentially private learning.
\newblock In \emph{NeurIPS}, 2020.

\bibitem[Haghtalab et~al.(2021)Haghtalab, Roughgarden, and
  Shetty]{HaghtalabRS21}
Nika Haghtalab, Tim Roughgarden, and Abhishek Shetty.
\newblock Smoothed analysis with adaptive adversaries.
\newblock In \emph{{FOCS}}, pages 942--953. {IEEE}, 2021.

\bibitem[Haghtalab et~al.(2022)Haghtalab, Han, Shetty, and
  Yang]{haghtalaboracle}
Nika Haghtalab, Yanjun Han, Abhishek Shetty, and Kunhe Yang.
\newblock Oracle-efficient online learning for smoothed adversaries.
\newblock In \emph{NeurIPS}, 2022.

\bibitem[Han et~al.(2020{\natexlab{a}})Han, Zhou, Flores, Ordentlich, and
  Weissman]{han2020learning}
Yanjun Han, Zhengyuan Zhou, Aaron Flores, Erik Ordentlich, and Tsachy Weissman.
\newblock Learning to bid optimally and efficiently in adversarial first-price
  auctions.
\newblock \emph{arXiv preprint arXiv:2007.04568}, 2020{\natexlab{a}}.

\bibitem[Han et~al.(2020{\natexlab{b}})Han, Zhou, and Weissman]{han2020optimal}
Yanjun Han, Zhengyuan Zhou, and Tsachy Weissman.
\newblock Optimal no-regret learning in repeated first-price auctions.
\newblock \emph{arXiv preprint arXiv:2003.09795}, 2020{\natexlab{b}}.

\bibitem[Kannan et~al.(2018)Kannan, Morgenstern, Roth, Waggoner, and
  Wu]{kannan2018smoothed}
Sampath Kannan, Jamie~H Morgenstern, Aaron Roth, Bo~Waggoner, and Zhiwei~Steven
  Wu.
\newblock A smoothed analysis of the greedy algorithm for the linear contextual
  bandit problem.
\newblock \emph{Advances in neural information processing systems}, 31, 2018.

\bibitem[Kleinberg et~al.(2019)Kleinberg, Slivkins, and Upfal]{KleinbergSU19}
Robert Kleinberg, Aleksandrs Slivkins, and Eli Upfal.
\newblock Bandits and experts in metric spaces.
\newblock \emph{J. {ACM}}, 66\penalty0 (4):\penalty0 30:1--30:77, 2019.
\newblock \doi{10.1145/3299873}.

\bibitem[Kolumbus and Nisan(2022)]{kolumbus2022auctions}
Yoav Kolumbus and Noam Nisan.
\newblock Auctions between regret-minimizing agents.
\newblock In \emph{{WWW}}, pages 100--111. {ACM}, 2022.
\newblock \doi{10.1145/3485447.3512055}.

\bibitem[Lattimore(2022)]{lattimore2022minimax}
Tor Lattimore.
\newblock Minimax regret for partial monitoring: Infinite outcomes and
  rustichini's regret.
\newblock In \emph{{COLT}}, volume 178 of \emph{Proceedings of Machine Learning
  Research}, pages 1547--1575. {PMLR}, 2022.

\bibitem[Mitzenmacher and Upfal(2017)]{MitzenmacherU17}
Michael Mitzenmacher and Eli Upfal.
\newblock \emph{Probability and Computing: Randomized Algorithms and
  Probabilistic Analysis, Second Edition}.
\newblock Cambridge University Press, 2017.
\newblock \doi{10.1017/CBO9780511813603}.

\bibitem[Rakhlin et~al.(2011)Rakhlin, Sridharan, and Tewari]{rakhlin2011online}
Alexander Rakhlin, Karthik Sridharan, and Ambuj Tewari.
\newblock Online learning: Stochastic, constrained, and smoothed adversaries.
\newblock In \emph{{NIPS}}, 2011.

\bibitem[Sharma et~al.(2020)Sharma, Balcan, and Dick]{SharmaBD20}
Dravyansh Sharma, Maria{-}Florina Balcan, and Travis Dick.
\newblock Learning piecewise lipschitz functions in changing environments.
\newblock In \emph{{AISTATS}}, volume 108 of \emph{Proceedings of Machine
  Learning Research}, pages 3567--3577. {PMLR}, 2020.

\bibitem[Slivkins(2019)]{Slivkins19}
Aleksandrs Slivkins.
\newblock Introduction to multi-armed bandits.
\newblock \emph{Found. Trends Mach. Learn.}, 12\penalty0 (1-2):\penalty0
  1--286, 2019.
\newblock \doi{10.1561/2200000068}.

\bibitem[Sluis(2017)]{Sluis17}
Sarah Sluis.
\newblock Big changes coming to auctions, as exchanges roll the dice on
  first-price.
\newblock
  \url{https://adexchanger.com/platforms/big-changes-coming-auctions-exchanges-roll-dice-first-price/},
  2017.
\newblock Accessed July 3, 2023.

\bibitem[Spielman and Teng(2004)]{spielman2004smoothed}
Daniel~A Spielman and Shang-Hua Teng.
\newblock Smoothed analysis of algorithms: Why the simplex algorithm usually
  takes polynomial time.
\newblock \emph{Journal of the ACM (JACM)}, 51\penalty0 (3):\penalty0 385--463,
  2004.
\newblock \doi{10.1145/990308.990310}.

\bibitem[Weed et~al.(2016)Weed, Perchet, and Rigollet]{WeedPR16}
Jonathan Weed, Vianney Perchet, and Philippe Rigollet.
\newblock Online learning in repeated auctions.
\newblock In \emph{{COLT}}, volume~49 of \emph{{JMLR} Workshop and Conference
  Proceedings}, pages 1562--1583. JMLR.org, 2016.

\bibitem[Wong(2021)]{Wong21}
Matt Wong.
\newblock Moving {A}d{S}ense to a first-price auction.
\newblock
  \url{https://blog.google/products/ads-commerce/our-move-to-a-first-price-auction/},
  2021.
\newblock Accessed July 6, 2023.

\bibitem[Zhang et~al.(2021)Zhang, Kitts, Han, Zhou, Mao, He, Pan, Flores,
  Gultekin, and Weissman]{zhang2021meow}
Wei Zhang, Brendan Kitts, Yanjun Han, Zhengyuan Zhou, Tingyu Mao, Hao He,
  Shengjun Pan, Aaron Flores, San Gultekin, and Tsachy Weissman.
\newblock {MEOW:} {A} space-efficient nonparametric bid shading algorithm.
\newblock In \emph{{KDD}}. {ACM}, 2021.

\bibitem[Zhang et~al.(2022)Zhang, Han, Zhou, Flores, and
  Weissman]{zhang2022leveraging}
Wei Zhang, Yanjun Han, Zhengyuan Zhou, Aaron Flores, and Tsachy Weissman.
\newblock Leveraging the hints: Adaptive bidding in repeated first-price
  auctions.
\newblock \emph{NeurIPS}, 2022.

\end{thebibliography}

\clearpage

\appendix

\section{Appendix}

\subsection{Measure and Information-Theoretic Notation and Known Facts}
\label{s:appe-measure}

We recall that given two probability measures $\Pb$ and $\Q$ on a measurable space $(\Omega, \cF)$, $\Q$ is said to be absolutely continuous with respect to $\Pb$ (and we write $\Q \ll \Pb$) if, for all $E\in \cF$ such that $\Pb[E]=0$, it holds that $\Q[E]=0$.
Whenever $\Q \ll \Pb$, the Radon-Nikodym theorem states that there exists a density (called Radon-Nikodym derivative of $\Q$ with respect to $\Pb$) $\frac{\dif \Q}{\dif\Pb}\colon \Omega \to [0,\iop)$ such that, for all $E\in \cF$, it holds that
\[
    \Q[E]
=
    \int_E \frac{\dif \Q}{\dif\Pb}(\omega) \diff \Pb(\omega) \;.
\]
See \cite[Theorem~13.4]{bass2013real} for a reference.

If $(\Omega, \cF, \Pb)$ is a probability space, $(\cX, \cF_{\cX})$ is a measurable space, and $X$ is a random variable from $(\Omega, \cF)$ to $(\cX, \cF_{\cX})$, the push-forward measure of $\Pb$ by $X$ is denoted by $\Pb_X$. In this case, we recall that the push-forward measure is defined as the unique probability measure on $\cF_{\cX}$ defined via $\Pb_X[F] = \Pb[X \in F]$, for all $F \in \cF_{\cX}$.

If $(\Omega, \cF)$ and $(\Omega', \cF')$ are two measurable spaces, their product $\sigma$-algebra is denoted by $\cF \otimes \cF'$. We recall that $\cF \otimes \cF'$ is the $\sigma$-algebra of subsets of $\Omega \times \Omega'$ generated by the collection of subsets of the form $F\times F'$, where $F \in \cF$ and $F' \in \cF'$.
If $(\Omega, \cF, \Pb)$ and $(\Omega', \cF',\Pb')$ are two probability spaces, the product measure of $\Pb$ and $\Pb'$ is denoted by $\Pb\otimes\Pb'$. We recall that $\Pb\otimes\Pb'$ is the unique probability measure defined on $\cF \otimes \cF'$ which satisfies $(\Pb\otimes\Pb')[F\times F'] = \Pb[F]\Pb'[F']$, for all $E\in \cF$ and $E'\in \cF'$.

If $(\Omega, \cF, \Pb)$ is a probability space, $(\cX, \cF_{\cX})$ and $(\cY, \cF_{\cY})$ are measurable spaces, $X$ is a random variable from  $(\Omega, \cF)$ to $(\cX, \cF_{\cX})$, and $Y$ is a random variable from $(\Omega, \cF)$ to $(\cY, \cF_{\cY})$, the conditional probability of $X$ given $Y$ is denoted by $\Pb_{X \mid Y}$, where, for each $E \in\cF_{\cX}$, we recall that $\Pb_{X \mid Y}[E] = \Pb[X \in E \mid Y]$ and that $\Pb_{X \mid Y}[E]$ is a $\sigma(Y)$-measurable random variable.

The following result has been proven in \cite{cesa23COLT}.
\begin{theorem}
\label{t:inverse-transformation-method-2}
Suppose that $(\cY,d)$ is a separable and complete metric space with $\cF_\cY$ as the Borel $\sigma$-algebra of $(\cY,d)$. Let $(\Omega, \cF)$ be a measurable space, $X$ a random variable from $(\Omega, \cF)$ to $\brb{ \{0,1\}, 2^{\{0,1\}} }$,
$Y$ a random variable from $(\Omega,\cF)$ to $(\cY, \cF_\cY)$,
and $U$ random variable from $(\Omega, \cF)$ to $\brb{ [0,1], \cB }$, 
where $\cB$ is the Borel $\sigma$-algebra of $[0,1]$.
Suppose that $\Pb,\Q$ are probability measures defined on $\cF$, and $p \in (0,1)$, $q\in [0,1]$ are such that:
\begin{itemize}[topsep=4pt,itemsep=0pt,leftmargin=9pt]
    \item $\Pb[X=1] = p$ and $\Q[X=1] = q$.
    \item $U$ is a uniform random variable on $[0,1]$ both under $\Pb$ and $\Q$, i.e., we have that $\Pb_U = \leb = \Q_U$.
    \item $U$ is independent of $X$ both under $\Pb$ and $\Q$, i.e., $\Pb_{(X,U)} = \Pb_X \otimes \Pb_U$ and $\Q_{(X,U)} = \Q_X \otimes \Q_U$.
\end{itemize}
Then, the following are equivalent:
\begin{enumerate}
    \item\label{item:inverse-transform-one} There exists a measurable function $\fhi$ from $\brb{ \{0,1\}\times[0,1], 2^{ \{0,1\} } \otimes \cB }$ to $(\cY, \cF_\cY)$ such that
    \[
        \Pb_Y = \Pb_{\fhi(X,U)}
        \qquad \text{ and } \qquad
        \Q_Y = \Q_{\fhi(X,U)} \;.
    \]
    \item \label{item:inverse-transform-two} $\Q_Y \ll \Pb_Y$, and $\Pb_Y$-almost-surely it holds that
    \[
        \min \frac{\dif \Q_X}{\dif \Pb_X}
    \le
        \frac{\dif \Q_Y}{\dif \Pb_Y}
    \le
        \max \frac{\dif \Q_X}{\dif \Pb_X} \;.
    \] 
\end{enumerate}
\end{theorem}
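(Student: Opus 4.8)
I would prove the two implications separately; the common thread is that each hinges on writing $\Pb_Y$ and $\Q_Y$ as the mixtures $p\mu_1+(1-p)\mu_0$ and $q\mu_1+(1-q)\mu_0$ of a single pair of ``fibre'' probability measures $\mu_0,\mu_1$ on $\cY$. The starting observation is that, since $\Pb_X$ has mass $p$ at $1$ and $1-p$ at $0$ with $p\in(0,1)$, one always has $\Q_X\ll\Pb_X$, and $\tfrac{\dif\Q_X}{\dif\Pb_X}$ equals $q/p$ at $1$ and $(1-q)/(1-p)$ at $0$; hence $\min\tfrac{\dif\Q_X}{\dif\Pb_X}=\min\{q/p,(1-q)/(1-p)\}$ and $\max\tfrac{\dif\Q_X}{\dif\Pb_X}=\max\{q/p,(1-q)/(1-p)\}$.

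For the forward direction I would start from a $\varphi$ realizing $Y$ and, for each Borel $A\subseteq\cY$, set $A_i=\{u\in[0,1]:\varphi(i,u)\in A\}$ for $i\in\{0,1\}$. Since $U$ is uniform and independent of $X$ under both measures, $\Pb_Y(A)=p\,\leb(A_1)+(1-p)\,\leb(A_0)$ and $\Q_Y(A)=q\,\leb(A_1)+(1-q)\,\leb(A_0)$; in particular $\Pb_Y(A)=0$ forces $\leb(A_0)=\leb(A_1)=0$, so $\Q_Y(A)=0$ and $\Q_Y\ll\Pb_Y$. Writing $\Q_Y(A)=\tfrac{q}{p}\bigl(p\,\leb(A_1)\bigr)+\tfrac{1-q}{1-p}\bigl((1-p)\,\leb(A_0)\bigr)$ and bounding the two coefficients between $\min\tfrac{\dif\Q_X}{\dif\Pb_X}$ and $\max\tfrac{\dif\Q_X}{\dif\Pb_X}$ yields $\min\tfrac{\dif\Q_X}{\dif\Pb_X}\,\Pb_Y(A)\le\Q_Y(A)\le\max\tfrac{\dif\Q_X}{\dif\Pb_X}\,\Pb_Y(A)$ for every $A$; testing this against $A=\{\dif\Q_Y/\dif\Pb_Y>\max\tfrac{\dif\Q_X}{\dif\Pb_X}\}$ and $A=\{\dif\Q_Y/\dif\Pb_Y<\min\tfrac{\dif\Q_X}{\dif\Pb_X}\}$ shows both are $\Pb_Y$-null, which is the desired $\Pb_Y$-a.s.\ sandwich.

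For the converse, set $h=\dif\Q_Y/\dif\Pb_Y$ and search for $\mu_0,\mu_1$ with $p\mu_1+(1-p)\mu_0=\Pb_Y$ and $q\mu_1+(1-q)\mu_0=\Q_Y$ — these will be the conditional laws of $\varphi(i,U)$. If $p=q$ the sandwich forces $h\equiv1$, hence $\Q_Y=\Pb_Y$, and $\mu_0=\mu_1=\Pb_Y$ works; if $p\ne q$, the linear system has the unique candidate $\mu_0=\tfrac{ph-q}{p-q}\Pb_Y$, $\mu_1=\tfrac{(1-p)h-(1-q)}{q-p}\Pb_Y$, which one would check has total mass $1$, satisfies both equations, and has nonnegative density \emph{exactly} under the sandwich hypothesis (e.g.\ for $p<q$: $\mu_0\ge0\Leftrightarrow h\le q/p=\max\tfrac{\dif\Q_X}{\dif\Pb_X}$ and $\mu_1\ge0\Leftrightarrow h\ge(1-q)/(1-p)=\min\tfrac{\dif\Q_X}{\dif\Pb_X}$, the case $p>q$ being symmetric). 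Then, since $(\cY,d)$ is Polish, $(\cY,\cF_\cY)$ is a standard Borel space, so each $\mu_i$ is a push-forward of Lebesgue measure on $[0,1]$: there is measurable $g_i\colon[0,1]\to\cY$ with $\leb_{g_i}=\mu_i$. Setting $\varphi(i,u)=g_i(u)$ (jointly measurable because the first coordinate ranges over the finite set $\{0,1\}$) and using again that $U$ is uniform and independent of $X$ under each of $\Pb,\Q$ gives $\Pb_{\varphi(X,U)}=p\mu_1+(1-p)\mu_0=\Pb_Y$ and $\Q_{\varphi(X,U)}=q\mu_1+(1-q)\mu_0=\Q_Y$, as required. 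I expect the converse to be the main obstacle, and within it the two delicate points are (i) identifying the splitting measures $\mu_0,\mu_1$ and recognizing that the two-point Radon--Nikodym bounds are \emph{sharp}, not merely sufficient, for their nonnegativity, and (ii) invoking the standard-but-essential realization of a Borel probability measure on a standard Borel space as a push-forward of Lebesgue measure — the step that genuinely uses the Polish assumption.
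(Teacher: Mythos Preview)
The paper does not prove this theorem; it is stated in the appendix as a black-box tool with the remark ``The following result has been proven in \cite{cesa23COLT}.'' There is therefore no in-paper proof to compare against.

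Your proposal is correct and is essentially the natural argument for this kind of two-point coupling result. Both directions are sound: in $(1)\Rightarrow(2)$ you correctly exploit $p\in(0,1)$ to get absolute continuity and then pass from the setwise sandwich $\min\tfrac{\dif\Q_X}{\dif\Pb_X}\,\Pb_Y(A)\le\Q_Y(A)\le\max\tfrac{\dif\Q_X}{\dif\Pb_X}\,\Pb_Y(A)$ to the pointwise one via the level-set test. In $(2)\Rightarrow(1)$ the linear-algebra identification of the fibre measures $\mu_0,\mu_1$ is right (including the observation that the sandwich is exactly the nonnegativity condition for the densities, and that the degenerate case $p=q$ collapses to $h\equiv 1$), and your appeal to the standard-Borel realization theorem (every Borel probability on a Polish space is a push-forward of Lebesgue measure on $[0,1]$) is precisely where the separable-complete hypothesis is consumed. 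The edge cases $q\in\{0,1\}$ also go through your formulas without modification.
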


\subsection[Missing Details of the proof of Theorem 3]{Missing Details of the Proof of \Cref{thm:lower-iid-smooth-semi-transparent}}\label{app:lower-iid-smooth-semi-transparent}

    In this section, we will complete the proof of \Cref{thm:lower-iid-smooth-semi-transparent}, showing that the repeated first-price auctions with semi-transparent feedback (in the following, referred to as ``our problem'') are no easier than a $K$-armed bandit instance based on the probability measures $\Pb^1,\dots,\Pb^K$ introduced in \Cref{thm:lower-iid-smooth-semi-transparent}.
    The structure of the proof is inspired by \cite[Section 3]{cesa23COLT}.

    \paragraph{The related bandit problem.}
    The action space is $[K]$, where we recall that $K$ was some arbitrarily fixed natural number.
    Let $Y,Y_1,Y_2,\dots$ be a sequence of $\{0,1\}^K$-valued random variables such that, for any $k\in\{0,1,\dots,K\}$, the sequence is $\Pb^k$-i.i.d.\ and, for all $j\in[K]$
    \[
        \Pb^k\bsb{Y(j)=1}
    =
        \begin{cases}
            1/2 & \text{ if } j \neq k\\
            1/2 + 1/(6K) & \text{ if } j = k
        \end{cases}
    \]
    This sequence of latent random variables will determine the rewards of the actions.
    The reward function is
    \[
        \rho \colon [K] \times \{0,1\} \to [0,1]\;, \qquad (i,y) \mapsto \frac{23 + 2 y(i)}{192} 
    \]
    and the feedback received after playing an action $I_t$ at time $t$ is $Y_t (I_t)$ (which is equivalent to receiving the bandit feedback $\rho(I_t, Y_t)$ gathered at time $t$).
    
    For any $k\in\{0,\dots,K\}$ and any $i\in[K]$ the expected reward is
    \[
        \bbE^k\bsb{ \rho(i,Y) }
    =
        \begin{cases}
            \displaystyle \frac{1}{8} & \text{ if } i \neq k \vspace{1ex}\\ 
            \displaystyle \frac{1}{8} + \frac{\e}{144} & \text{ if } i = k
        \end{cases}
    \]

    \paragraph{Mapping our problem into this bandit problem.}
    Assume that $K\ge 3$.
    We partition the interval $[0,1]$ in the following $K$ disjoint regions: $ J_1 = [0, w_1 + \e) $, $J_k = [w_k - \e, w_k + \e)$ (for all $k \in \{2, \dots, K-1\}$), and $ J_K = [w_{K} - \e, 1] $.
    We define a function $\iota \colon [0,1] \to [K]$ that maps each point in the interval $[0,1]$ to one of the $K$ arms by mapping each $b \in [0,1]$ to the unique $i \in [K]$ such that $b \in J_{i}$ (for a pictorial representation of the map $\iota$, see \Cref{fig:mapping}).
    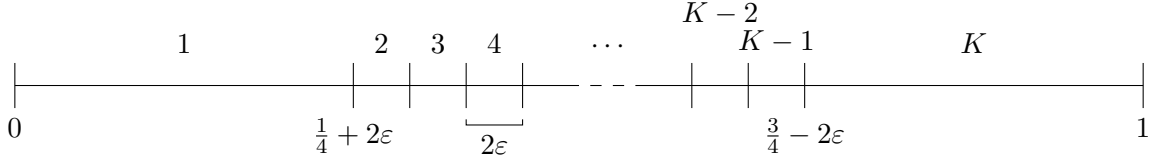
\begin{figure}
        \centering
        \def\lunghezza{20}
        \begin{tikzpicture}[scale = 15]
                    \draw (0,0) -- ({10/\lunghezza},0);
                    \draw ({11/\lunghezza},0) -- (1,0);
                    \draw ({10.2/\lunghezza},0) -- ({10.4/\lunghezza},0);
                    \draw ({10.6/\lunghezza},0) -- ({10.8/\lunghezza},0);
                    \foreach \i in {0,6,7,8,9,12,13,14,20} {
                        \draw ({0+\i/\lunghezza},0.02) -- ({0+\i/\lunghezza},-0.02);
                    }
                    \draw (0 ,-0.02) node[below] {$0$}
                        (6/\lunghezza ,-0.02) node[below] {$\frac14+2\e$}
                        (14/\lunghezza ,-0.02) node[below] {$\frac34-2\e$}
                        (1 ,-0.02) node[below] {$1$}
                        (3/\lunghezza,0.02) node[above] {$1$}
                        ({7/\lunghezza - 1/(2*\lunghezza)},0.02) node[above] {$2$}
                        ({8/\lunghezza - 1/(2*\lunghezza)},0.02) node[above] {$3$}
                        ({9/\lunghezza - 1/(2*\lunghezza)},0.02) node[above] {$4$}
                        ({10.55/\lunghezza},0.02) node[above] {$\cdots$}
                        ({13/\lunghezza - 1/(2*\lunghezza)},0.045) node[above] {$K-2$}
                        ({14/\lunghezza - 1/(2*\lunghezza)},0.02) node[above] {$K-1$}
                        ({17/\lunghezza},0.02) node[above] {$K$}
                        ;
                    \draw (8/\lunghezza,-0.03) -- (8/\lunghezza,-0.035) -- (9/\lunghezza,-0.035) -- (9/\lunghezza,-0.03);
                    \draw (8.5/\lunghezza, -0.035) node[below] {$2 \e$};
                \end{tikzpicture}
        \caption{\footnotesize  A representation of the map $\iota$ through which the bids in the first-price auction problem are related to the $K$-arms of the bandit problem. The interval $[0,1]$ is partitioned in $K$ disjoint intervals, the first and the last one of length $\nicefrac{1}{4}+2\e$, and all the ones in between of length $2\e$. $\iota$ maps each bid to the index of the interval to which it belongs. }
        \label{fig:mapping}
    \end{figure}

    \paragraph{Simulating the feedback.}
    To lighten the notation, besides the already defined random functions $\psi_1,\psi_2,\dots$, define also:
    \[
        \psi \colon [0,1] \to \brb{ [0,1] \times \{\star\} } \cup \brb{ \{\star\} \times [0,1] } \;,
    \qquad b \mapsto
        \begin{cases}
            (V,\star) & \text{ if } b \ge M \\
            (\star, M) & \text{ if } b < M
        \end{cases}
    \]
    The next lemma shows that we can use the feedback observed in the bandit problem together with some independent noise to simulate exactly the feedback of our problem.
    \begin{lemma}
        \label{lemma:pappappero}
        For each $b\in[0,1]$, there exists $\fhi_b \colon \{0,1\} \times [0,1] \to \domain$ such that, if $U'$ is a $[0,1]$-valued random variable such that, for each $k\in\{0,\dots,K\}$, the distribution $U'$ with respect to $\Pb^k$ is a uniform on $[0,1]$ and $U'$ is $\Pb^k$-independent of $Y$, then $\Pb^k_{\fhi_b(Y(\iota(b)),U')} = \Pb^k_{ \psi(b) }$.
    \end{lemma}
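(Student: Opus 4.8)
The plan is to obtain $\fhi_b$ by a direct application of \Cref{t:inverse-transformation-method-2}. Fix $b\in[0,1]$ and write $i := \iota(b)$ (recall $\iota$ takes values in $[K]$, so $i\neq 0$). Apply the theorem with the metric space $\cY = \domain$ (which is compact, hence separable and complete), with $X := Y(i)$, with the external uniform randomness $U := U'$, with target random variable $\psi(b)$, and with the two measures $\Pb := \Pb^0$ and $\Q := \Pb^i$: by construction $\Pb^0[Y(i)=1] = \nicefrac12 =: p \in (0,1)$ and $\Pb^i[Y(i)=1] = \nicefrac12 + \nicefrac1{(6K)} =: q \in [0,1]$, and by hypothesis $U'$ is uniform on $[0,1]$ and $\Pb^k$-independent of $Y$ (hence of $X$) for every $k$. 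Since $\tfrac{\diff \Q_X}{\diff \Pb_X}$ takes precisely the two values $\tfrac qp = 1 + \nicefrac1{(3K)}$ and $\tfrac{1-q}{1-p} = 1 - \nicefrac1{(3K)}$, the theorem produces a measurable $\fhi_b\colon\{0,1\}\times[0,1]\to\cY$ with $\Pb^0_{\fhi_b(X,U')} = \Pb^0_{\psi(b)}$ and $\Pb^i_{\fhi_b(X,U')} = \Pb^i_{\psi(b)}$, as soon as we check that $\Pb^i_{\psi(b)} \ll \Pb^0_{\psi(b)}$ and that, $\Pb^0_{\psi(b)}$-almost surely,
\[
    1 - \frac1{3K} \;\le\; \frac{\diff \Pb^i_{\psi(b)}}{\diff \Pb^0_{\psi(b)}} \;\le\; 1 + \frac1{3K}\,.
\]
Granting this, the statement for every other $k\in\{0,\dots,K\}\setminus\{i\}$ follows for free: under such a $\Pb^k$ the pair $(X,U')$ has the same law as under $\Pb^0$ (a product of a $\mathrm{Bernoulli}(\nicefrac12)$ and a uniform), so $\Pb^k_{\fhi_b(X,U')} = \Pb^0_{\fhi_b(X,U')} = \Pb^0_{\psi(b)}$, and this equals $\Pb^k_{\psi(b)}$ by \eqref{eq:same-stuff} --- trivially when $k=0$, and because when $k\neq i$ the perturbation strip $[w_k-\e,w_k+\e)$ of index $k$ is contained in the cell $J_k$, hence disjoint from $\{b\}$.

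To verify the displayed Radon--Nikodym condition I would write $\Pb^k_{\psi(b)}$ as the sum of its ``win'' part, living on $[0,1]\times\{\star\}$, and its ``lose'' part, living on $\{\star\}\times[0,1]$. On the lose part the law is that of $M$ restricted to $(b,1]$, and crucially this is the \emph{same for every $k$}: for each $m$ one has $\int_{[0,1]} g_{w_k,\e}(v,m)\dif v = 0$, because on each of the strips $m\in[w_k-\e,w_k)$ and $m\in[w_k,w_k+\e)$ the positive and negative pieces of $g_{w_k,\e}$ cover $v$-intervals of equal length. Hence the Radon--Nikodym derivative equals $1$ on the lose part. On the win part, $\Pb^k_{\psi(b)}$ has Lebesgue density $h^k_b(v) := \int_{[0,b]} f_{w_k,\e}(v,m)\dif m$, so $h^i_b - h^0_b = \int_{[0,b]} g_{w_i,\e}(v,\cdot)\dif m$; if $b\notin[w_i-\e,w_i+\e)$ this difference vanishes identically (the strip is either never reached, when $b < w_i-\e$, or its two oppositely signed halves exactly cancel, when $b\ge w_i+\e$) and we are done. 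If instead $b\in[w_i-\e,w_i+\e)$ --- which forces $b\in[\nicefrac14,\nicefrac34)$, since $[w_i-\e,w_i+\e)\subseteq[\nicefrac14,\nicefrac34)$ for every $i\in[K]$ --- then the base density telescopes: for $v\in[\nicefrac78,1]$,
\[
    h^0_b(v) \;=\; \int_0^{\nicefrac14}\frac{4}{v-\nicefrac14}\dif m + \int_{\nicefrac14}^{b}\frac{\dif m}{(v-m)^2} \;=\; \frac1{v-b} \;\ge\; \frac43\,,
\]
whereas the partial cancellation of the two signed pieces of $\int_{[0,b]} g_{w_i,\e}(v,\cdot)\dif m$ leaves $\babs{h^i_b(v)-h^0_b(v)} \le \tfrac{16}9\e$. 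Combining, $\babs{\tfrac{h^i_b(v)}{h^0_b(v)} - 1} \le \tfrac{16}9\e\cdot\tfrac34 = \tfrac43\e = \tfrac1{3K}$ (recall $\e = \nicefrac1{(4K)}$), which is exactly the required two-sided bound; together with the facts that the lose parts coincide and that $h^i_b$ vanishes wherever $h^0_b$ does (both are zero off $[\nicefrac78,1]$, where $g_{w_i,\e}$ vanishes too), this also gives $\Pb^i_{\psi(b)} \ll \Pb^0_{\psi(b)}$.

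The routine part is the bookkeeping: tracking where $b$ falls with respect to the strips $[w_k-\e,w_k+\e)$ and the partition cells $J_1,\dots,J_K$ (including the oversized end cells $J_1$ and $J_K$), and the degenerate cases --- such as $b = 0$, or $b$ outside $[\nicefrac14,\nicefrac34)$ --- in which the win part either carries no mass or is unperturbed. The one genuinely delicate point, and the reason the base density $f$ was designed with the $\nicefrac1{(v-m)^2}$ profile, is the telescoping identity $h^0_b(v) = \nicefrac1{(v-b)}$: it keeps the baseline ``win'' density bounded below by $\nicefrac43$ over the whole relevant range $v\in[\nicefrac78,1]$, which is precisely what lets the $\Theta(\e)$ perturbation fit inside the $\nicefrac1{(3K)}$ slack coming from the likelihood ratio of $Y(i)$ --- a balance that the choice $\e = \nicefrac1{(4K)}$ makes tight.
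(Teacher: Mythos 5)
Your proposal is correct and follows essentially the same route as the paper: invoke \Cref{t:inverse-transformation-method-2} with $X=Y(\iota(b))$, $\Pb=\Pb^0$, $\Q=\Pb^{\iota(b)}$, check that the Radon--Nikodym derivative $\nicefrac{\diff\Pb^{\iota(b)}_{\psi(b)}}{\diff\Pb^0_{\psi(b)}}$ lies in $[1-\nicefrac{4\e}{3},\,1+\nicefrac{4\e}{3}]=[1-\nicefrac{1}{3K},\,1+\nicefrac{1}{3K}]$, and dispatch the remaining $k$ via \eqref{eq:same-stuff}. The only difference is that you carry out explicitly (and correctly, including the telescoping identity $h^0_b(v)=\nicefrac{1}{(v-b)}$) the density computation that the paper compresses into ``a direct verification shows.''
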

    \begin{proof}[Proof of \Cref{lemma:pappappero}]
        A direct verification shows that, for all $k\in[K]$ and all $b\in[0,1]$, 
        $\Pb^k_{\psi(b)} \ll \Pb^0_{\psi(b)}$ (i.e., $\Pb^k_{\psi(b)}$ is absolutely continuous with respect to $\Pb^0_{\psi(b)}$)
        and the Radon-Nikodym derivative of the push-forward measure $\Pb^k_{\psi(b)}$ with respect to $\Pb^0_{\psi(b)}$ satisfies, 
        for $\Pb^0_{\psi(b)}$-a.e.~$(v,m) \in \brb{ [0,1] \times \{\star\} } \cup \brb{ \{\star\} \times [0,1] }$,
        \[
            \frac{\mathrm{d}\Pb^k_{\psi(b)}}{\mathrm{d}\Pb^0_{\psi(b)}} (v,m)
        =
            1 + \e \cdot \frac{16}{9} \lrb{ v-b } \mathrm{sgn}\lrb{ v - \frac{15}{16} } \Lambda_{w_k,\e} (b) \I \lcb{ v \in \lsb{ \frac{7}{8}, 1 } }
        \]
        which implies, for $\Pb^0_{\psi(b)}$-a.e.~$(v,m) \in \brb{ [0,1] \times \{\star\} } \cup \brb{ \{\star\} \times [0,1] }$, that
        \[
            \min\lrb{ \frac{\mathrm{d}\Pb^k_{Y(\iota(b))}}{\mathrm{d}\Pb^0_{Y(\iota(b))}} } 
        =  
            1-\frac{4}{3}\e 
        \le 
            \frac{\mathrm{d}\Pb^k_{\psi(b)}}{\mathrm{d}\Pb^0_{\psi(b)}} (v,m) 
        \le 
            1+\frac{4}{3} \e 
        = 
            \max\lrb{ \frac{\mathrm{d}\Pb^k_{Y(\iota(b))}}{\mathrm{d}\Pb^0_{Y(\iota(b))}} }
        \] 
        Thus, for each $b \in [0,1]$, by \cref{t:inverse-transformation-method-2}, there exists (and we fix)
        \[
            \fhi_{b} \colon \{0,1\} \times [0,1] \to \domain
        \]
        such that
        \[
            \Pb^{\iota(b)}_{\fhi_{b}(Y(\iota(b)),U')}
        =
            \Pb^{\iota(b)}_{\psi(b)}
        \qquad \text{ and } \qquad
            \Pb^{0}_{\fhi_{b}(Y(\iota(b)),U')}
        =
            \Pb^{0}_{\psi(b)} \;.
        \]
        Since for all $b\in [0,1]$ and all $k \in [K]\m\bcb{ \iota(b) }$, we have  
        $
            \Pb^{k}_{\psi(b)}
        =
            \Pb^{0}_{\psi(b)}
        $ (by \Cref{eq:same-stuff}) and
        $
            \Pb^{k}_{\fhi_{b}(Y(\iota(b)),U')}
        =
            \Pb^{0}_{\fhi_{b}(Y(\iota(b)),U')}
        $, then, for all $b \in[0,1]$ and all $k\in\{0,\dots,K\}$, it holds that
        \[
            \Pb^{k}_{\fhi_{b}(Y(\iota(b)),U')}
        =
            \Pb^{k}_{\psi(b)} \;. \qedhere
        \]
    \end{proof}

    We now show that any algorithm $\cA$ for our problem can be transformed into an algorithm $\widetilde\cA$ to solve the bandit problem that suffers no-larger regret. 
    To do so, we begin by formally explaining how algorithms for our problem work.

    \paragraph{Functioning of an algorithm $\cA$ for our problem}
    A randomized algorithm $\cA$ for our problem is a sequence of functions that take as input a sequence of random seeds $U_1, U_2,\dots$ and some feedback $Z_1, Z_2, \dots$ and generates bids $B_t$ as described below. 
    At time $t=1$, $\cA$ selects a bid $B_1$ as a deterministic function of $U_1$ and observes feedback $Z_1 = \psi_1(B_1)$.
    Inductively, for any $t\ge 2$, $\cA$ selects a bid $B_t$ as a deterministic function of $U_1,\dots,U_t,Z_1,\dots,Z_{t-1}$ (where $Z_s = \psi_s(B_s)$, for all $s\in[t-1]$).
    For all $k\in\{0,\dots,K\}$, the sequence of seeds is a $\Pb^k$-i.i.d.\ sequence of uniform random variables on $[0,1]$ that is $\Pb^k$-independent of $(V,M), (V_1,M_1),(V_2,M_2), \dots$.

    \paragraph{Building $\cAt$ from $\cA$}
    We show now how to map $\cA$ to an algorithm $\cAt$ (that shares the same seeds for the randomization) for the bandit problem that suffers a worst-case regret that is no larger than that of $\cA$.

    To do so, consider a sequence $U',U'_1,\dots$ of random variables that, for all $k\in\{0,\dots,K\}$ is a $\Pb^k$-i.i.d.\ sequence of uniforms on $[0,1]$ that $\cAt$ can access as a further source of randomness.
    We will assume that, for all $k\in\{0,\dots,K\}$, the four sequences $Y,Y_1,\dots$, $(V,M),(V_1,M_1),\dots$, $U,U_1,\dots$,and $U',U'_1,\dots$ are independent of each other.

    The algorithm $\cAt$ acts as follows.
    At time $1$, $\cAt$ plays the arm $\tI_1 = \iota(B'_t)$, where $B'_1 = B_1$ is the bid played by $\cA$ at round $t=1$ (chosen as a deterministic function of the random seed $U_1$).
    Then $\cAt$ observes the bandit feedback $Y_1(\tI_1)$ and feeds back to $\cA$ the surrogate feedback $Z'_1 = \fhi_{B'_1} \brb{ Y_1(\tI_1), U'_1 }$.
    Then, inductively, for any time $t\ge 2$, assuming that $\cAt$ played arms $\tI_1,\dots,\tI_{t-1}$ and fed back to $\cA$ the surrogate feedback $Z'_1, \dots, Z'_{t-1}$, then
    \begin{enumerate}
        \item $\cAt$ plays the arm $\tI_t = \iota(B'_t)$, where $B'_t$ is the bid played by $\cA$ at round $t$ (chosen as a deterministic function of the random seeds $U_1,\dots,U_{t}$ and past surrogate feedback $Z'_1,\dots,Z'_{t-1}$).
        \item  $\cAt$ observes the bandit feedback $Y_t(\tI_t)$ and feeds back to $\cA$ the surrogate feedback $Z'_t = \fhi_{B'_t} \brb{ Y_t(\tI_t), U'_t }$.
    \end{enumerate}
    This way, we defined by induction the randomized algorithm $\cAt$.

    By induction on $t$, one can show that, if $B_1,B_2,\dots$ are the bids played by $\cA$ on the basis of the feedback $Z_1 = \psi_1(B_1), Z_2 = \psi_2(B_2), \dots$, then, for all $k\in\{0,\dots,K\}$, we have
    \[
        \Pb^k_{(B_t,Y_t)}
    =
        \Pb^k_{(B_t',Y_t)}
    \]
    which leads to
    \begin{align*}
    &
        R_T^k(\cA)
    =
        T\cdot \bbE^k\bsb{ \util(w_k) } - \sum_{t=1}^T \bbE^k\bsb{ \util_t(B_t) }
    \ge
        T\cdot \bbE^k\bsb{ \rho(k,Y) } - \sum_{t=1}^T \bbE^k\Bsb{ \rho\brb{ \iota(B_t),Y_t } }
    \\ 
    &
    \phantom{R_T^k(\cA) }
    =
        T\cdot \bbE^k\bsb{ \rho(k,Y) } - \sum_{t=1}^T \bbE^k\Bsb{ \rho\brb{ \iota(B'_t),Y_t } }
    =
        T\cdot \bbE^k\bsb{ \rho(k,Y) } - \sum_{t=1}^T \bbE^k\Bsb{ \rho\brb{ \tI_t, Y_t } }
    =
        \widetilde{R}^k_T(\cAt)
    \end{align*}
    (the last equality is a definition).
    Now we are left to show only that for any algorithm $\hat{\cA}$ for the bandit problem which plays actions $I_1,I_2,\dots$, there exists $k\in[K]$ such that
    \[
        \widetilde R_T^k( \hat{\cA} )
    =
        T\cdot \bbE^k\bsb{ \rho(k,Y) } - \sum_{t=1}^T \bbE^k\Bsb{ \rho\brb{ I_t, Y_t } }
    =
        \Omega( T^{\nicefrac 23} )
    \]
    (the first equality is a definition).
    By Yao's Minimax principle, it is sufficient to show this for deterministic algorithms $\hat{\cA}$ for the bandit problem.
    \begin{lemma}
\label{lemma:finale}
    Fix any deterministic algorithm $\hat{\cA}$ for the bandit problem on $K$ actions, then there exists $k\in[K]$ such that $\widetilde R_T^k  (\hat{\cA}) \ge \frac{3}{10^4} T^{\nicefrac 23} $.
    \end{lemma}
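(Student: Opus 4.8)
The plan is to prove \Cref{lemma:finale} by the textbook information‑theoretic lower bound for $K$‑armed stochastic bandits, carried out with explicit constants. \textbf{Step 1: isolate a rarely pulled arm.} Introduce the ``null'' environment $\Pb^0$ (under which every arm $j$ has $\Pb^0[Y(j)=1]=\nicefrac12$, so every arm has expected reward $\nicefrac18$), let $T_j=\sum_{t=1}^T\ind{I_t=j}$ be the number of pulls of arm $j$, and note that $\sum_{j\in[K]}T_j=T$ deterministically. Averaging over $j$ yields some $k\in[K]$ with $\bbE^0[T_k]\le T/K$; fix this $k$ for the remainder. \textbf{Step 2: rewrite the regret as a pull count.} Since $\hat{\cA}$ is deterministic, $I_t$ is a fixed function of $Y_1(I_1),\dots,Y_{t-1}(I_{t-1})$, hence $\Pb^m$‑independent of $Y_t$ for every $m$; combining this with the expected‑reward table (every suboptimal arm $j\neq k$ has per‑pull gap exactly $\nicefrac{\e}{144}$ under $\Pb^k$) gives
\[
    \widetilde R_T^k(\hat{\cA})=\frac{\e}{144}\brb{T-\bbE^k[T_k]}.
\]

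\textbf{Step 3: change of measure.} The laws of the history $(I_1,Y_1(I_1),\dots,I_T,Y_T(I_T))$ under $\Pb^0$ and $\Pb^k$ differ only through the feedback collected on arm $k$, so by the standard divergence decomposition for bandits,
\[
    \kl\brb{\Pb^0_{\mathrm{hist}}\,\|\,\Pb^k_{\mathrm{hist}}}
  =
    \bbE^0[T_k]\cdot \kl\brb{\mathrm{Ber}(\tfrac12)\,\|\,\mathrm{Ber}(\tfrac12+\tfrac1{6K})}.
\]
Using $\kl\brb{\mathrm{Ber}(p)\,\|\,\mathrm{Ber}(q)}\le \frac{(p-q)^2}{q(1-q)}$ with $p=\nicefrac12$, $q=\nicefrac12+\delta$, $\delta=\nicefrac1{6K}$, one gets a single‑pull bound $\kl\brb{\mathrm{Ber}(\nicefrac12)\,\|\,\mathrm{Ber}(\nicefrac12+\delta)}\le \frac{\delta^2}{\nicefrac14-\delta^2}\le \frac{c}{K^2}$, valid for all $K\ge 3$ with an explicit small constant $c$; hence $\kl(\Pb^0_{\mathrm{hist}}\,\|\,\Pb^k_{\mathrm{hist}})\le cT/K^3$. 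By Pinsker's inequality, and since $T_k$ is a $[0,T]$‑valued function of the history,
\[
    \bbE^k[T_k]-\bbE^0[T_k]
  \le
    T\cdot \tv\brb{\Pb^0_{\mathrm{hist}},\Pb^k_{\mathrm{hist}}}
  \le
    T\sqrt{\tfrac{c\,T}{2K^3}},
\]
so $\bbE^k[T_k]\le \frac TK + T\sqrt{\frac{cT}{2K^3}}$.

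\textbf{Step 4: assemble and tune $K$.} Plugging Step 3 into Step 2 and using $\e=\nicefrac1{(4K)}$,
\[
    \widetilde R_T^k(\hat{\cA})
  \ge
    \frac{1}{576\,K}\lrb{T-\frac TK-T\sqrt{\frac{cT}{2K^3}}}.
\]
With $K$ set to the previously fixed value of order $T^{\nicefrac13}$ (its constant chosen large enough that $K^3\ge \tfrac92\,cT$, and $T\ge 8$ ensuring $K\ge 3$) one has $T-T/K\ge\tfrac23 T$ and $T\sqrt{cT/(2K^3)}\le\tfrac13 T$, leaving $\widetilde R_T^k(\hat{\cA})\ge \frac{T}{1728\,K}$; substituting the explicit value $K=\Theta(T^{\nicefrac13})$ gives a bound of the form $\mathrm{const}\cdot T^{\nicefrac23}$, and the constant works out to $\tfrac3{10^4}$.

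\textbf{Main obstacle.} The conceptual skeleton is the canonical $\sqrt{KT}$ bandit lower bound, so nothing deep is required; the real work is the constant bookkeeping — obtaining a single‑pull KL estimate valid uniformly over $K\ge 3$, propagating it cleanly through the divergence decomposition and Pinsker, and choosing the multiplicative constant hidden in $K=\Theta(T^{\nicefrac13})$ (together with the regime $T\ge 8$) so that the target constant $\tfrac3{10^4}$ survives at the end. One should also note that this lemma only needs to treat deterministic $\hat{\cA}$, the randomized case having already been reduced to the deterministic one via Yao's principle immediately before the statement, so no further averaging is needed here.
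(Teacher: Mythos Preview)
Your proposal is correct and follows essentially the same information-theoretic route as the paper: compare each $\Pb^k$ to the null environment $\Pb^0$, control $\bbE^k[N_T(k)]-\bbE^0[N_T(k)]$ via Pinsker and a KL computation, and then tune $K=\Theta(T^{1/3})$. The only differences are cosmetic: you select a single arm $k$ by pigeonhole on $\bbE^0[T_k]\le T/K$ before the change of measure, while the paper bounds each $k$ and averages at the end; and you apply Pinsker once to the full history and invoke the bandit divergence decomposition, while the paper sums per-time-step total variations and applies Pinsker termwise (its Claim~\ref{claim:finale}). Two small quibbles: with $K=\lceil T^{1/3}\rceil$ you need $T\ge 9$ (not $T\ge 8$) to get $K\ge 3$, and your final ``the constant works out to $\tfrac{3}{10^4}$'' is asserted rather than computed---but the paper's own constant arithmetic is equally loose, so this is not a substantive gap.
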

    \begin{proof}
    For any deterministic algorithm $\hat{\cA}$ for the bandit problem on $K$ actions, let $I_1, I_2, \dots$ be the actions played by $\hat{\cA}$ on the basis of the sequential feedback received $Z_1, Z_2, \dots$ and define $N_t(i)$ as the random variables counting the number of times the learning algorithm $\hat{\cA}$ plays action $i$, up to time $t$, for any $i \in [K]$ and any time $t \in [T]$:
    \[
        N_t(i) = \sum_{s=1}^t \I\{I_s = i\} .
    \]
    We relate the expected values of $N_T(k)$ under $\Pb^0$ and $\Pb^k$ as a function of the expected number of times the algorithm plays the corresponding actions $k$. This formalizes the intuition that to discriminate between the different $\Pb^k$ the learner needs to play exploring actions. 
    \begin{claim}
    \label{claim:finale}
        The following inequality holds true for any $k\in[K]$:
        \begin{equation}
            \bbE^k \bsb{ N_T(k) } - \bbE^0 \bsb{ N_T(k) } \le  \frac23 \cdot \e \cdot T \cdot \sqrt{ 2 \bbE^0 [N_T(k)] }.    
        \end{equation}
    \end{claim}
    \begin{proof}[Proof of \Cref{claim:finale}]
        For any $t\in [T]$, the action $I_t = I_t(Z_1, \dots, Z_{t-1})$ selected by $\hat{\cA}$ at round $t$ is a deterministic function of $Z_1, \dots, Z_{t-1}$, for each $k\in[K]$. In formula, we then have the following
    \begin{align}
    \nonumber
     \bbE^k \bsb{ N_T(k) } - \bbE^0 \bsb{ N_T(k) } &=
        \sum_{t = 2}^T \Brb{ \Pb^k\bsb{ I_t (Z_1, \dots, Z_{t-1} ) = k } - \Pb^0\bsb{ I_t (Z_1, \dots, Z_{t-1} ) = k } }\\
    \label{eq:TV}&\le
        \sum_{t = 2}^T \bno{ \Pb^k_{(Z_1, \dots, Z_{t-1} )} - \Pb_{(Z_1, \dots, Z_{t-1} )}^0 }_{\mathrm{TV}},
    \end{align}
    where $\lno{\cdot}_{\mathrm{TV}}$ denotes the total variation norm. We move now our attention towards bounding the total variation norm. To that end we use Pinsker's inequality and apply the chain rule for the KL divergence $\kl$. For each $k \in [K]$ and $t \in [T]$ we have the following:
    
    \begin{align}
    \nonumber
        \bno{ \Pb_{(Z_1,\dots,Z_t)}^0 &- \Pb^k_{(Z_1,\dots,Z_t)}}_{\mathrm{TV}}
    \le 
        \sqrt{\frac12 \kl \brb{ \Pb_{(Z_1,\dots,Z_t)}^0,\, \Pb^k_{(Z_1,\dots,Z_t)} }}
    \\
    \label{eq:secondKL}
    &\le
        \sqrt{ \frac12 \lrb{ \kl\brb{ \Pb_{Z_1}^0, \, \Pb^k_{Z_1} } + \sum_{s=2}^t \bbE\Bsb{ \kl \brb{ \Pb_{Z_s \mid Z_1,\dots,Z_{s-1} }^0 , \, \Pb^k_{Z_s \mid Z_1,\dots,Z_{s-1}} } } } }
    \end{align}
    We bound the two KL terms separately. $\hat{\cA}$ is a deterministic algorithm, thus $I_1$ is a fixed element of $[K]$, which implies that, for all $k\in [K]$,
    \begin{align}
    \nonumber
    &
        \kl\brb{ \Pb_{Z_1}^0, \, \Pb^k_{Z_1} }
    \\
    \nonumber 
    & \quad
    =
        \lrb{ 
        \ln\lrb{ \frac{\Pb^0[Y_1(k) = 0]}{ \Pb^k[Y_1(k) = 0] } } \Pb^0[Y_1(k) = 0]
        +
        \ln\lrb{ \frac{\Pb^0[Y_1(k) = 1]}{ \Pb^k[Y_1(k) = 1] } } \Pb^0[Y_1(k) = 1]
        } \I\bcb{ I_1 = k }
    \\
    \label{eq:KLterm1}
    & \quad
    =
        \frac12\lrb{
        \ln \frac{\nicefrac{1}{2}}{\nicefrac{1}{2} - \cprob \cdot \e}
        +
        \ln\frac{\nicefrac{1}{2}}{\nicefrac{1}{2} + \cprob \cdot \e}
        } \cdot \I\{I_1 = k\}
    \end{align}
    Similarly, since $\hat{\cA}$ is a deterministic algorithm, for all $s \ge 2$, the action $I_s = I_s(Z_1,\dots,Z_{s-1})$ selected by $\hat{\cA}$ at time $t$ is a function of $Z_1,\dots, Z_{s-1}$ only, which implies, for all $k \in [K]$, 
    \begin{align}
    \nonumber
    &
        \kl \brb{ \Pb_{Z_s \mid Z_1,\dots,Z_{s-1} }^0 , \, \Pb^k_{Z_s \mid Z_1,\dots,Z_{s-1}} }
    \\
    \nonumber
    & 
    \qquad =
        \bbE^0 \left[
            \ln \lrb{ \frac{ \Pb^0[Z_s = 0 \mid Z_1, \dots, Z_{s-1} ] }{ \Pb^k[Z_s = 0 \mid Z_1, \dots, Z_{s-1} ] } } \Pb^0[Z_s = 0 \mid Z_1, \dots, Z_{s-1} ]
            \right.
    \\
    \nonumber
    & \qquad 
            \qquad
            +
            \left.
            \ln \lrb{ \frac{ \Pb^0[Z_s = 1 \mid Z_1, \dots, Z_{s-1} ] }{ \Pb^k[Z_s = 1 \mid Z_1, \dots, Z_{s-1} ] } } \Pb^0[Z_s = 1 \mid Z_1, \dots, Z_{s-1} ]
        \right]
    \\
    \nonumber
    &
    \qquad
    =
        \bbE^0 \bigg[ \lrb{ 
        \ln\lrb{ \frac{\Pb^0[Y_s(k) = 0]}{ \Pb^k[Y_s(k) = 0] } } \Pb^0[Y_s(k) = 0]
        +
        \ln\lrb{ \frac{\Pb^0[Y_s(k) = 1]}{ \Pb^k[Y_s(k) = 1] } } \Pb^0[Y_s(k) = 1]
        } 
    \\
    \nonumber& \qquad \qquad 
        \times \I\bcb{ I_s (Z_1, \dots, Z_{s-1}) = k } \bigg]
    \\
    \label{eq:KLterm2}
    &
    \qquad
    =
        \frac12\lrb{
        \ln \frac{\nicefrac{1}{2}}{\nicefrac{1}{2} - \cprob \cdot \e}
        +
        \ln\frac{\nicefrac{1}{2}}{\nicefrac{1}{2} + \cprob \cdot \e}
        }
        \Pb^0 \bsb{ I_s (Z_1, \dots, Z_{s-1}) = k } 
    \end{align}
    Now, since $\e = \frac{1}{4K} \le \frac{1}{4} \le \frac{2}{3}$, the following useful inequality holds:
    \begin{equation}
    \label{eq:useful}
        \frac12\lrb{
        \ln \frac{\nicefrac{1}{2}}{\nicefrac{1}{2} - \cprob \cdot \e}
        +
        \ln\frac{\nicefrac{1}{2}}{\nicefrac{1}{2} + \cprob \cdot \e}
        }
    \le
        4 \cdot \lrb{\cprob}^2 \cdot \e^2.
    \end{equation}
    We can combine the inequalities in \Cref{eq:KLterm1} and \Cref{eq:KLterm2} into \Cref{eq:secondKL} and plug in the bound in to obtain:
    \[
        \lno{ \Pb_{(Z_1,\dots,Z_t)}^0 - \Pb^k_{(Z_1,\dots,Z_t)}}_{\mathrm{TV}} \le \cprob \cdot \e \cdot \sqrt{ 2 \bbE [N_t(k)] }
    \]
    Once we have this upper bound on the total variations of the random variables $(Z_1,\dots,Z_t)$ under $\Pb^0$ and $\Pb^k$ we can get back to the initial \Cref{eq:TV} and obtain the desired bound via Jensen: 
    
    \[
        \bbE^k \bsb{ N_T(k) } - \bbE^0 \bsb{ N_T(k) }
    \le
        \sum_{t=2}^T \cprob \cdot \e \cdot \sqrt{ 2 \bbE^0 [N_{t-1}(k)] }
    \le
        \cprob \cdot \e \cdot T \cdot \sqrt{ 2 \bbE^0 [N_T(k)] }. \qedhere
    \]    
    \end{proof}
    Averaging the quantitative bounds in \Cref{claim:finale} for all $k$ in $[K]$, and applying Jensen's inequality, we get the following:
        \begin{align}
        \nonumber
        \frac{1}{K} \sum_{k \in [K]} \bbE^k[N_T(k)]
    &\le
        \frac{1}{K} \sum_{k \in [K]} \bbE^0[N_T(k)] + \cprob \cdot \e \cdot T \cdot \sqrt{ \frac{2}{K} \sum_{k \in [K]}\bbE^0 \lsb{  N_T(k) } }
    \\
    &=
        \lrb{ \frac{1}{K} +  \cprob \cdot \e \cdot \sqrt{ \frac{2T}{K} } } \cdot T \;.
    \end{align}
    Now, we have all the ingredients to lower bound the average regret suffered by $\hat{\cA}$. Note that every time a suboptimal arm is played the learner suffers (expected) instantaneous regret equal $\frac{1}{144} \cdot \e$.
    Then, recalling that $\e=1/(4K)$ and setting $K = \bce{ T ^{1/3} }$ we have, for all $T\ge 8$,
        \begin{align*}
            \frac{1}{K} \sum_{k\in [K]} \tilde R_T^k (\hat{\cA})
        &=
            \frac{1}{K} \sum_{k\in [K]} \Brb{ \cspike \cdot \e \cdot \bbE^k\bsb{ T - N_T(k) }  }
        =
            \cspike \cdot \e \lrb{ T - \frac{1}{K} \sum_{k\in [K]} \bbE^k\bsb{ N_T(k) }} 
        \\
        &
        \ge
            \cspike \cdot \e \cdot  \lrb{ 1 - \frac{1}{K} - \cprob \cdot \e \cdot \sqrt{ \frac{2T}{K} } } \cdot T
        =
            \cspike \cdot \frac{1}{4K} \cdot  \lrb{ 1 - \frac{1}{K} - \frac{1}{6K} \cdot \sqrt{ \frac{2T}{K} } } \cdot T
        \\
        &
        \ge
            \frac{1}{8 \cdot 144}\Brb{\frac{3-\sqrt{2}}{6}}T^{\nicefrac 23}  \ge \frac{3}{10^4} T^{\nicefrac 23}\;.
    \end{align*}
    Therefore, for all $T\ge 8$, there exists $k\in[K]$ such that $\tilde R_T^k (\hat{\cA}) \ge (\fracc{3}{10^4}) \cdot T^{\nicefrac 23}$, concluding the proof.
    \end{proof}

    \subsection[Missing Details of the proof of Theorem 6]{Missing Details of the Proof of \Cref{thm:lower-iid-smooth-full}} \label{app:lower-iid-smooth-full}

        \suboptimality*
            \begin{proof}
                For any $\e \in (0,\tfrac 12)$, the distributions $\Pb^{\pm \e}$ are such that, the set of all the bids that induce non-negative utility $\mathbb{E}^{\pm \e}[\util_t(b)]$ is contained into two disjoint intervals $I_+ = [0,\tfrac 18]$ and $I_- = [\tfrac 14,1]$\footnote{The choice of $I_+$ and $I_-$ is not tight.}. 
                
                We consider separately the two cases $\Pb^{+\e}$ and $\Pb^{-\e}$. We start from the former. By simply looking at the definition (\ref{eq:Epm}), it is clear that $\mathbb E ^{+ \e} [\util_t(b)]$ is monotonically increasing in $\e$ for any $b \in I_+$, on the contrary, it is monotonically decreasing for $b \in I_-$. We have the following:
                \begin{align*}
                    \max_{b \in I_-}\mathbb E ^{+ \e} [\util_t( b)] \le \max_{b \in I_-}\mathbb E ^{0} [\util_t( b)]= \tfrac 1{128}. 
                \end{align*}
                On the other hand, 
                \[
                    \max_{x \in [0,1]}\mathbb E ^{+ \e} [\util_t(x)] \ge \mathbb E ^{+ \e} [\util_t(\tfrac 1{16})] = \tfrac 1{128}(1+\e)> \max_{b \in I_-}\mathbb E ^{+ \e} [\util_t( b)] + \tfrac \e{128}.
                \]

                We consider now the other case, corresponding to $\Pb^{-\e}$. By the definition in \Cref{eq:Epm}, $\mathbb E ^{- \e} [\util_t(b)]$ is monotonically increasing in its first argument for any $b \in I_-$, on the contrary, it is monotonically decreasing for $b \in I_+$. Similarly to the other case we have two steps. On the one hand, it holds that
                \begin{align*}
                    \max_{b \in I_+}\mathbb E ^{- \e} [\util_t( b)] \le \max_{b \in I_+}\mathbb E ^{0} [\util_t( b)] = \tfrac 1{128},
                \end{align*}
                while on the other hand it holds that
                \[
                    \max_{x \in [0,1]}\mathbb E ^{- \e} [\util_t(x)] \ge \mathbb E ^{- \e} [\util_t(\tfrac 7{16})] = \tfrac 1{128} + \e\tfrac{41}{128} > \max_{b \in I_-}\mathbb E ^{+ \e} [\util_t( b)] + \tfrac \e4. \qedhere
                \]
            \end{proof}

             We need a preliminary result for the proof of \Cref{cl:last_step}. Recall, we use the same random variable $(V,M)$ to denote the highest competing bid/valuation pair drawn from the different probability distribution. When we change the underlying measure, we are changing its law. Consider now the push forward measures on $[0,1]^2$ (with the Borel $\sigma$-algebra) induced by these three measures: $\mathbb P_{(V,M)}^0$, $\mathbb P_{(V,M)}^{+\e}$ and $\mathbb P_{(V,M)}^{-\e}$. With some simple calculations (similarly to what is done in, e.g., Appendix B of \cite{Slivkins19}) it is possible to bound the KL divergence:
            \begin{claim}\label{eq:KLaux}
            For any $\e\in (0,\tfrac 12)$ the following inequality holds true:
                \[
                    \kl\left(\mathbb P_{(V,M)}^{+\e},\mathbb P_{(V,M)}^{0}\right) = \kl\left(\mathbb P_{(V,M)}^{-\e}, \mathbb P_{(V,M)}^{0}\right) \le 2 \eps^2
                \]    
            \end{claim}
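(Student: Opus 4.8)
The plan is to reduce the two KL divergences to a single one‑dimensional (two‑point) expression, verify they coincide, and then bound that expression with an elementary inequality. First I would observe that both $\Pb^{\pm\e}_{(V,M)}$ and $\Pb^0_{(V,M)}$ are supported on $Q_+ \cup Q_-$, so $\Pb^{\pm\e}_{(V,M)} \ll \Pb^0_{(V,M)}$, and the Radon--Nikodym derivative is constant on each square: on $Q_+$ one has $f^{\pm\e}/f^0 = 1 \pm \e$, and on $Q_-$ one has $f^{\pm\e}/f^0 = 1 \mp \e$. Since $8\,|Q_+| = 8\,|Q_-| = \tfrac12$, we get $\Pb^{\pm\e}_{(V,M)}[Q_+] = \tfrac{1\pm\e}{2}$ and $\Pb^{\pm\e}_{(V,M)}[Q_-] = \tfrac{1\mp\e}{2}$. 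Integrating $f^{\pm\e}\ln(f^{\pm\e}/f^0)$ over $Q_+$ and $Q_-$ separately then gives, in both the $+\e$ and the $-\e$ case,
\[
\kl\brb{ \Pb^{\pm\e}_{(V,M)},\, \Pb^0_{(V,M)} } = \frac{1+\e}{2}\ln(1+\e) + \frac{1-\e}{2}\ln(1-\e),
\]
which in particular establishes the asserted equality of the two divergences (the expression is manifestly insensitive to the sign of $\e$). Equivalently, one can phrase this via the chain rule for the KL divergence: the indicator $\I_{Q_+}(V,M)$ is distributed as $\mathrm{Ber}\brb{\tfrac{1\pm\e}{2}}$ under $\Pb^{\pm\e}$ and as $\mathrm{Ber}\brb{\tfrac12}$ under $\Pb^0$, while the conditional law of $(V,M)$ given the region is uniform on the relevant square under all three measures, so the divergence equals $\kl\brb{\mathrm{Ber}\brb{\tfrac{1\pm\e}{2}},\,\mathrm{Ber}\brb{\tfrac12}}$.

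It then remains only to bound this scalar quantity by $2\e^2$. Applying $\ln(1+x) \le x$ (valid for all $x > -1$) at $x = \e$ and at $x = -\e$ yields
\[
\frac{1+\e}{2}\ln(1+\e) + \frac{1-\e}{2}\ln(1-\e) \le \frac{(1+\e)\e}{2} - \frac{(1-\e)\e}{2} = \e^2 \le 2\e^2,
\]
which is exactly the claimed inequality. (As a sanity check, the standard bound $\kl\brb{\mathrm{Ber}(p),\mathrm{Ber}(q)} \le (p-q)^2/(q(1-q))$ with $p = \tfrac{1\pm\e}{2}$, $q = \tfrac12$ gives the same $\e^2$.)

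There is essentially no real obstacle: the computation of the density ratios and of the region probabilities is routine, and the final bound is a one-line application of $\ln(1+x)\le x$. The only point deserving an explicit sentence is the reduction to the one-dimensional expression, which I would justify either by the explicit two-region integration above or, equivalently, by the chain rule for the KL divergence as just described; this reduction is also what makes the equality of the two divergences transparent.
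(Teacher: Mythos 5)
Your proposal is correct and follows essentially the same route as the paper: both compute the divergence directly as the integral of $f^{\pm\e}\ln(f^{\pm\e}/f^0)$ over $Q_+\cup Q_-$, arriving at the same scalar expression $\tfrac{1+\e}{2}\ln(1+\e)+\tfrac{1-\e}{2}\ln(1-\e)$. The only difference is that you make explicit the final elementary step via $\ln(1+x)\le x$ (which the paper merely asserts), and in doing so you actually obtain the sharper bound $\e^2\le 2\e^2$.
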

            \begin{proof}
            We simply apply the definition of $\kl$ divergence for continuous random variables. We only do the calculations for $\mathbb P_{(V,M)}^{+\e}$, the other term is analogous:
            \begin{align*}
                \kl\left(\mathbb P_{(V,M)}^{+\e},\mathbb P_{(V,M)}^{0}\right) &=  
                \int_{Q_+ \cup Q_-} f^{+\e}(v,m) \ln \frac{f^{+\e}(v,m)}{f^0(v,m)} dm \, dv \\
                &=   \frac 12 (1+\e) \ln (1+\e) + \frac 12 (1-\e) \ln (1-\e) 
                \le 2\e^2,
            \end{align*}
            where the last inequality holds for any $\e \in (0,\frac 12)$.
            \end{proof}

            \lastStep*
            \begin{proof}  We have the following:
            \begin{align}
            \nonumber
                \Ei{N_i} - \Eo{N_i} &= \sum_{t=2}^T \Pb^i\lsb{B_t \in I_i} - \Pb^0 \lsb{B_t \in I_i}\\
                &\le\sum_{t=2}^T ||\mathbb P^i_{(V_1,M_1), \dots,(V_{t-1},M_{t-1}) } - \mathbb P^0_{(V_1,M_1), \dots,(V_{t-1},M_{t-1}) }||_{\tv} \tag{\text{Total variation}}\\
                &\le\sum_{t=2}^T \sqrt{\frac 12 \kl\left(\mathbb P^i_{(V_1,M_1), \dots,(V_{t-1},M_{t-1})},\mathbb P^0_{(V_1,M_1), \dots,(V_{t-1},M_{t-1})}\right)} \tag{\text{Pinsker's inequality}}\\
                &\le\sum_{t=2}^T \sqrt{\frac t2 \kl\left(\mathbb P^i_{(V,M)},\mathbb P^0_{(V,M)}\right)}\tag{\text{$(V_1,M_1), \dots,(V_{t-1},M_{t-1}), \dots$ are i.i.d.}}\\
                \label{eq:KL}
                &\le\frac{1}{4\sqrt{T}}\sum_{t=2}^T \sqrt{t} \le \frac 14 T,
            \end{align}
            where in the last inequality we applied \Cref{eq:KLaux} for our choice of $\e = 1/(4\sqrt{T})$. Note, $\Pb^j_{(V_1,M_1), \dots,(M_{t},V_{t})}$ is the push-forward measure on $([0,1]^2)^{t}$ induced by $t$ i.i.d.\ draws of $(V,M)$ from distribution $\mathbb{P}^j$, $j \in \{0,1,2\}$.  
            Averaging the result in \Cref{eq:KL}, we get the desired inequality:
            \[
                \frac 12 \sum_{i=1,2}\Ei{N_i} \le \frac 12 \sum_{i=1,2}\Eo{N_i} + \frac T4 = \frac 34 T. \qedhere
            \]
            \end{proof}

\end{document}